\newcommand\vldbdoi{10.14778/3476249.3476306}
\newcommand\vldbpages{2599-2612}
\newcommand\vldbvolume{14}
\newcommand\vldbissue{11}
\newcommand\vldbyear{2021}
\newcommand\vldbauthors{\authors}
\newcommand\vldbtitle{\shorttitle} 
\newcommand\vldbavailabilityurl{https://github.com/northeastern-datalab/anyk-code}
\newcommand\vldbpagestyle{plain}
\def\eox{\unskip\kern 10pt{\unitlength1pt\linethickness{.4pt}$\diamondsuit${}}} %
\newcommand{\hide}[1]{}
\newtheorem{theorem}{Theorem} %
\newaliascnt{corollary}{theorem}
\newaliascnt{example}{theorem}
\newtheorem{example}[example]{Example}
\newaliascnt{definition}{theorem}
\newtheorem{definition}[definition]{Definition}
\newaliascnt{proposition}{theorem}
\newaliascnt{lemma}{theorem}
\newtheorem{lemma}[lemma]{Lemma}
\newaliascnt{conjecture}{theorem}
\newtheorem{questionW}{Question}
\newtheorem{resultW}{Result}
\newcommand{\resultbox}[1]{
\begin{tcolorbox}[
	enhanced jigsaw,		%
	colback=red!5,
	colframe=red!75!black,	
	arc=0mm,
	left skip=-1mm,
	right skip=-1mm,	
	left=0mm,
	topsep at break=1mm,			%
	right=0mm,
	top=0mm,
	bottom=0mm,		%
	breakable,		%
	parbox = false		%
]
\emph{#1}
\end{tcolorbox}
}
\DeclareRobustCommand*\uell{\mathpalette\@uell\relax}
\newcommand*\@uell[2]{
  \setbox0=\hbox{$#1\ell$}
  \setbox1=\hbox{\rotatebox{10}{$#1\ell$}}
  \dimen0=\wd0 \advance\dimen0 by -\wd1 \divide\dimen0 by 2
  \mathord{\lower 0.1ex \hbox{\kern\dimen0\unhbox1\kern\dimen0}}
}
\newcommand{\introparagraph}[1]{\textbf{#1.}} %
\renewcommand{\epsilon}{\varepsilon} %
\newcommand{\datarule}{{\,:\!\!-\,}} %
\renewcommand{\vec}[1]{\boldsymbol{\mathbf{#1}}}
\newcommand{\R}{{\mathbb{R}}} %
\renewcommand{\O}{{\mathcal{O}}} %
\newcommand{\smallO}{{\mathit{o}}} %
\definecolor{orcidlogocol}{HTML}{A6CE39}
\tikzset{
  orcidlogo/.pic={
    \fill[orcidlogocol] svg{M256,128c0,70.7-57.3,128-128,128C57.3,256,0,198.7,0,128C0,57.3,57.3,0,128,0C198.7,0,256,57.3,256,128z};
    \fill[white] svg{M86.3,186.2H70.9V79.1h15.4v48.4V186.2z}
                 svg{M108.9,79.1h41.6c39.6,0,57,28.3,57,53.6c0,27.5-21.5,53.6-56.8,53.6h-41.8V79.1z M124.3,172.4h24.5c34.9,0,42.9-26.5,42.9-39.7c0-21.5-13.7-39.7-43.7-39.7h-23.7V172.4z}
                 svg{M88.7,56.8c0,5.5-4.5,10.1-10.1,10.1c-5.6,0-10.1-4.6-10.1-10.1c0-5.6,4.5-10.1,10.1-10.1C84.2,46.7,88.7,51.3,88.7,56.8z};
  }
}
\DeclareRobustCommand\orcidicon[1]{\href{https://orcid.org/#1}{\mbox{\scalerel*{
\begin{tikzpicture}[yscale=-1,transform shape]
\pic{orcidlogo};
\end{tikzpicture}
}{|}}}}
\def\polylog{\operatorname{polylog}}
\newcommand{\TLFG}{TLFG\xspace}
\newcommand{\TLFGs}{TLFGs\xspace}
\newcommand{\BATCH}{\textsc{Batch}\xspace}
\newcommand{\QUADEQUI}{\textsc{QuadEqui}\xspace}
\newcommand{\TT}{\ensuremath{\mathrm{TT}}}
\newcommand{\MEM}{\ensuremath{\mathrm{MEM}}}
\newcommand{\IneqMultiFun}{partIneqMulti}
\newcommand{\IneqBinaryFun}{partIneqBinary}
\newcommand{\BandFun}{bandToIneq}
\newcommand{\NextCond}{nextPredicate}
\newcommand{\BINPART}{\textsc{Binary Partitioning}\xspace}
\newcommand{\MULTIPART}{\textsc{Multiway Partitioning}\xspace}
\newcommand{\SHAREDRAN}{\textsc{Shared Ranges}\xspace}
\newcommand{\dpgraph}{enumeration graph\xspace}
\newcommand{\SYSX}{\textsc{System X}\xspace}
\newcommand{\PSQL}{\textsc{PSQL}\xspace}
\newcommand{\OURS}{\textsc{Factorized}\xspace}
\newcommand{\LAZY}{\textsc{Lazy}\xspace}
\newcommand{\Reddit}{\textsc{RedditTitles}\xspace}
\newcommand{\Birds}{\textsc{OceaniaBirds}\xspace}
\newcommand{\cc}{olive}
\newcommand{\algocomment}[1]{\textcolor{\cc}{{//#1}}}
\newcommand{\Prep}{\mathcal{P}}
\newcommand{\Space}{\mathcal{S}}
\newcommand{\partitions}{\rho}
\newcommand{\out}{|\mathrm{out}|}
\newcommand{\depth}{d}
\newcommand{\DBMSs}{DBMSs\xspace}
\newcommand*\circled[1]{\tikz[baseline=(char.base)]{
            \node[shape=circle,draw,inner sep=0.8pt] (char) {#1};}}	%
\begin{document}

\title{Beyond Equi-joins: Ranking, Enumeration and Factorization}

\author{Nikolaos Tziavelis}
\affiliation{%
    \orcidicon{0000-0001-8342-2177}
    \institution{Northeastern University}
    \city{Boston}
    \state{Massachusetts}
    \country{USA}
}
\email{tziavelis.n@northeastern.edu}

\author{Wolfgang Gatterbauer}
\affiliation{%
    \orcidicon{0000-0002-9614-0504}
    \institution{Northeastern University}
    \city{Boston}
    \state{Massachusetts}
    \country{USA}
}
\email{w.gatterbauer@northeastern.edu}

\author{Mirek Riedewald}
\affiliation{%
    \orcidicon{0000-0002-6102-7472}
    \institution{Northeastern University}
    \city{Boston}
    \state{Massachusetts}
    \country{USA}
}
\email{m.riedewald@northeastern.edu}

\begin{abstract}
We study theta-joins in general and join predicates with
conjunctions and disjunctions of inequalities in particular,
focusing on \emph{ranked enumeration}
where the answers are returned incrementally in an order dictated by a
given ranking function. 
Our approach achieves strong time and space complexity
properties: with $n$ denoting the number of tuples in the database, we guarantee
for acyclic full join queries with 
inequality conditions that for
\emph{every} value of $k$, the $k$ top-ranked answers are returned in
$\O(n \polylog n + k \log k)$ time.
This is within a polylogarithmic factor of 
$\O(n + k \log k)$,
i.e., the best known complexity for equi-joins,
and even of $\O(n + k)$, i.e., 
the time it takes to look at the input and
return $k$ answers in any order.
Our guarantees extend to join queries with selections and many types of
projections
(namely those called ``free-connex'' queries and those that use bag semantics).
Remarkably, they hold even when the number of join results is 
$n^\ell$ for a join of $\ell$ relations.
The key ingredient is a novel $\O(n \polylog n)$-size 
\emph{factorized representation of the query output}, 
which is constructed on-the-fly for a given query and database.
In addition to providing the first non-trivial theoretical guarantees
beyond equi-joins, we show in an experimental study that our ranked-enumeration
approach is also memory-efficient and fast in practice, beating the running time
of state-of-the-art database systems by orders of magnitude.
\end{abstract}

\maketitle

\pagestyle{\vldbpagestyle}
\begingroup\small\noindent\raggedright\textbf{PVLDB Reference Format:}\\
\vldbauthors. \vldbtitle. PVLDB, \vldbvolume(\vldbissue): \vldbpages, \vldbyear.\\
\href{https://doi.org/\vldbdoi}{doi:\vldbdoi}
\endgroup
\begingroup
\renewcommand\thefootnote{}\footnote{\noindent
This work is licensed under the Creative Commons BY-NC-ND 4.0 International License. Visit \url{https://creativecommons.org/licenses/by-nc-nd/4.0/} to view a copy of this license. For any use beyond those covered by this license, obtain permission by emailing \href{mailto:info@vldb.org}{info@vldb.org}. Copyright is held by the owner/author(s). Publication rights licensed to the VLDB Endowment. \\
\raggedright Proceedings of the VLDB Endowment, Vol. \vldbvolume, No. \vldbissue\ %
ISSN 2150-8097. \\
\href{https://doi.org/\vldbdoi}{doi:\vldbdoi} \\
}\addtocounter{footnote}{-1}\endgroup
\ifdefempty{\vldbavailabilityurl}{}{
\vspace{.3cm}
\begingroup\small\noindent\raggedright\textbf{PVLDB Artifact Availability:}\\
The source code, data, and/or other artifacts have been made available at \url{\vldbavailabilityurl}.
\endgroup
}

\section{Introduction}
\label{sec:intro}

Join processing is one of the most fundamental topics in database research,
with recent work aiming at strong asymptotic guarantees
\cite{ngo2018worst,Khamis:2016:JVG:3014437.2967101,navarro19wco,Ngo:2014:SSB:2590989.2590991}.
Work on constant-delay (unranked) enumeration 
\cite{bagan07constenum,DBLP:journals/sigmod/Segoufin15,idris19dynamic,DBLP:conf/pods/CarmeliK19}
strives to pre-process the database for a given query on-the-fly so that
the first answer is returned in linear time (in database size), followed by
all other answers with constant delay (i.e., independent of database size) 
between them. 
Together, linear pre-processing and constant delay 
guarantee that all answers are returned in time linear in
input and output size, which is optimal.

\introparagraph{Ranked enumeration}
Ranked enumeration~\cite{tziavelis20tutorial} 
generalizes the heavily studied \emph{top-$k$} 
paradigm~\cite{fagin03,ilyas08survey} 
by continuously returning join answers in
ranking order. 
This enables the output consumer
to select the cut-off
$k$ on-the-fly while observing the answers.
For top-$k$, the value of $k$ must be chosen in advance,
before seeing any query answer.
Unfortunately, non-trivial complexity guarantees of previous
top-$k$ techniques, including the celebrated
Threshold Algorithm \cite{fagin03}, are limited to the
``middleware'' cost model, which only accounts for the number of distinct
data items accessed~\cite{tziavelis20tutorial}.
While some of those top-$k$ algorithms can be applied to joins with
general predicates, they do not provide
non-trivial guarantees in the standard RAM model of computation, 
and their
time complexity for a join of $\ell$ relations can be $\O(n^\ell)$.

The goal of this paper is to design
\emph{ranked-enumeration algorithms for general theta joins
with strong space and time guarantees in the standard RAM model of computation}.
Tight upper complexity bounds are essential for ensuring predictable performance,
no matter the given database instance (e.g., in terms of data skew)
or the query's total output size.
Notice that it already takes 
$\O(n + k)$ time to simply look at $n$ input
tuples as well as create and return $k$ output tuples. Since polylogarithmic
factors are generally considered small or even negligible for asymptotic
analysis \cite{khamis17panda,Cormen:2009dp}, we aim for time bounds
that are within such polylogarithmic factors of 
$\O(n+k)$.
At the same time, we want space complexity to be reasonable;
e.g., for small $k$ to be within a polylogarithmic factor of $\O(n)$,
which is the required space to hold the input.

While state-of-the-art commercial and open-source \DBMSs
do not yet support
ranked enumeration, it is worth taking a closer look at their implementation
of top-$k$ join queries.
(Here $k$ is specified in a SQL
clause like FETCH FIRST or LIMIT.)
While we tried a large variety of inputs, indexes on the input relations,
join queries, and values of $k$, 
the optimizer 
of PostgreSQL and two other widely used commercial \DBMSs
always
chose to execute the join before applying the ranking and top-$k$
condition on the join results.\footnote{
For non-trivial
ranking functions, or when the attributes used for joining differ from
those used for ranking, the DBMS cannot determine if a
subset of the join output so far produced 
already contains all
$k$ top-ranked answers.
This applies to general theta joins as well as equi-joins.}
This implies that their overall time complexity
to return even the top-$1$ result
cannot be better than the worst-case join output size, which
can be
$\O(n^\ell)$
for a join of $\ell$ relations.

\introparagraph{Beyond equi-joins}
Recent work on ranked
enumeration~\cite{tziavelis20vldb,tziavelis20tutorial,deep21,yang2018any,YangRLG18:anyKexploreDB,ding21progressive}
achieves much stronger worst-case guarantees, but only considers \emph{equi-joins}.
However, big-data analysis often also requires other join conditions
\cite{khayyat17ineq,li20band,enderle04interval,dewitt91band} such as
\emph{inequalities} (e.g., \texttt{S.age < T.age}),
\emph{non-equalities} (e.g., \texttt{S.id $\neq$ T.id}),
and \emph{band} predicates (e.g., \texttt{|S.time - T.time| < $\epsilon$}).
For these joins, two major challenges must be addressed.
First, the join itself must be computed efficiently in the presence of
complex conditions, possibly consisting of conjunctions and disjunctions of
such predicates. Second, to avoid having to produce the entire output,
\emph{ranking has to be pushed deep into the join itself}.

\begin{example}
A concrete application of 
ranked enumeration for inequality joins
concerns graph-based approaches for detecting ``lateral movement'' between infected computers in a network \cite{liu18latte}.
By modeling computers as nodes and connections as timestamped edges,
these approaches search for anomalous access patterns
that take the form of paths
(or more general subgraphs)
ranked by the probability of occurrence according to historical data.
The inequalities arise from a time constraint:
the timestamps of two consecutive edges need to be in ascending order.
Concretely, consider the relation \texttt{G(From,To,Time,Prob)}.
Valid 2-hop paths can be computed with a self-join 
(where $G_1, G_2$ are aliases of $G$)
where the join condition is an equality $\texttt{G}_1.\texttt{To} = \texttt{G}_2.\texttt{From}$
and an inequality $\texttt{G}_1.\texttt{Time} < \texttt{G}_2.\texttt{Time}$,
while the score of a path is $\texttt{G}_1.\texttt{Prob} \cdot \texttt{G}_2.\texttt{Prob}$.
Existing approaches are severely limited computationally in terms of the length of the pattern,
since the number of paths in a graph can be extremely large.
Thus, they usually resort to a search over very small paths (e.g., only $2$-hop). 
With the techniques developed in this paper, patterns of much larger size can be retrieved efficiently in ranked order
without considering all 
possible instantiations of the pattern.
\end{example}

\introparagraph{Main contributions}
We provide the first comprehensive study on ranked enumeration for joins with
conditions other than equality, notably 
general theta-joins and
conjunctions and disjunctions of inequalities and equalities.
While such joins are 
expensive to compute~\cite{khayyat17ineq,li20band},
we show that for many of them
the top-ranked answers can \emph{always} be found in time complexity
that only slightly exceeds the complexity of sorting the input.
This is remarkable, given that the input may be heavily skewed and the output size
of a 
join of $\ell$ relations is $\O(n^\ell)$.
We achieve this with a carefully designed factorized representation
of the join \emph{output} that can be constructed in relatively small
time and space. 
Then the ranking function determines the traversal
order on this representation.

Recall that ranked-enumeration algorithms must continuously output answer tuples
in order 
and the goal is to achieve non-trivial complexity guarantees
no matter at which value of $k$ the algorithm is stopped. Hence we express
algorithm complexity as a function of $k$: 
$\TT(k)$ and $\MEM(k)$ denote the algorithm's time and space complexity,
respectively, until the moment it returns the $k$-th answer in ranking order.
Our main contributions (see also \Cref{tab:summary}) are:

(1) We generalize an 
    equi-join-specific ranked-enumeration
    construction~\cite{tziavelis20vldb} 
    by representing the overall join structure as a tree of joining relations 
    and then introducing a join-condition-sensitive abstraction between each pair of adjacent relations in the tree.
    For the latter, we propose the ``\emph{Tuple-Level Factorization Graph}''
    (\TLFG, \Cref{sec:framework}), a novel factorized representation
    for any theta-join between two relations, and show how its size and
    depth affect the complexity of ranked enumeration. Interestingly,
    some \TLFGs 
	can be used to transform a given theta-join to an equi-join,
    a property we leverage for ranked enumeration for \emph{cyclic} join
    queries.

(2) For join conditions that are a DNF 
of inequalities (\Cref{sec:inequalities}),
we propose concrete \TLFGs 
with space and construction-time
complexity $\O(n \polylog n)$.
Using them for acyclic joins, our algorithm guarantees
$\TT(k) = \O(n \polylog n + k \log k)$, which is within
a polylogarithmic factor of the equi-join case, where
$\TT(k) = \O(n + k \log k)$ \cite{tziavelis20vldb}, and even the lower bound of
$\O(n+k)$.
    
(3) Our experiments (\Cref{sec:exp}) on synthetic and real datasets show
orders-of-magnitude improvements over highly optimized top-$k$ implementations
in 
state-of-the-art \DBMSs, as well as over an idealized competitor
that is not charged for any join-related cost.

Due to space constraints, formal proofs and several details of improvements to our core 
techniques (\cref{sec:improvements})
are in 
the full version of this paper \cite{tziavelis21full}.
Our project website contains more information including source code:
\url{https://northeastern-datalab.github.io/anyk/}.

\newcolumntype{M}[1]{>{\centering\arraybackslash}m{#1}}
\begin{figure}[!tb]
\footnotesize
\renewcommand{\tabcolsep}{0.5mm}
\begin{center}
\begin{tabular}{M{1.9cm}|M{2.5cm}|M{1.7cm}|M{1.7cm}}
	\toprule
	\textbf{Join Condition} 		& \textbf{Example} & \textbf{Time $\Prep(n)$} & \textbf{Space $\Space(n)$}\\ \hline
	($C$) Theta & booleanUDF(\texttt{S.A}, \texttt{T.C}) & $\O(n^2)$ & $\O(n^2)$\\
	\hline
	($C1$) Inequality  		& $\texttt{S.A} < \texttt{T.B}$ & & \\
	\cline{1-2}
	($C2$) Non-equality 	& $\texttt{S.A} \neq \texttt{T.B}$ & $\O(n \log n)$ & $\O(n \log\log n)$\\
	\cline{1-2}
	($C3$) Band 			& $|\texttt{S.A} - \texttt{T.B}| < \epsilon$ & & \\
	\hline
	($C4$) DNF of $(C1),(C2),(C3)$ 	
	    & $(\texttt{S.A} \!<\! \texttt{T.B} \wedge \texttt{S.A} \!<\! \texttt{T.C})$ $\vee (\texttt{S.A} \!\neq\! \texttt{T.D})$
		& $\O(n \polylog n)$ & $\O(n \polylog n)$\\
	\bottomrule
\end{tabular} 
\caption{Preprocessing time $\Prep(n)$ and space complexity $\Space(n)$
of our approach for various
join conditions. 
Our novel factorized representation allows
ranked enumeration to return the $k$ top-ranked results in time
(``Time-To'')
$\TT(k) = \O(\Prep(n) + k \log k)$, using $\MEM(k) = \O(\Space(n) + k)$ space.
}
\label{tab:summary}
\end{center}
\end{figure}

\section{Preliminaries}
\label{sec:preliminaries}

\subsection{Queries}
\label{sec:queries}

\noindent
Let $[m]$ denote the set of integers $\{1, \ldots, m\}$.
A \emph{theta-join} query in Datalog notation is a formula of the type
\begin{equation*}
Q(\vec Z) \datarule R_1(\vec{X}_1), \ldots, R_\ell(\vec{X}_\ell), \; \theta_1(\vec{Y}_1), \ldots, \theta_q(\vec{Y}_q)
\end{equation*}
where $R_i$ are {relational symbols},
$\vec{X}_i$ are lists of {variables} (or {attributes}),
$\vec{Z}, \vec{Y}_i$ are subsets of 
$\vec{X} = \bigcup \vec{X}_i$,
$i \in [\ell]$, $j \in [q]$, and
$\theta_j$ are Boolean formulas called \emph{join predicates}.
The terms $R_i(\vec{X}_i)$ are called the {atoms} of the query.
Equality predicates are encoded by repeat occurrences of the same variable
in different atoms; all other join predicates are encoded in the corresponding
$\theta_j$. If no predicates $\theta_j$ are present, then $Q$ is an \emph{equi-join}.
The size $|Q|$ of the query is equal to the number of symbols in the formula.

\introparagraph{Query semantics}
Join queries are evaluated over a database 
that associates with each $R_i$ a finite relation (or table) that
draws values from a domain that we assume to be $\R$ for
simplicity.\footnote{Our approach naturally extends to other domains such as
strings or vectors, as long as the corresponding join predicates are
well-defined and computable in $\O(1)$ for a pair of input tuples.}
Without loss of generality, we assume that relational symbols in different
atoms are distinct since self-joins can be handled with linear overhead
by copying a relation to a new one.
The maximum number of tuples in an input relation is denoted by $n$.
We write $R.A$ for an attribute $A$ of relation $R$
and $r.A$ 
for the value of $A$ in tuple $r \in R_i$.
The semantics of a theta-join query is to
($i$) create the Cartesian product of the
$\ell$ relations,
($ii$) select the tuples that
satisfy the
equi-join conditions and $\theta_j$ predicates, 
and ($iii$) project on the $\vec{Z}$ attributes. 
Consequently, each individual query answer
can be represented as a combination of joining input tuples,
one from each table $R_i$.

\introparagraph{Projections}
In this paper, we focus on \emph{full} queries, i.e., join queries without
projections 
($\vec{Z} = \vec{X}$).
While our approach can handle projections by applying them in the end,
the strong asymptotic $\TT(k)$ guarantees may not hold any more.
The reason is that a projection could map multiple distinct output tuples
to the same projected answer. In the strict relational model where relations
are sets, those ``duplicates'' would have to be eliminated, creating larger
gaps between consecutive answers returned to the user.
Fortunately, our strong guarantees still hold for \emph{arbitrary
projections} in the presence of bag semantics, which
is what \DBMSs use when the SQL query has a SELECT clause instead
of SELECT DISTINCT.
Even for set semantics and SELECT DISTINCT queries, it is straightforward
to extend our strong guarantees to non-full queries that are
\emph{free-connex}~\cite{bagan07constenum,baron16acyclic,Berkholz20tutorial,idris20dynamic_theta}.

\introparagraph{Join trees for equi-joins}
An equi-join query is (alpha-)\emph{acyclic} \cite{graham80gyo,yu79gyo,tarjan84acyclic}
if it admits a join tree.
A \emph{join tree} is a tree with the atoms (relations) as the nodes where for every attribute $A$ appearing in an atom, all nodes containing $A$ form a
connected subtree.
The GYO reduction \cite{yu79gyo} computes such a join tree for equi-joins.

\introparagraph{Atomic join predicates}
We define the following types of predicates between attributes $S.A$ and $T.B$:
an \emph{inequality} is 
$S.A < T.B$, $S.A > T.B$, $S.A \leq T.B$, or $S.A \geq T.B$,
a \emph{non-equality} is 
$S.A \neq T.B$ and 
a \emph{band} is 
$|S.A - T.B| < \epsilon$ for some $\epsilon > 0$.
Our approach also supports numerical expressions over input tuples,
e.g., $f(S.A_1, S.A_2, \ldots) < g(T.B_1, T.B_2, \ldots)$,
with $f$ and $g$ arbitrary $\O(1)$-time computable functions
that map to $\R$.
The join predicates $\theta_j$ are built with conjunctions and disjunctions of such atomic predicates.
We assume there are no predicates on individual relations 
since they can be removed in linear time by filtering the corresponding
input tables.

\subsection{Ranked Enumeration}
\label{sec:ranked_enum}

\emph{Ranked enumeration} \cite{tziavelis20tutorial}
returns 
distinct join answers one-at-a-time,
in the order dictated by a given ranking function on the output tuples.
Since this paradigm generalizes top-$k$
(top-$k$ for ``any $k$'' value, or ``anytime top-$k$''), 
it is also called any-$k$ \cite{yang2018any,tziavelis20vldb}.
An obvious solution is to compute the entire join output,
and then either batch-sort it or insert it into a heap data structure.
Our goal is to find more efficient solutions for appropriate ranking
functions.

For simplicity, in this paper we only discuss ranking by 
increasing
\emph{sum-of-weights},
where each input tuple has a real-valued weight and the weight of an output
tuple is the sum of the weights of the 
input tuples that were
joined to derive it. Ranked enumeration returns the join answers in
increasing order of output-tuple weight.
It is straightforward to generalize our approach to any ranking function
that can be interpreted as
a \emph{selective dioid} \cite{tziavelis20vldb}.
Intuitively, a selective dioid \cite{GondranMinoux:2008:Semirings}
is a semiring that also establishes a total
order on the domain. 
It has two operators 
($\operatorname{min}$ and $+$ for sum-of-weights) 
where one 
\emph{distributes} over the other
($+$ distributes over $\operatorname{min}$).
These structures include even less obvious cases such as lexicographic
ordering by relation attributes. 

\subsection{Complexity Measures}
\label{sec:complexity}

We consider in-memory computation and analyze all algorithms in the standard
Random Access Machine (RAM)
model with uniform cost measure. Following common practice, we treat query size
$|Q|$---intuitively, the length of the SQL string---as a constant.
This corresponds to the classic notion of
\emph{data complexity} \cite{DBLP:conf/stoc/Vardi82}, where 
one is interested
in scalability in the size of the input data, and not of the query
(because users do not write arbitrarily large queries).

In line with previous work~\cite{berkholz19submodular,GottlobGLS:2016,carmeli20random},
we assume that it is possible to create in linear time an index that supports tuple lookups in constant time. 
In practice, hashing achieves those guarantees in an expected, amortized sense.
We include all index construction times and index sizes in our analysis.

For the time complexity of enumeration algorithms, 
we measure the time until the $k^\textrm{th}$ result 
is returned ($\TT(k)$) for all values of $k$.
In the full version \cite{tziavelis21full},
we further discuss the relationship of $\TT(k)$ to enumeration delay as complexity measures.
Since we do not assume any given indexes, a trivial lower bound
is $\TT(k) = \O(n + k)$: the time to inspect each input tuple at least
once and to return $k$ output tuples.
\emph{Our algorithms achieve that lower bound up to a polylogarithmic factor}.
For space complexity, we use $\MEM(k)$ to denote the required memory
until the $k^\textrm{th}$ result is returned.

\section{Graph Framework for Joins}
\label{sec:framework}

\begin{figure*}[tb]
\centering
\begin{subfigure}[t]{.3\linewidth}
    \centering
    \includegraphics[height=5cm]{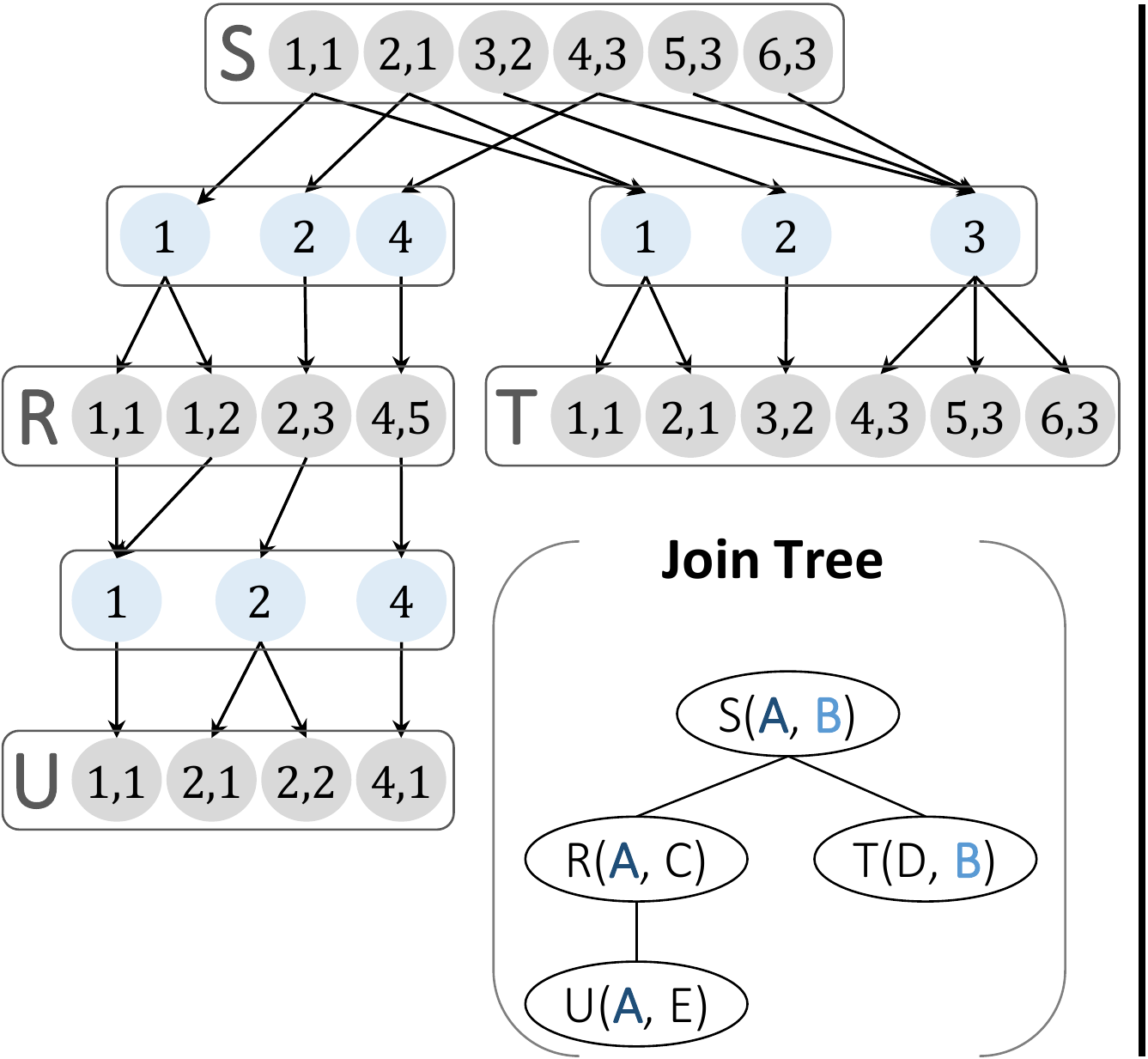}
    \caption{Equi-join \dpgraph \cite{tziavelis20vldb}.}
    \label{fig:overview_equi}
    \end{subfigure}%
\hfill
\begin{subfigure}[t]{.47\linewidth}
    \centering
    \includegraphics[height=5cm]{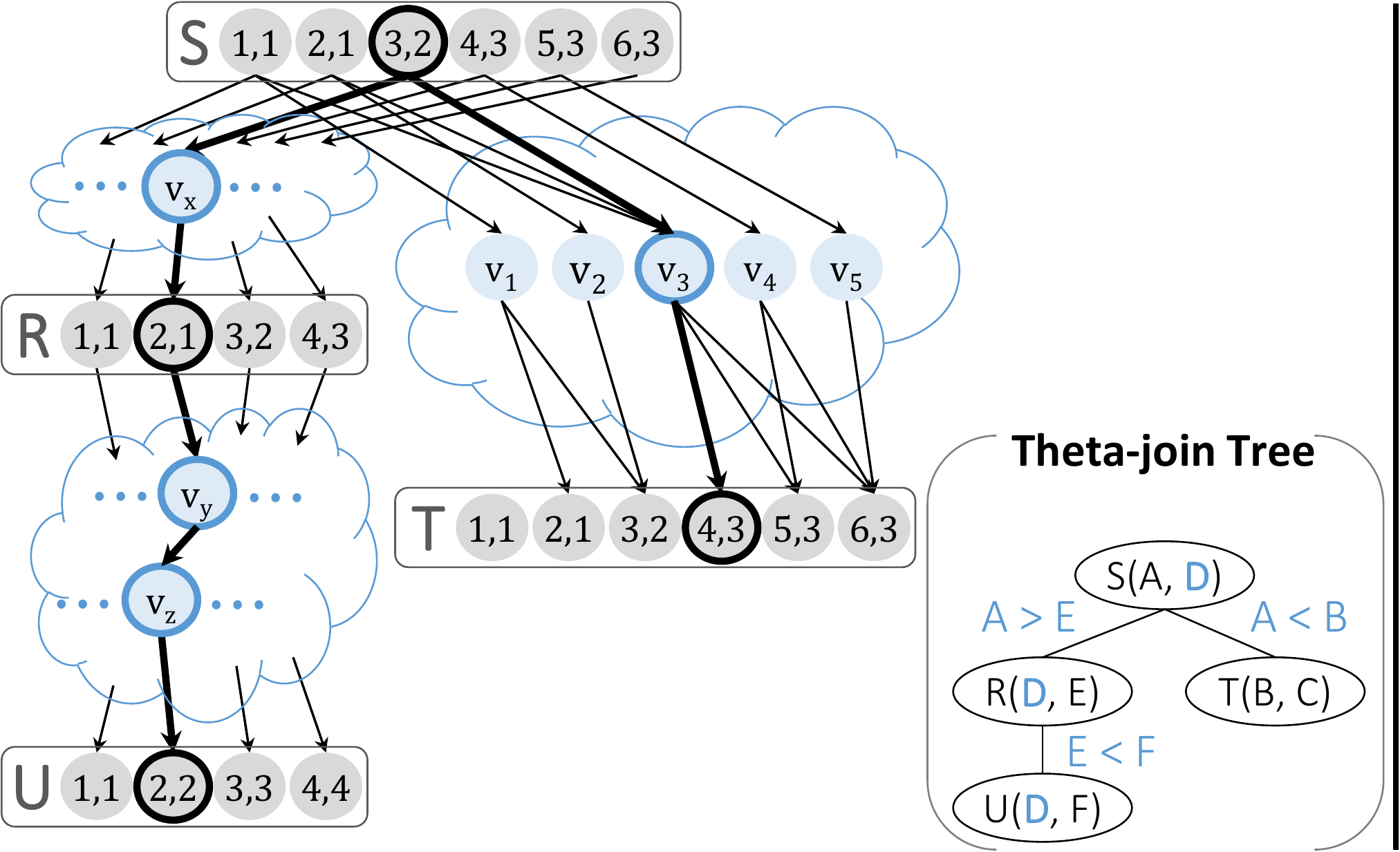}
    \caption{Theta-join \dpgraph and abstraction proposed in this paper.}
    \label{fig:overview_abstract}
\end{subfigure}
\hfill
\begin{subfigure}[t]{.22\linewidth}
    \centering
    \includegraphics[height=5cm]{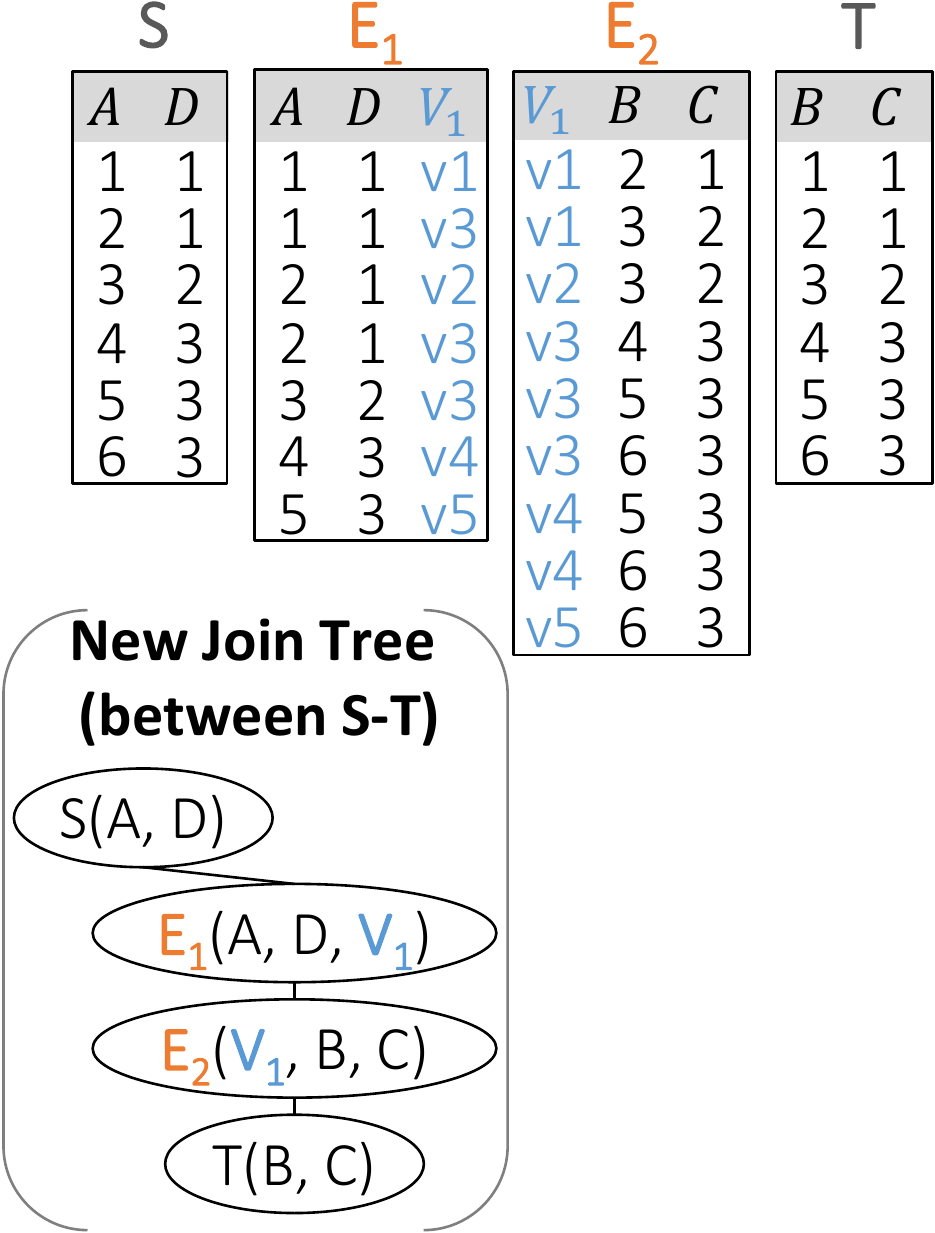}
    \caption{Reduction to equi-join.}
    \label{fig:overview_tables}
\end{subfigure}
\caption{Overview of our approach. We generalize the equi-join-specific construction to theta-joins by introducing an abstraction (blue clouds) that factorizes binary joins.
Some factorizations can also be used to reduce theta-joins to equi-joins.
}
\label{fig:overview}
\end{figure*}

We summarize our recent work on ranked enumeration for equi-joins, then
show our novel generalization to theta-joins.

\subsection{Previous Work: Any-$k$ for Equi-joins}
\label{sec:prev_work}

\emph{Any-$k$} algorithms~\cite{tziavelis20vldb} for \emph{acyclic equi-joins}
reduce ranked enumeration
to the problem of finding the $k^\textrm{th}$-lightest trees 
in a layered DAG, which we call the \emph{\dpgraph}.
Its structure depends on the join tree of the given query;
an example is depicted in \cref{fig:overview_equi}. 
The \dpgraph is a layered DAG in the sense that we associate it with a particular topological sort:
(1) Conceptually, each node is labeled with a layer ID
(not shown in the figure to avoid clutter).
A layer is a set of nodes that share the same layer ID (depicted with rounded rectangles). 
(2) Each edge is directed, going from lower to higher layer ID.
(3) All tuples from an input relation appear as (black-shaded) nodes in the same layer, called a \emph{relation layer}.
Each relation layer has a unique ID and for each join-tree edge $(S, T)$, $S$ has a lower layer ID
than $T$. 
(4) If and only if two relations are adjacent in the join tree, then
their layers are connected via a \emph{connection layer} that contains (blue-shaded) nodes representing
their join-attribute values.
(5) The edges from a relation layer to a connection layer connect the tuples with their corresponding join-attribute values and vice-versa.

The \dpgraph is constructed on-the-fly and bottom-up, according
to a join tree of the query (starting from $U$
and $T$ in the example). This phase essentially performs a bottom-up
semi-join reduction that also creates the edges and join-attribute-value
nodes. 
A \emph{tree solution} is a tree that starts from the root layer and contains exactly 1 node from each relation layer.
By construction, every tree solution corresponds
to a query answer, and vice versa.

The any-$k$ algorithm then goes through two phases on the \dpgraph.
The first is a Dynamic Programming computation, 
where every graph node records for each of its outgoing edges
the lowest weight among all subtrees that contain $1$ node from each relation layer below.
The minimum-subtree and input-tuple weights are not shown in \Cref{fig:overview_equi}
to avoid clutter. For instance, the outgoing edge for $R$-node $(2,3)$ would store
the smaller of the weights of $U$-tuples $(2,1)$ and $(2,2)$.
Similarly, the left edge from $S$-node $(2,1)$ would store the sum of the weight
of $R$-tuple $(2,3)$ and the minimum subtree weight from $R$-node $(2,3)$.
The minimum-subtree weight for a node's outgoing edge is obtained at a
constant cost by pushing the minimum weight 
over all outgoing edges
up to the node's parent.
Afterwards, enumeration is done in a second phase,
where the \dpgraph is traversed
top-down (from $S$ in the
example), with the traversal order determined by the layer IDs and
minimum-subtree weights on a node's outgoing edges.
The size of the \dpgraph and its number of layers determine space and time
complexity of the any-$k$ algorithm. The following lemma summarizes the main
result from our previous work~\cite{tziavelis20vldb}. We restate it here in terms of
data complexity (where query size $\ell$ is a constant) and using $\lambda$
for the number of layers.\footnote{Due to the specific
construction for equi-joins \cite{tziavelis20vldb}, there $\lambda$ was linear
in query size $\ell$ and hence $\ell$ and $\lambda$ were used interchangeably.
In our generalization this may not be the case, therefore we use the more precise
parameter $\lambda$ here.}

\begin{lemma}[\cite{tziavelis20vldb}]\label{lem:acyclicEquiJoinComplexity}
Given an \dpgraph with $|E|$ edges and $\lambda$ layers, 
ranked enumeration 
of the $k$-lightest tree solutions
can be performed with
$\TT(k) = \O(|E| + k \log k + k \lambda)$ and
$\MEM(k) = \O(|E| + k \lambda)$.
\end{lemma}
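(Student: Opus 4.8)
The plan is to instantiate the standard two-phase ``any-$k$'' template on the \dpgraph and bound each phase separately in terms of $|E|$, $k$, and the number of layers $\lambda$. \emph{Phase 1 (bottom-up dynamic programming).} First I would compute, for every node $v$, the weight $\pi(v)$ of the lightest subtree rooted at $v$ that selects exactly one node from every relation layer strictly below the layer of $v$; processing layers from the bottom up, this amounts to a single relaxation $\pi(v)=\min_{(v,u)\in E}\bigl(w(v,u)+\pi(u)\bigr)$ per edge (with the obvious bookkeeping distinguishing relation and connection layers), so Phase~1 takes $\O(|E|)$ time and space. As a by-product I would equip each node with a local min-heap over its outgoing edges keyed by $w(v,u)+\pi(u)$; building all of them costs $\O(\sum_v \deg v)=\O(|E|)$. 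The unique lightest tree solution is then read off by starting at the best root and always following the minimum-keyed outgoing edge, visiting one node per layer, i.e.\ $\O(\lambda)$ work.

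\emph{Phase 2 (Lawler-style ranked enumeration).} The remaining solutions I would produce via a Lawler-style partition of the solution space, managed through one global priority queue $\mathcal{C}$ of compactly encoded \emph{candidates}: a candidate fixes a prefix of choices (a path from the root down to some node $v$) together with the set of outgoing edges at $v$ already excluded, and implicitly commits to the cheapest completion below; its key---its exact weight---is the prefix weight plus the next admissible local edge at $v$ plus $\pi$ of that edge's head. Popping the minimum candidate yields the next tree solution in ranking order; I would materialize it in $\O(\lambda)$ by walking the fixed prefix and greedily completing downward, and then spawn its successors---the sibling of $v$ that takes the \emph{next} edge from $v$'s local heap, and, along the just-completed path, further candidates that deviate at the nodes below $v$. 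Correctness is the usual Lawler invariant: the candidates currently in $\mathcal{C}$, together with those reachable through not-yet-generated sibling chains, partition exactly the set of not-yet-output solutions, so $\min\mathcal{C}$ is always the globally next-lightest solution.

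\emph{Accounting, and the main obstacle.} Phase~1 contributes $\O(|E|)$ to time and space. In Phase~2, each of the $k$ produced solutions costs $\O(\lambda)$ to materialize and $\O(\lambda)$ to set up its deviation candidates, giving the $\O(k\lambda)$ term, and $\MEM(k)$ is $\O(|E|)$ for the graph and local heaps plus $\O(k\lambda)$ for the $\O(k)$ live candidates, each carrying an $\O(\lambda)$ prefix. The $\O(k\log k)$ term is precisely the cost of the global-queue operations---\emph{provided} the number of live candidates stays $\O(k)$. Securing this is the crux: naively inserting all $\Theta(\lambda)$ deviation successors of every popped solution would inflate $|\mathcal{C}|$ to $\Theta(k\lambda)$ and the bound to $\O(k\lambda\log(k\lambda))$, so the successor generation (both the sibling chains at a node and the deviations along a path) must be made lazy---only $\O(1)$ new candidates enter $\mathcal{C}$ per popped solution, and the next one in a chain is created only when its predecessor is extracted---after which one must re-prove that this lazy scheme still partitions all remaining solutions and never ``orphans'' one, and that the total number of local-heap pops triggered by deviations is charged tightly enough not to introduce a stray logarithmic factor. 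Making these amortization invariants precise---$|\mathcal{C}|=\O(k)$ and a tight count of local-heap work---is the only real obstacle; the rest is routine.
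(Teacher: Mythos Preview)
This lemma is not proved in the present paper: it is explicitly restated from prior work~\cite{tziavelis20vldb} (``The following lemma summarizes the main result from our previous work\ldots''), so there is no in-paper proof to compare against. That said, your outline is a faithful sketch of the any-$k$ framework developed in that cited work---bottom-up DP over the layered graph to compute best-completion weights in $\O(|E|)$, followed by Lawler-style ranked enumeration with lazy successor generation to keep the global queue at $\O(k)$ live candidates---and you have correctly isolated the one non-routine point, namely the amortization that bounds $|\mathcal{C}|=\O(k)$ and hence the heap cost at $\O(k\log k)$ rather than $\O(k\lambda\log(k\lambda))$. One small wrinkle worth flagging: the enumeration graph here is a \emph{tree} of relation layers (not a path), so a ``solution'' is a subtree touching every relation layer, and the deviation scheme must branch at each child of a relation node rather than along a single path; your prose oscillates between ``path'' and ``tree,'' and the correctness invariant for the Lawler partition needs to be stated for the tree-shaped case. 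This does not change the asymptotics, but the bookkeeping is slightly more involved than the single-path picture suggests.
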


To extend the any-$k$ framework beyond equi-joins, we 
generalize first the definition of a join tree
and then the \dpgraph with an abstraction that is sensitive to the join conditions.

\subsection{Theta-Join Tree}
\label{sec:join_tree_graph}

The join tree 
is essential for generating the \dpgraph. 
In contrast to
equi-joins,
for general join conditions there is no
established methodology for how to define or find a join tree.
We generalize the join tree definition as follows:

\begin{definition}[Theta-join Tree]
A theta-join tree for a theta-join query $Q$ is a join tree for the equi-join $Q'$ that has all the $\theta_j$ predicates of $Q$ removed,
and every $\theta_j$ is assigned to an edge $(S, T)$ of the tree
such that $S$ and $T$ contain all the attributes referenced in $\theta_j$.
\end{definition}

We call a theta-join query \emph{acyclic} if it admits a theta-join tree.
In the theta-join tree, edge $(S,T)$ represents the join $S \bowtie_\theta T$, where join condition
$\theta$ is the conjunction of all predicates $\theta_j$ 
assigned to the edge, as well as the equality predicates
$S.A = T.A$ for every attribute $A$ that appears in both $S$ and $T$.

\begin{example}
\label{ex:query}
Consider
$Q(A, B, C, D, E, F) \datarule R(D, E), S(A, D)$, $T(B, C), U(D, F), (A < B), (A > E), (E < F)$.\footnote{SELECT * FROM R, S, T, U WHERE\\\phantom{plhldr}R.D = S.D AND R.D = U.D AND S.A < T.B AND
S.A > R.E AND R.E < U.F}
This query is acyclic since we can construct the theta-join tree shown in
\cref{fig:overview_abstract}.
Notice that all nodes containing attribute $D$ are connected
and each inequality is assigned to an edge whose adjacent nodes together
contain all referenced attributes.
For example, $A < B$ is assigned to $(S, T)$ ($S$ contains $A$ and $T$
contains $B$). The join-tree edges represent join predicates
$\theta_1 = S.A < T.B$ (edge $(S,T)$),
$\theta_2 = S.A > R.E \wedge S.D = T.D$ (edge $(S,R)$), and
$\theta_3 = R.E < U.F \wedge R.D = U.D$ (edge $(R,U)$).
\end{example}

We can construct the theta-join tree 
by first removing
all $\theta_j$ predicates from the given query $Q$, turning it into an
equi-join $Q'$. Then an algorithm like the GYO reduction can be used to find
a join tree for $Q'$. For the query in \Cref{ex:query}, this join tree looks
like the one in \Cref{fig:overview_abstract}, but without the edge labels.
Finally, we attempt to add each
$\theta_j$ predicate to a join-tree edge: $\theta_j$ can be assigned to any edge
where the two adjacent nodes contain all the attributes referenced in it.
Note that there may exist different join trees for $Q'$,
and we may have to try all possible options to obtain a theta-join tree.
Fortunately, this computation depends only on the query, 
thus takes $\O(1)$ space and time in data complexity.
If either the GYO algorithm fails to find a join tree for $Q'$ or 
no join tree allows us to assign the 
$\theta_j$ predicates to tree edges,
then the query
is \emph{cyclic} and can be handled as discussed in \Cref{sec:cycles}.
We discuss next how to create the \dpgraph for a given theta-join tree.

\subsection{Factorized Join Representation}
\label{sec:factorized}

By relying on a join tree similar in structure to the equi-join case, we can
establish a similar layered structure for the \dpgraph.
In particular, each input relation appears in a separate layer
and each join-tree edge is mapped to a subgraph implementing
the join condition between the corresponding relation layers.
This is visualized by the blue clouds in \Cref{fig:overview_abstract}.
In contrast to the equi-joins, we allow more general connection layers,
possibly a single layer with a more complex connection pattern
(like the $S$-to-$T$ connection in the example)
or even multiple layers (like the connection between
$R$-node $(2,1)$ and $U$-node $(2,2)$).

To be able to apply our any-k algorithms \cite{tziavelis20vldb} to this
generalized \dpgraph we must ensure that 
(1) each ``blue cloud'' can be mapped to a layered graph and 
(2) each tree solution corresponds to a join answer, and vice versa
(like the one highlighted in \Cref{fig:overview_abstract} which corresponds to joining input tuples $s=(3,2)$, $t=(4,3)$, $r=(2,1)$,
and $u=(2,2)$).
For (2) it is sufficient to ensure for each adjacent parent-child
pair of relations in the theta-join tree that there exists a path from a node
in the parent-relation layer to a node in the child-relation layer iff
the corresponding input tuples join. In the example, there is a path
from $S$-node $(3,2)$ via $v_3$ to $T$-node $(4,3)$, because the two tuples
satisfy $A=3 < B=4$. Similarly, since $s'=(5,3)$ and $t=(4,3)$ violate $A<B$,
there is no path from the former to the latter.
For (1), it is sufficient to ensure that the ``blue cloud'' is a DAG
with parent-relation nodes only having edges going into the cloud, while
all child-relation edges must point out of the cloud.
We formalize these properties with the notion of a
\emph{Tuple-Level Factorization Graph} (\TLFG).

\begin{definition}[\TLFG]
A \emph{Tuple-Level Factorization Graph} of a theta-join 
$S \bowtie_\theta T$ of relation $S$, called the source, and $T$, called the
target, is a directed acyclic graph $G(V, E)$ where:
\begin{enumerate}
    \item $V$ contains a distinct source node $v_s$ for each tuple $s \in S$, 
    a distinct target node $v_t$ for each tuple $t \in T$, and
    possibly other intermediate nodes,

    \item each source node $v_s$ has only outgoing edges and each target
    node $v_t$ has only incoming edges, and

    \item for each $s \in S, t \in T$, there exists a path from $v_s$ to $v_t$
    in $G$ if and only if $s$ and $t$ satisfy join condition $\theta$.
\end{enumerate}
\end{definition}
The \emph{size} of a \TLFG $G(V, E)$ is $|V| + |E|$ and 
its \emph{depth} $d$ is the maximum length of any path in $G$.
The graphs depicted in \cref{fig:Equality_all} and \cref{fig:Equality_grouping} are valid \TLFGs for equi-joins.

It is easy to see that any \TLFG is a layered graph: 
Assign w.l.o.g.\ layer ID 0 to all source nodes $v_s$;
each intermediate node $v$ is assigned layer ID $i$,
where $i$ is the length of the longest path (measured in number of edges) from
any source node to $v$. Here $i$ is well-defined due to the \TLFG's acyclicity.
All target-relation nodes are assigned to layer $d$, which is the maximum
layer ID assigned to any intermediate node, plus 1.
In the example in \Cref{fig:Inequality_sharing}, node $v_3$ is in layer 3,
because the longest path from any $S$-node to $v_3$
has 3 edges (from $(1,1)$ in the example). All $T$-nodes are in layer 6.

Since the entire generalized \dpgraph consists of
$\ell$ relation layers
and $\ell - 1$ \TLFGs (one for each edge of the
theta-join tree),
using \Cref{lem:acyclicEquiJoinComplexity} we can show:

\begin{theorem}\label{thm:genericComplexity}
Given a theta-join $Q$ of $\ell = \O(1)$ relations,
a theta-join tree,
and the corresponding \dpgraph $G_Q$,
where for each edge of the theta-join tree
the corresponding \TLFG has $\O(|E|)$ size and $\O(d)$ depth, then
ranked enumeration of
the $k$-lightest tree solutions
can be performed with
$\TT(k) = \O(|E| + k \log k + k d)$ and
$\MEM(k) = \O(|E| + k d)$.
\end{theorem}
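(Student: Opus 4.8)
The plan is to reduce Theorem~\ref{thm:genericComplexity} to Lemma~\ref{lem:acyclicEquiJoinComplexity} by verifying that the generalized \dpgraph $G_Q$ is itself a layered DAG whose tree solutions are in bijection with the answers of $Q$, and then bounding its edge count and number of layers. First I would assemble $G_Q$ explicitly: it consists of $\ell$ relation layers (one per atom of the theta-join tree) together with $\ell-1$ \TLFGs, one glued onto each tree edge $(S,T)$ so that the source nodes $v_s$ of the \TLFG are identified with the $S$-relation layer and the target nodes $v_t$ are identified with the $T$-relation layer. Since the theta-join tree is a tree on $\ell=\O(1)$ nodes, each relation layer participates in a constant number of \TLFGs, so the total size is $\O(\ell \cdot |E|) = \O(|E|)$ edges and the construction is a straightforward concatenation.

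Next I would argue that $G_Q$ is a layered graph in the sense required by Lemma~\ref{lem:acyclicEquiJoinComplexity}. Orient the theta-join tree away from its root; process relation layers in any topological order consistent with this orientation. For the root relation, assign its nodes layer ID $0$. Inductively, if the parent relation $S$ of a tree edge $(S,T)$ has been assigned layer IDs up to some value $m$, use the layering of the \TLFG on $(S,T)$ (which, as noted in the excerpt, assigns its source nodes ID $0$, its intermediate nodes IDs $1,\dots,d-1$, and its targets ID $d$) but shifted so that $S$'s nodes keep their already-assigned IDs and $T$'s nodes land at the maximum $S$-layer ID plus the \TLFG depth. All edges then go from lower to higher layer ID, the source nodes of each \TLFG have only outgoing edges into the cloud and the target nodes only incoming edges (property~2 of a \TLFG), so $G_Q$ is acyclic and layered. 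Because each \TLFG contributes at most $d$ new layers and there are $\ell-1=\O(1)$ of them, the total number of layers is $\lambda = \O(\ell \cdot d) = \O(d)$.

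Then I would establish the answer-correspondence. A tree solution picks exactly one node from each relation layer; I claim it corresponds to a tuple $(r_1,\dots,r_\ell)$, one per relation, that satisfies all join predicates, and conversely. The forward direction: restricted to any tree edge $(S,T)$, a tree solution induces a path from the chosen $v_s$ to the chosen $v_t$ within that \TLFG, which by property~3 of a \TLFG means the chosen $s$ and $t$ satisfy the conjunction $\theta$ attached to that edge (including the equality predicates on shared attributes). Since every join predicate $\theta_j$ of $Q$ is assigned to some tree edge whose endpoints contain all its attributes, and the equality predicates are captured by the equi-join $Q'$ underlying the theta-join tree, all predicates of $Q$ are satisfied; acyclicity of the join tree guarantees that a per-edge consistent choice is globally consistent (no attribute is over-constrained across non-adjacent atoms). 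The converse is similar: given a satisfying tuple, property~3 yields a connecting path inside each \TLFG, and stitching these paths over the tree edges yields a tree solution. Hence the $k$-lightest tree solutions are exactly the $k$ top-ranked answers, where the weight of a tree solution is the sum of input-tuple weights (intermediate \TLFG nodes and their edges carry weight $0$).

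Finally, I would invoke Lemma~\ref{lem:acyclicEquiJoinComplexity} with $|E|$ the edge count of $G_Q$ and $\lambda = \O(d)$ the layer count, yielding $\TT(k) = \O(|E| + k\log k + k d)$ and $\MEM(k) = \O(|E| + k d)$, which is the claim. The main obstacle I anticipate is the answer-correspondence step: one must be careful that combining paths independently chosen in each \TLFG does not create spurious or missing answers when the same relation attribute is shared across several tree edges — this is exactly where acyclicity of the theta-join tree (the connected-subtree / running-intersection property) is needed, and I would spell out that a consistent choice on each edge of a tree automatically glues to a globally consistent tuple. A secondary point to handle carefully is that the \TLFG layering must be shifted rather than restarted at each tree edge so that the global graph remains properly layered and the bound $\lambda = \O(d)$ (and not $\O(\ell d)$ with a hidden dependence) is honest — but since $\ell = \O(1)$ both read as $\O(d)$ in data complexity.
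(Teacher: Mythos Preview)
Your proposal is correct and follows essentially the same approach as the paper: bound the total number of edges and layers of the \dpgraph using $\ell=\O(1)$, then invoke \Cref{lem:acyclicEquiJoinComplexity}. The paper's actual proof is a three-sentence remark that the $\ell-1$ \TLFGs together contribute $\O(|E|)$ edges and $\O(d)$ layers because $\ell$ is constant, leaving the layering construction and the answer-correspondence to the surrounding discussion in \Cref{sec:factorized}; you have simply folded those pieces (the layer-ID shifting and the running-intersection argument for global consistency) into the proof itself, which is arguably cleaner but not a different route.
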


The theorem states that worst-case size and depth of the \TLFG %
determine
the time and space complexity of enumerating 
the theta-join answers in weight order.
Hence the main challenge is to encode join condition with the smallest
and most shallow \TLFG possible.

\begin{figure}[t]
\centering
\includegraphics[width=.75\linewidth]{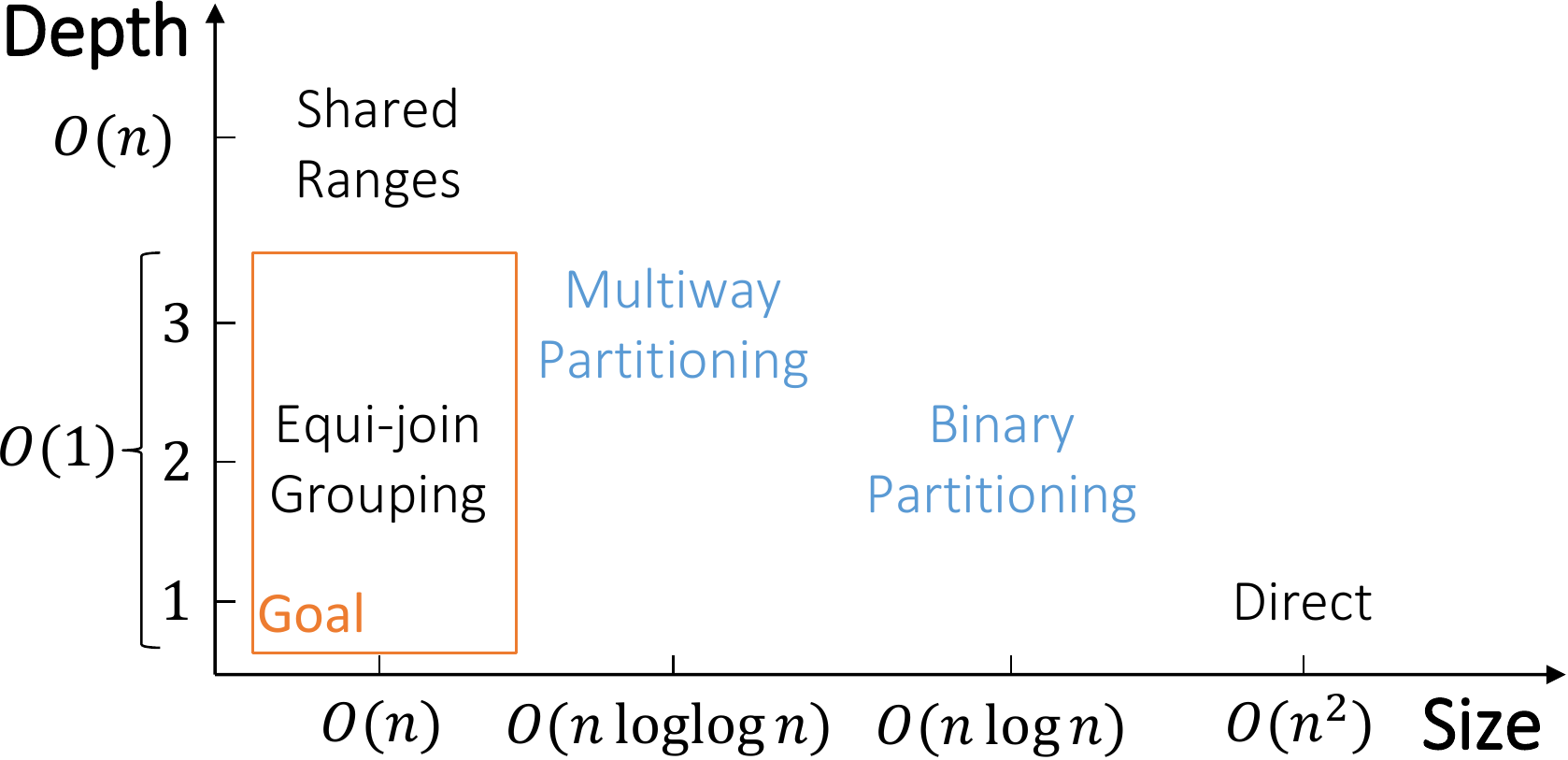}
\caption{We propose 4 different \TLFGs for a single inequality.
These trade off size with depth and 2 of them (in blue) achieve the equi-join guarantee up to a logarithmic factor.}
\label{fig:sizeTradeoff}
\end{figure}

\introparagraph{Direct \TLFGs}
For any theta-join, a naive way to construct a \TLFG is to directly connect
each source node with all the target nodes it joins with.
This results in $|E| = \O(n^2)$ and $d = 1$, thus
$\TT(k) = \O(n^2 + k \log k)$ and $\MEM(k) = \O(n^2 + k)$, respectively.
Hence even the top-ranked result requires quadratic 
time and space.
To improve this complexity, we must find a \TLFG with a smaller
number of edges, while keeping the depth low.
Our results are summarized in \Cref{fig:sizeTradeoff}, with details
discussed in later sections.

\introparagraph{Output duplicates}
A subtle issue with \Cref{thm:genericComplexity} is that two non-isomorphic tree solutions
of the \dpgraph may contain the exact same input tuples 
(the relation-layer nodes), causing duplicate query answers.
This happens if and only if
a \TLFG has multiple paths between the same source and destination node.
While one would like to avoid this, it may not be possible to find a
\TLFG that is both efficient in terms of size and depth, 
and also free of duplicate paths. Among the inequality conditions studied in
this paper, this only happens for disjunctions (\cref{sec:disjunctions}).

Since duplicate join answers must be removed, the time to return the $k$
top-ranked answers may increase. Fortunately, for our disjunction construction
it is easy to show that the number of duplicates per output tuple is $\O(1)$,
i.e., it does not depend on input size $n$. This implies that we can
filter the duplicates on-the-fly without increasing the complexity of
$\TT(k)$ (or $\MEM(k)$, for that matter): We maintain the top-$k$ join answers returned
so far in a lookup structure and, before outputting the next join answer,
we check in $\O(1)$ time if the same output had been returned before.\footnote{As 
an optimization, 
we can clear this lookup structure
whenever the weight of an answer is greater than the previous,
since all duplicates share the same weight. While this does not impact 
worst-case complexity, it can greatly reduce computation cost in practice
whenever output tuples have diverse sum-of-weight values.}

To prove that the number of duplicates per join answer is independent of input size,
it is sufficient to show that for each \TLFG the maximum number of paths from
any source node $v_s$ to any target node $v_t$, which we will call
the \emph{duplication factor}, is independent of input size.
We show this to be the case for the only \TLFG construction that could
introduce duplicate paths: disjunctions (\cref{sec:disjunctions}).
A duplicate-free \TLFG has a duplication factor equal to 1 (which is the case for most \TLFGs we discuss).

\subsection{Theta-join to Equi-join Reduction}
\label{sec:to_equi}

The factorized representation of the output of a theta-join as an \dpgraph
(using \TLFGs to connect adjacent relation layers) 
enables a novel reduction
from complex theta-joins to equi-joins.

\begin{theorem}\label{thm:thetaToEqui}
Let $G = (V,E)$ be a \TLFG of depth $d$ for a theta-join $S \bowtie_\theta T$
of relations $S$, $T$
and $X$ be the union of their attributes. 
For $0 < i \le d$,
let $E_i$ be the set of edges from layer $i-1$ to $i$. If $E = \bigcup_i E_i$,
i.e., every edge connects nodes in adjacent layers, then
$S \bowtie_\theta T = \pi_{X}( S \bowtie E_1 \bowtie\cdots\bowtie E_d \bowtie T$) where $\pi_X$ is an $X$-projection.
\end{theorem}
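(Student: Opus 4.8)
The plan is to prove set equality (under bag semantics, which is what the projection $\pi_X$ presumes to preserve the multiplicity structure) by establishing a bijection between the two sides. On the left we have a join answer: a pair $(s,t)$ with $s \in S$, $t \in T$ satisfying $\theta$. On the right we have a tuple in $S \bowtie E_1 \bowtie \cdots \bowtie E_d \bowtie T$, projected onto $X$; before projection, such a tuple is determined by $s$, $t$, and a choice of one edge from each $E_i$ that chains together into a path $v_s \to w_1 \to w_2 \to \cdots \to w_{d-1} \to v_t$ through the layers. So the right-hand side, before projection, is in bijection with the set of (source, path-to-target) pairs in $G$, and after projection onto $X$ (which keeps only the $S$- and $T$-attributes, discarding the identities of intermediate nodes) it maps to the set of pairs $(s,t)$ that are connected by \emph{at least one} path.

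The key steps I would carry out in order are as follows. First, I would unfold the definition of the natural join chain $S \bowtie E_1 \bowtie \cdots \bowtie E_d \bowtie T$: treat each $E_i$ as a binary relation on (layer-$(i-1)$-node, layer-$i$-node) pairs, note that consecutive $E_i, E_{i+1}$ share exactly the layer-$i$ node attribute, and that $S$ shares with $E_1$ the source-node identity (each $s$ corresponds to its node $v_s$) and $T$ shares with $E_d$ the target-node identity. Hence a tuple of the chain is precisely a sequence $(s, v_s = u_0, u_1, \ldots, u_{d-1}, u_d = v_t, t)$ where each $(u_{i-1}, u_i) \in E_i$ — that is, a source tuple $s$, a target tuple $t$, and a length-$d$ path from $v_s$ to $v_t$ in $G$ that respects the layering. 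Second, I would invoke the hypothesis $E = \bigcup_i E_i$ together with the layered structure of a \TLFG established right after the \TLFG definition (source nodes in layer $0$, target nodes in layer $d$, every edge going from some layer $i-1$ to layer $i$): this guarantees that \emph{every} path from a source node to a target node has length exactly $d$ and passes through one node per layer, so ``path from $v_s$ to $v_t$'' and ``length-$d$ layered path from $v_s$ to $v_t$'' coincide. Third, I would apply property~(3) of the \TLFG definition: a path from $v_s$ to $v_t$ exists iff $(s,t)$ satisfies $\theta$. Combining: $(s,t)$ appears in $\pi_X(S \bowtie E_1 \bowtie \cdots \bowtie E_d \bowtie T)$ iff there is some layered path $v_s \to v_t$ iff $(s,t) \models \theta$ iff $(s,t) \in S \bowtie_\theta T$. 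Fourth, I would note that the projection $\pi_X$ discards exactly the intermediate-node columns $u_1, \ldots, u_{d-1}$ and nothing else, so no $X$-attribute is lost and the claimed equality of schemas holds; under set semantics the multiplicities collapse correctly, and the remark on the ``duplication factor'' elsewhere in the paper explains the bag-semantics caveat.

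I expect the main obstacle to be purely bookkeeping rather than conceptual: one has to be careful that the natural-join semantics lines up the node-identity attributes correctly across the chain, in particular that $S$ and $E_1$ are joined on the identity of $v_s$ (and symmetrically $T$ and $E_d$ on $v_t$), rather than on some $X$-attribute, so that the construction does not accidentally merge distinct source tuples that happen to agree on their join attributes. Making this precise requires treating each edge relation $E_i$ as carrying the node identifiers as distinguished attributes and the source/target relations as extended with their own node-identifier attribute; once that convention is fixed the argument is a routine induction on $d$ (or a direct unfolding), and the edge case $d=1$, where the chain is just $S \bowtie E_1 \bowtie T$ with $E_1$ directly connecting $v_s$ to $v_t$, is immediate from property~(3). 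The only other thing to watch is the degenerate possibility that some layer is empty or that $G$ has isolated source/target nodes with no path; these contribute nothing to either side, so they are harmless.
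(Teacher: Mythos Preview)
Your proposal is correct and follows essentially the same approach as the paper: the paper's ``proof'' is just the sentence ``The theorem is easy to prove by construction'' followed by an example showing how the $E_i$ relations are formed and verifying the equality on that instance, so your argument is in fact considerably more detailed than what the paper provides. One small point on your bookkeeping concern: in the paper's construction the edge relation $E_1$ does not carry a separate node-identifier for $v_s$ but instead contains \emph{all} attributes of $S$ (and similarly $E_d$ contains all attributes of $T$), so the natural join $S \bowtie E_1$ is on the full $S$-schema and distinct $S$-tuples cannot be conflated---this is equivalent to your proposed fix of adding explicit node identifiers, just packaged differently.
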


Intuitively, the theorem states that if no edge in the \TLFG skips a layer,
then the theta-join $S \bowtie_\theta T$ can equivalently be computed as an
equi-join between $S$, $T$, and $d$ \emph{auxiliary} relations.
Each of those relations is the set of edges between adjacent layers of the \TLFG.

The theorem is easy to prove by construction, which we explain using the example
in \Cref{fig:overview_abstract}.
Consider the \TLFG for $S$ and $T$ and notice that all edges are between adjacent layers
and $d = 2$. In \Cref{fig:overview_tables}, the first
tuple $(1, 1, v_1) \in E_1$ represents the edge
from $S$-node $(1,1)$ to intermediate node $v_1$. (The tuple is obtained as the
Cartesian product of the edge's endpoints.) Similarly, the first tuple in $E_2$
represents the edge from $v_1$ to $T$-node $(2,1)$. It is easy to verify that
$S(A, D) \bowtie_{A<B} T(B, C) = \pi_{ADBC} (S \bowtie E_1 \bowtie E_2 \bowtie T)$.
The corresponding branch of the join tree is shown in \Cref{fig:overview_tables}.
Compared to the theta-join tree in \Cref{fig:overview_abstract}, the inequality
condition disappeared from the edge and is replaced by new nodes
$E_1(A, D, V_1)$ and $E_2(V_1, B, C)$.

\introparagraph{\QUADEQUI for direct \TLFGs}
Recall that any theta-join $S \bowtie_\theta T$ between relations of size $\O(n)$
can be represented by a 1-layer
\TLFG that directly connects the joining $S$- and $T$-nodes. Since this \TLFG
satisfies the condition of \Cref{thm:thetaToEqui}, it can be reduced
to equi-join $S \bowtie E \bowtie T$, where $|E| = \O(n^2)$.
We refer to the algorithm that first applies this construction to each
edge of the theta-join tree (and thus reducing the entire theta-join query between
$\ell$ relations to an equi-join) and then uses the equi-join ranked-enumeration
algorithm \cite{tziavelis20vldb} as \QUADEQUI.

Below we will show that better constructions with smaller auxiliary relations $E_i$
can be found for any join condition that is a DNF of inequalities. 
In particular,
such joins can be expressed as $S \bowtie E_1 \bowtie E_2 \bowtie T$ where $E_1, E_2$ are of size $\O(n \polylog n)$. 
\Cref{fig:overview_tables} shows a concrete instance.
However, note that not all \TLFGs satisfy the condition of \cref{thm:thetaToEqui}.
For example, \cref{fig:Inequality_sharing} shows a \TLFG which cannot be reduced to an equi-join with our theorem.

\begin{figure*}[t]
\centering
\begin{subfigure}[t]{.17\linewidth}
    \centering
    \includegraphics[height=4.2cm]{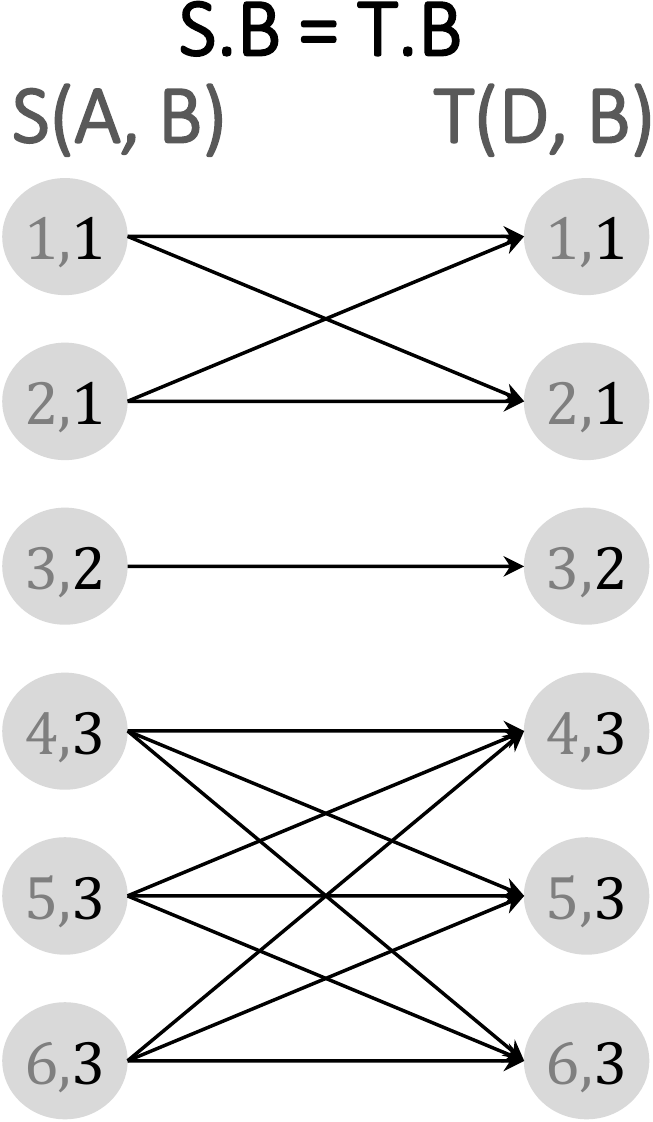}
    \caption{Equality: naive construction with edges between all joining pairs. $\O(n^2)$ size, $\O(1)$ depth.}
    \label{fig:Equality_all}
\end{subfigure}%
\hfill
\begin{subfigure}[t]{.17\linewidth}
    \centering
    \includegraphics[height=4.2cm]{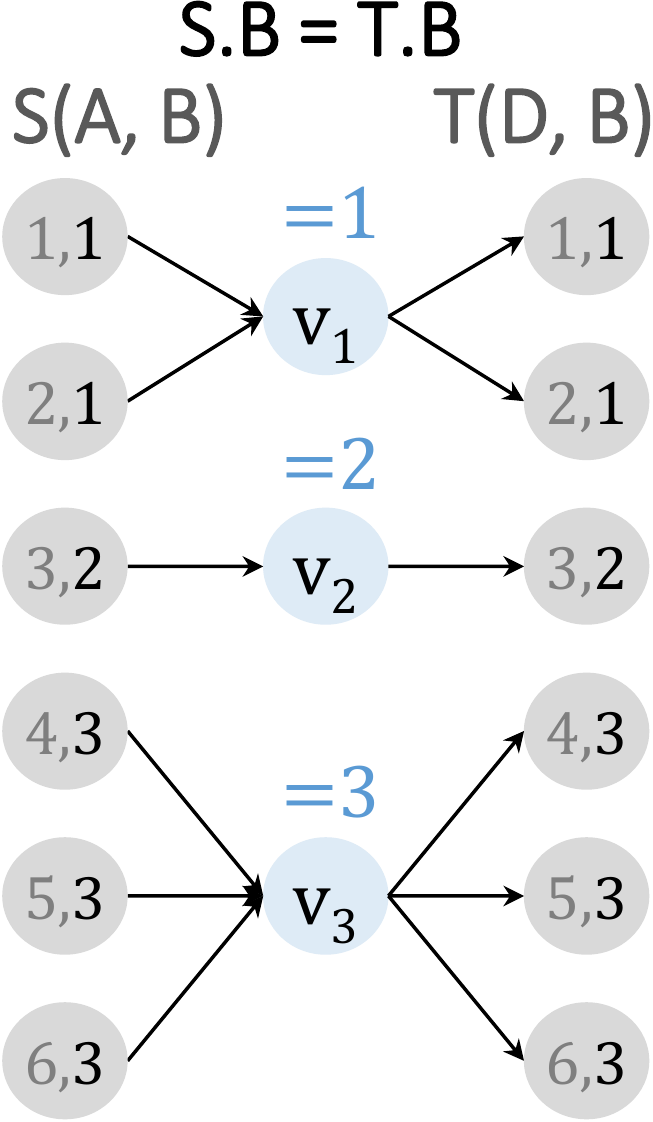}
    \caption{Equality: grouping tuples with common join values together. $\O(n)$ size, $\O(1)$ depth.}
    \label{fig:Equality_grouping}
\end{subfigure}%
\hfill
\begin{subfigure}[t]{.17\linewidth}
    \centering
    \includegraphics[height=4.2cm]{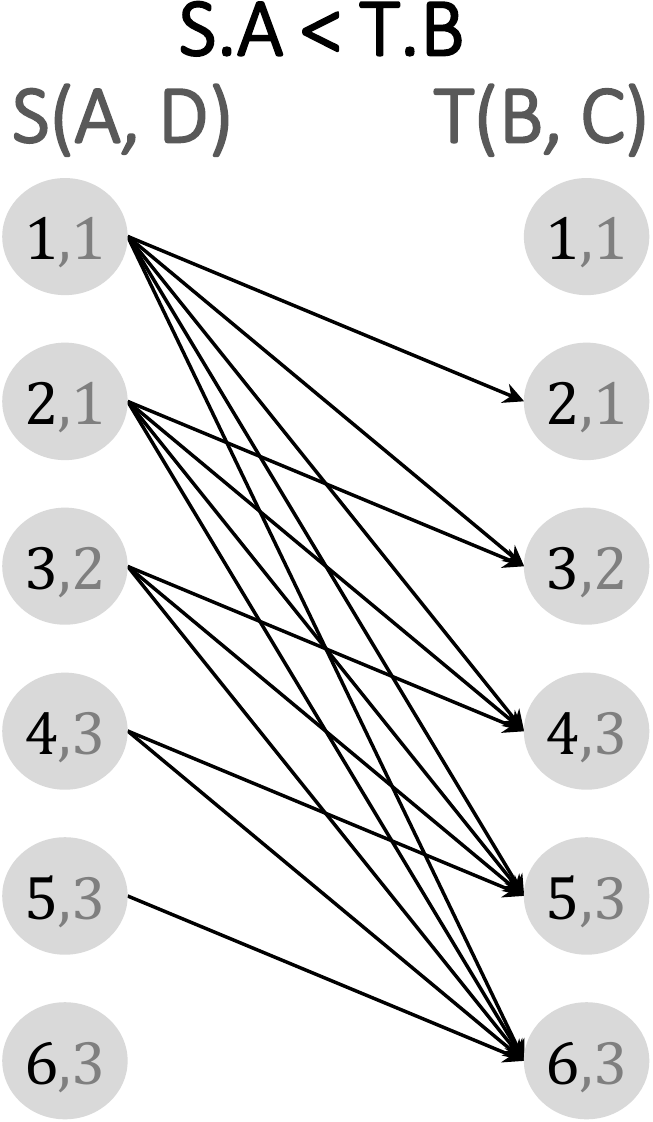}
    \caption{Inequality: naive construction with edges between all joining pairs. $\O(n^2)$ size, $\O(1)$ depth.}
    \label{fig:Inequality_all}
\end{subfigure}%
\hfill
\begin{subfigure}[t]{.17\linewidth}
    \centering
    \includegraphics[height=4.2cm]{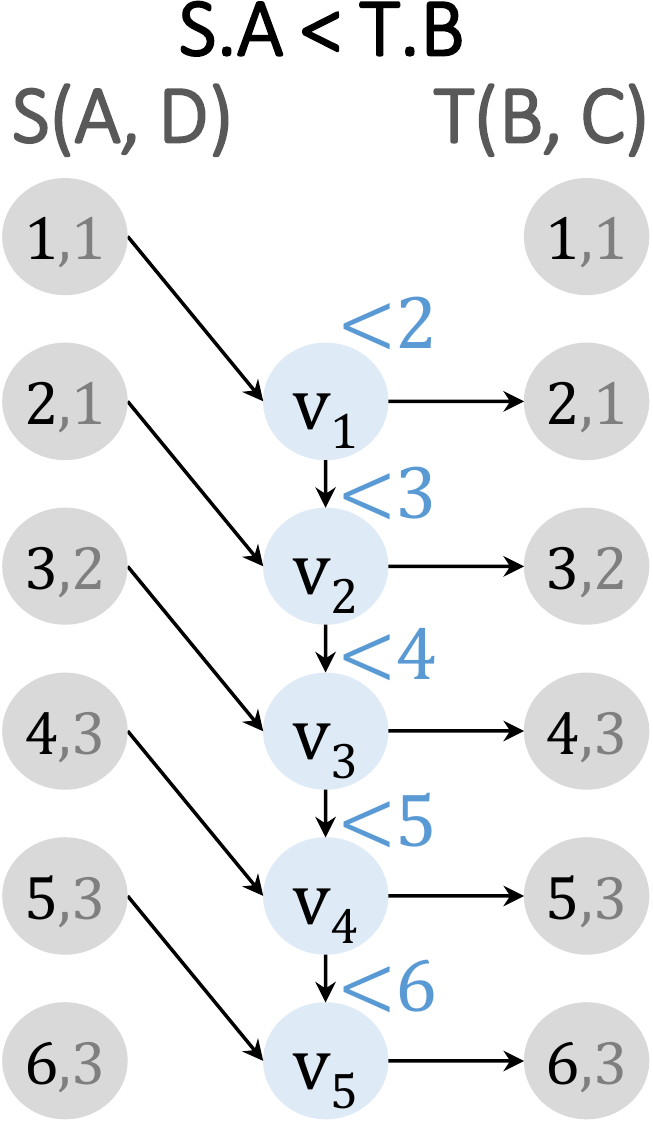}
    \caption{Inequality: shared ranges. Middle nodes indicate a range. $\O(n)$ size, $\O(n)$ depth.}
    \label{fig:Inequality_sharing}
\end{subfigure}%
\hfill
\begin{subfigure}[t]{.2\linewidth}
    \centering
    \includegraphics[height=4.2cm]{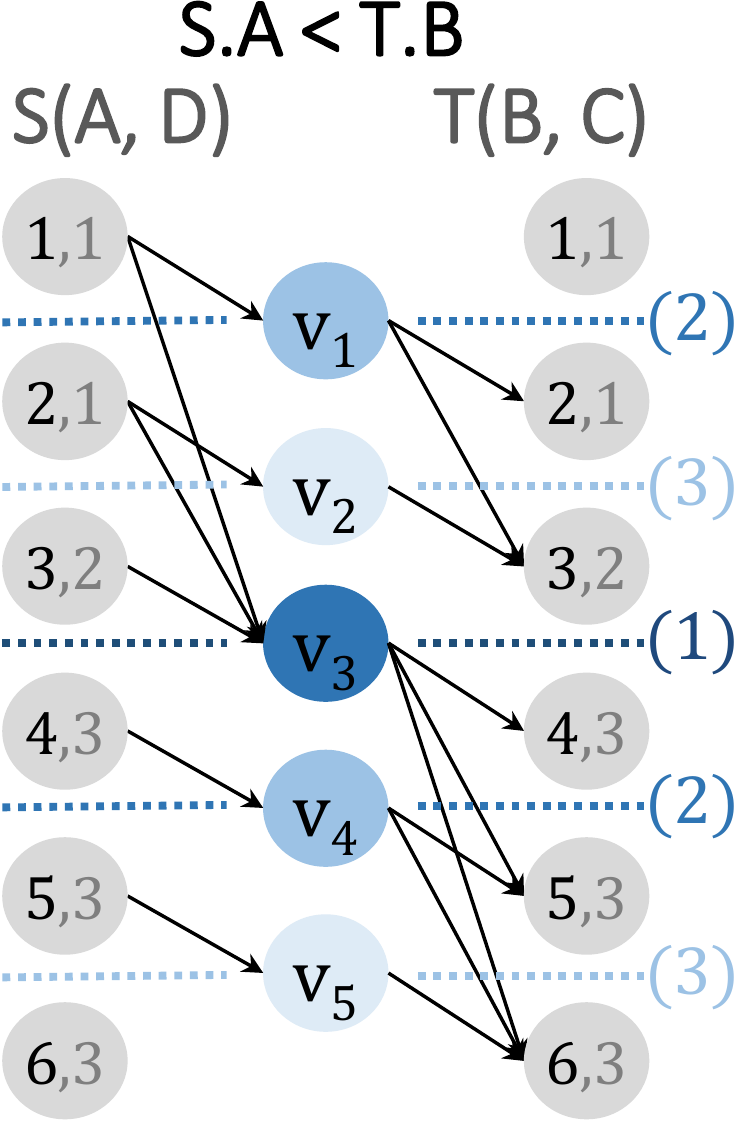}
    \caption{Inequality: binary partitioning. Dotted lines indicate partitioning steps. $\O(n \log n)$ size, $\O(1)$ depth.}
    \label{fig:Inequality_partitioning}
\end{subfigure}%
\caption{Factorization of Equality and Inequality conditions with our \TLFGs. 
The S and T node labels indicate the values of the joining attributes.
All \TLFGs shown here have $\O(1)$ depth.}
\label{fig:TLFGs}
\end{figure*}

\section{Factorization of Inequalities}
\label{sec:inequalities}

We now show how to construct \TLFGs of size
$\O(n \polylog n)$ and depth $\O(1)$ when the
join condition $\theta$ in a join $S \bowtie_\theta T$
is a DNF\footnote{Converting
an arbitrary formula to DNF may increase query size exponentially. This
does not affect data complexity, because query size is still a constant.} of inequalities (and equalities).
Starting with a single inequality, we then generalize
to conjunctions and finally to DNF.
Non-equalities and bands will be discussed in \Cref{sec:improvements}.

\subsection{Single Inequality Condition}
\label{sec:inequality}

Efficient \TLFGs for equi-joins exploit that equality conditions
group input tuples into \emph{disjoint} equivalence classes (\cref{fig:Equality_grouping}).
For inequalities, this is generally not possible and therefore
we need a different approach to leverage their
structural properties (see \cref{fig:Inequality_all}).

\introparagraph{Binary partitioning}
Our \emph{binary-partitioning} based \TLFG is inspired by
quicksort \cite{hoare62quicksort}.
Consider condition $S.A < T.B$ and a \emph{pivot} value $v$.
We partition relations $S$ and $T$ s.t.\
$s.A < v$ for $s \in S_1$ and $s.A \geq v$ for $s \in S_2$,
and similarly $t.B < v$ for $t \in T_1$ and $t.B \geq v$ for $t \in T_2$.
This guarantees that all $A$-values in $S_1$ are strictly less than
all $B$-values in $T_2$. Instead of representing this with
$|S_1|\cdot |T_2|$ direct edges $(s_i \in S_1,t_j \in T_2)$,
we introduce an intermediate ``pivot node'' $v$ and use only
$|S_1|+|T_2|$
edges $(s_i \in S_1, v)$ and $(v, t_j \in T_2)$.

Then we continue \emph{recursively} with the remaining partition
pairs $(S_1, T_1)$ and $(S_2, T_2)$.
(Note that $(S_2, T_1)$ cannot contain joining tuples by construction.)
Each recursive step will create a new intermediate node
connecting a set of source and target nodes,
therefore the \TLFG has depth $2$.

As the pivot, we use the median of the \emph{distinct} join-attribute
values appearing in the tuples in both input partitions.
E.g., for multiset $\{1, 1, 1, 1, 2, 3, 3\}$ the set of distinct values
is $\{1, 2, 3\}$ and hence the median is 2.
This pivot is easy to find in $\O(n)$ time if the relations have
been sorted on the join attributes beforehand.
Since each partition step cuts the number of distinct values per
partition in half, it takes $\O(\log n)$ steps until we reach the
base case where all input tuples in a partition share the same
join-attribute value and the recursion terminates.
Overall, the algorithm takes time $\O(n \log n)$ to construct
a \TLFG of size $\O(n \log n)$ and depth $2$.
It is easy to see that there is exactly one path from
each source to joining target node, hence the \TLFG is
\emph{duplicate-free}.

\begin{example}
\Cref{fig:Inequality_partitioning} illustrates the approach, 
with dotted lines showing how the relations are partitioned.
Initially, we create partitions containing the values $\{1, 2, 3\}$ and $\{4, 5, 6\}$ respectively.
The source nodes containing $A$-values of the first partition are connected to target nodes containing $B$-values of the second partition via the intermediate node $v_3$.
The first partition is then recursively split into $\{1\}$ and $\{2, 3\}$.
Even though these new partitions are uneven with $2$ and $4$ nodes respectively, 
they contain roughly the same number of distinct values (plus or minus one).
\end{example}

\introparagraph{Other inequality types}
The construction for greater-than ($>$) is symmetric, connecting
$S_2$ to $T_1$ instead of $S_1$ to $T_2$.
For $\leq$ and $\geq$, we only need to modify handling of the base
case of the recursion: instead of simply returning from the last
call (when all tuples in a partition have the same join-attribute
value), the algorithm connects the corresponding source and target
nodes via an intermediate node (like for equality predicates). 

\begin{lemma}
\label{lem:inequality_binary}
Let $\theta$ be an inequality predicate
for relations $S, T$ of total size $n$.
A duplicate-free \TLFG of $S \bowtie_\theta T$
of size $\O(n \log n)$ and depth $2$ can be constructed in $\O(n \log n)$ time.
\end{lemma}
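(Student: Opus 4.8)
The plan is to formalize the binary-partitioning construction already sketched in the prose and verify that it yields a valid \TLFG with the claimed size, depth, and construction time. First I would specify the recursive procedure precisely: given partitions $S', T'$ of the two relations (initially $S' = S$, $T' = T$), if all tuples in $S' \cup T'$ share a single distinct join-attribute value, return (the base case contributes nothing for strict $<$); otherwise compute the pivot $v$ as the median of the distinct join values occurring in $S' \cup T'$, split $S'$ into $S_1 = \{s : s.A < v\}$, $S_2 = \{s : s.A \ge v\}$ and likewise $T' $ into $T_1, T_2$ by the $B$-value, create one fresh intermediate node $w$ with edges $v_s \to w$ for all $s \in S_1$ and $w \to v_t$ for all $t \in T_2$, and recurse on $(S_1, T_1)$ and $(S_2, T_2)$.

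Next I would check the three \TLFG axioms. Acyclicity and the source-only-outgoing / target-only-incoming conditions are immediate from the construction: every source node has only edges into intermediate pivot nodes, every target node only edges from pivot nodes, and every path has the form $v_s \to w \to v_t$, so depth is exactly $2$. For correctness of the reachability condition I would argue both directions. If $s.A < t.B$, consider the recursion tree and follow the path that keeps $s$ and $t$ together as long as possible; at the first recursive call where they are separated, the pivot $v$ satisfies $s.A < v \le t.B$, so $s \in S_1$ and $t \in T_2$, giving the path $v_s \to w \to v_t$. (They must separate at some point before the base case, since $s.A \ne t.B$ means they cannot both land in a partition whose distinct values are a single value — unless they were already separated earlier, which is the same conclusion.) Conversely, any path $v_s \to w \to v_t$ arises from a pivot step with $s \in S_1$, $t \in T_2$, whence $s.A < v \le t.B$. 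Moreover this separation point is \emph{unique} — it is determined by the deterministic pivot sequence — so there is exactly one such path, giving the duplication factor $1$ and hence duplicate-freeness. The variants for $>$, $\le$, $\ge$ follow by the symmetric / base-case modifications described in the text, with the same argument.

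For the size and time bounds I would use the ``distinct values'' potential. Let the subproblem at a node of the recursion tree have $m$ distinct join-attribute values among $S' \cup T'$; choosing the median of these splits them into two groups of at most $\lceil m/2 \rceil$ each, so the recursion depth is $\O(\log n)$ and the subproblems at each fixed level partition the distinct values (hence also the tuples) of $S \cup T$. Thus the total number of tuples summed over all subproblems at one level is $\O(n)$, the number of edges created at one level is $\O(n)$, and summing over $\O(\log n)$ levels gives $\O(n \log n)$ edges and intermediate nodes; together with $\O(n)$ source and target nodes this is size $\O(n \log n)$. The same level-by-level accounting bounds the running time by $\O(n \log n)$, \emph{provided} each subproblem's work is linear in its size — which requires that after one global $\O(n \log n)$ sort of $S$ and $T$ on their join attributes, each median computation and each partition step is done by a linear scan that preserves sortedness; I would note this explicitly.

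The main obstacle is the correctness argument for the reachability condition — specifically pinning down that for a joining pair $(s,t)$ the recursion does separate them at a pivot step strictly before any base case, and that this happens exactly once so no spurious extra paths appear. This hinges on the pivot being chosen from the \emph{distinct} values in the \emph{current} subproblem (not a global median) and on the base case being reached only when a single distinct value remains; the argument is short but it is where an off-by-one in the definition of the pivot or the base case would break either correctness or duplicate-freeness, so that is the step I would write out most carefully.
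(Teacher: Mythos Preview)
Your proposal is correct and follows essentially the same approach as the paper's proof: an initial $\O(n\log n)$ sort, then binary partitioning on the median of distinct values, with the recursion-tree accounting (each level has total size $n$, height $\O(\log n)$) giving the $\O(n\log n)$ size and time bounds, and the ``unique separation point'' observation establishing duplicate-freeness. If anything, your write-up is more careful than the paper's own proof, which dispatches correctness in one sentence (``by induction'') and states duplicate-freeness via the invariant that no path yet exists between tuples still in the same recursive call---which is equivalent to your uniqueness-of-separation argument.
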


\subsection{Conjunctions}
\label{sec:conjunctions}

\TLFG construction for conjunctions can be integrated elegantly
into the recursive binary partitioning.

\begin{example}
\label{ex:conjunction}
Consider join condition $S.A < T.C \wedge S.B > T.D$
for relations $S(A, B), T(C, D)$ as shown in \cref{fig:Conjunction_1}.
The algorithm initially considers the first inequality $S.A < T.C$,
splitting the relations into $S_1$, $T_1$, $S_2$, $T_2$
as per the binary partitioning method (see \cref{sec:inequality}).
All pairs $(s_i \in S_1, t_j \in T_2)$ satisfy
$S.A < T.C$, but not all of them satisfy the other conjunct
$S.B > T.D$. To correctly connect the source and target nodes, we
therefore run the same binary partitioning algorithm on input partitions
$S_1$ and $T_2$, but now with predicate $S.B > T.D$ as illustrated by
the diagonal blue edge in \cref{fig:Conjunction_1};
the resulting graph structure
is shown in \cref{fig:Conjunction_2}.
For the remaining partition pairs $(S_1, T_1)$ and $(S_2, T_2)$,
the recursive call still needs to enforce both conjuncts as illustrated
by the orange edges in \cref{fig:Conjunction_1}.
\end{example}

\begin{figure}[t]
\centering
\begin{subfigure}[t]{.51\linewidth}
    \centering
    \includegraphics[height=4cm]{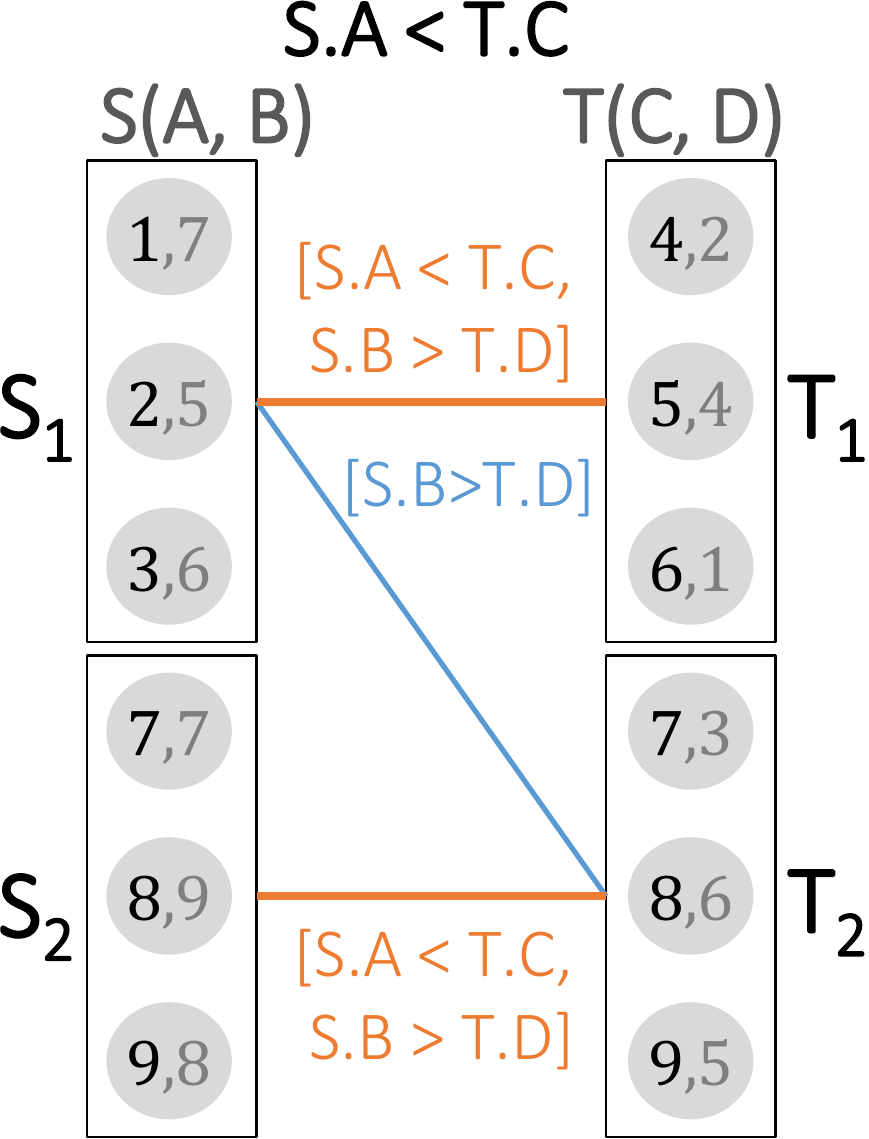}
    \caption{Binary partitioning and recursions.}
    \label{fig:Conjunction_1}
\end{subfigure}%
\hfill
\begin{subfigure}[t]{.42\linewidth}
    \centering
    \includegraphics[height=4cm]{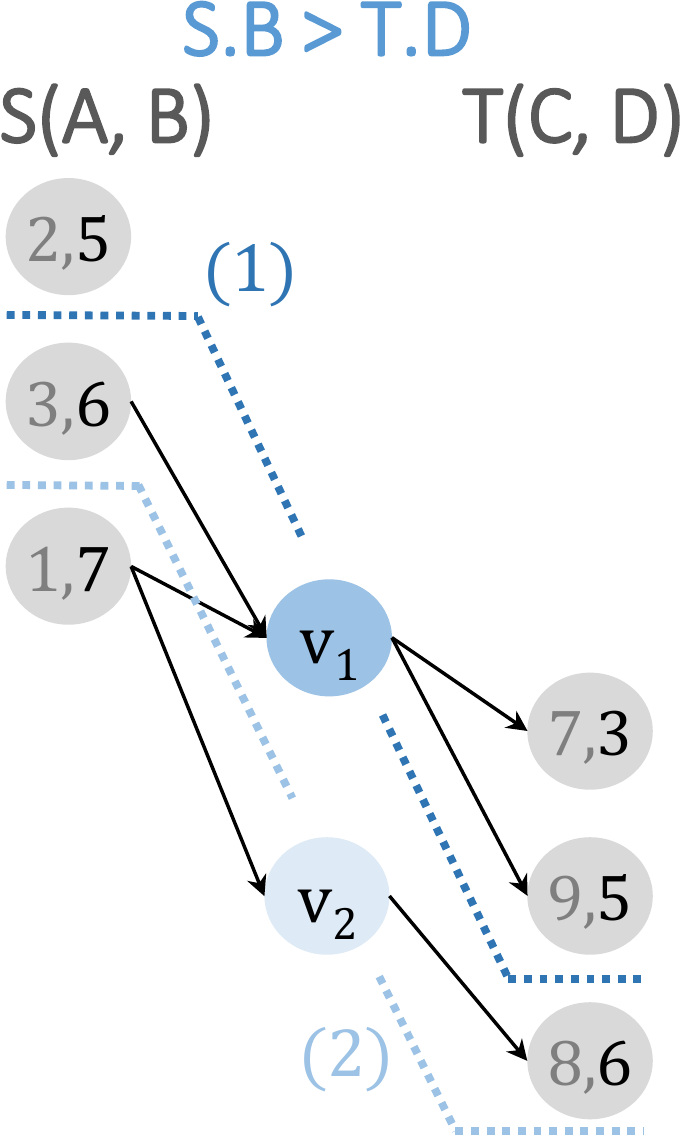}
    \caption{Handling the next predicate.}
    \label{fig:Conjunction_2}
\end{subfigure}
\caption{\Cref{ex:conjunction}: Steps of the conjunction algorithm for two inequality predicates on $S(A, B), T(C, D)$. 
Node labels depict $A, B$ values (left) or $C, D$ values (right).}
\label{fig:Conjunction}
\end{figure}

\introparagraph{Strict inequalities}
The example generalizes in a straightforward way to the conjunction of
any number of strict inequalities as shown in \cref{alg:conj}.
We note that the order in which the predicates are handled does not impact the asymptotic analysis,
but in practice, handling the most selective predicates first is bound to give better performance.
Whenever two partitions are guaranteed to satisfy a conjunct, that
conjunct is removed from consideration in the next recursive call
(\cref{alg_line_conj:next}).
An intermediate node for the pivot and the corresponding edges connecting
it to source and target nodes are only added to the \TLFG when
no predicates remain
(\crefrange{alg_line_conj:conn1}{alg_line_conj:conn3}).
Overall, we perform two recursions simultaneously.
In one direction, we make recursive calls on smaller partitions of the data
and the same set of predicates
 (\cref{alg_line_conj:recursion_horizontal_1,alg_line_conj:recursion_horizontal_2}).
In the other direction, when the current predicate is satisfied
for a partition pair, $\mathrm{\NextCond}()$ is called with one less
predicate (\cref{alg_line_conj:next}).
The recursion stops either when we are left with $1$ join value
(base case for binary partitioning)
or we exhaust the predicate list
(base case for conjunction).
Finally, notice that each time a new predicate is processed by a
recursive call, the join-attribute values in the corresponding partitions
are sorted according to the new attributes (\cref{alg_line_conj:sort})
to find the pivot.

\begin{algorithm}[tb]
\setstretch{0.85}   %
\SetAlgoLined
\LinesNumbered
\SetKwProg{myproc}{Procedure}{}{}
\SetKwFunction{nextc}{\NextCond}
\SetKwFunction{ineq}{\IneqBinaryFun}
\SetKwFunction{band}{\BandFun}
\textbf{Input}: Relations $S, T$, nodes $v_s, v_t$ for $s \in S, t \in T$, \\\phantom{Inputt: }Conjunction $\theta = \bigwedge_{i=1}^{p} \theta_i$, where $\theta_1 = S.A < T.B$\\
\textbf{Output}: A \TLFG of the join $S \bowtie_\theta T$\;
Call \nextc($S, T, \theta$)\;

\myproc{\nextc{$S, T, (S.A < T.B) \land \bigwedge_{i=2}^{p} \theta_i$}}
{
    $S', T'$ = $S, T$ sorted by attributes $A$ and $B$, respectively \label{alg_line_conj:sort}\;
    \ineq($S'$, $T'$, $(S.A < T.B) \land \bigwedge_{i=2}^{p} \theta_i$)\;
}

\SetKwFunction{distinct}{vals}
\myproc{\ineq{$S, T, (S.A < T.B) \land \bigwedge_{i=2}^{p} \theta_i$}}{
$\delta =$ \distinct{S.A $\cup$ T.B} \algocomment{  Number of distinct A, B values}\;
\lIf{$\delta == 1$}{
\KwRet \algocomment{  Base case for binary partitioning} \label{alg_line_conj:singleValue}
}
Partition $S, T$ into $(S_1, S_2), (T_1, T_2)$ with median
distinct value as pivot\;

\If{$p==1$}
{
            \algocomment{  Base case for \#predicates: connect $S_1$ to $T_2$}\;
            Materialize intermediate node $x$ \label{alg_line_conj:conn1}\;
            \lForEach{$s$ in $S_1$}{
                    Create edge $v_s \longrightarrow x$ \label{alg_line_conj:conn2}
            }
            \lForEach{$t$ in $T_2$}{
                    Create edge $x \longrightarrow v_t$ \label{alg_line_conj:conn3}
            }  
}
\Else
{
    \algocomment{  Check $S_1 \rightarrow T_2$ against the rest of the predicates}\;
    \nextc($S_1$, $T_2$, $\bigwedge_{i=2}^{p} \theta_i$) \label{alg_line_conj:next}\;
}
\algocomment{  Recursive calls on horizontal partitions, same predicates}\;
\ineq($S_1$, $T_1$, $(S.A < T.B) \land \bigwedge_{i=2}^{p} \theta_i$) \label{alg_line_conj:recursion_horizontal_1}\;
\ineq($S_2$, $T_2$, $(S.A < T.B) \land \bigwedge_{i=2}^{p} \theta_i$) \label{alg_line_conj:recursion_horizontal_2}\;
}
\caption{Factorizing a conjunction of $p$ strict inequalities}
\label{alg:conj}
\end{algorithm}

\introparagraph{Non-strict inequalities}
Like for a single predicate, we only need to modify handling of the base
case when all join-attribute values in a partition are the same.
While a strict inequality is not satisfied and thus no edges are added
to the \TLFG, the non-strict one is satisfied for all pairs of source and
target nodes in the partition.
Hence instead of exiting the recursive call (\cref{alg_line_conj:singleValue}),
the partition pair is treated like the $(S_1, T_2)$ case
(\crefrange{alg_line_conj:conn1}{alg_line_conj:next}).

\introparagraph{Equalities}
If the conjunction contains both equality and inequality predicates,
then we reduce the problem to an inequality-only conjunction by first
partitioning the inputs into equivalence classes according
to all equality predicates (see \cref{fig:Equality_grouping}).
Then the inequality-only algorithm introduced above is executed on each
of these partitions. Since the equality-based partitioning takes linear time
and space, complexity is determined by the inequality predicates.

\begin{lemma}
\label{lem:conjunction}
Let $\theta$ be a conjunction of $p$ inequality and any number of 
equality predicates
for relations $S, T$ of total size $n$.
A duplicate-free \TLFG of $S \bowtie_\theta T$
of size $\O(n \log^p n)$ and depth $2$ can be constructed in $\O(n \log^p n)$ time.
\end{lemma}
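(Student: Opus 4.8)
The plan is to analyze \Cref{alg:conj} directly by induction on the number of predicates $p$, establishing size, construction time, depth, and the duplicate-free property in turn. First I would set up the recursion. For a fixed predicate list of length $p$, the procedure \IneqBinaryFun{} is exactly the binary partitioning of \Cref{sec:inequality}: on a partition pair with $\delta$ distinct join-attribute values it splits into two halves with $\approx \delta/2$ distinct values each (plus-or-minus one), recurses ``horizontally'' on $(S_1,T_1)$ and $(S_2,T_2)$ with the same $p$ predicates, and on the cross pair $(S_1,T_2)$ either adds one intermediate node with $|S_1|+|T_2|$ incident edges (if $p=1$) or fires a ``vertical'' recursive call via \NextCond{} with $p-1$ predicates. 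So along the horizontal direction the usual quicksort-style recursion tree has depth $\O(\log n)$ and the total work/size contributed at the current predicate level is $\O(n\log n)$, since each tuple appears in at most one partition per level and there are $\O(\log n)$ levels — the resorting on \cref{alg_line_conj:sort} only adds an $\O(n\log n)$ term per vertical call, absorbed into the bound.

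Next, the size and time bound by induction. Let $T_p(n)$ be the worst-case size of (equivalently, construction time for) the \TLFG produced by \NextCond{} on inputs of total size $n$ with $p$ predicates. The base case $p=1$ is \Cref{lem:inequality_binary}: $T_1(n)=\O(n\log n)$. For the inductive step: at each node of the horizontal partitioning tree for the first predicate, the cross pair $(S_1,T_2)$ triggers a call to \NextCond{} with $p-1$ predicates on a sub-instance; summing the sizes of those sub-instances over one level of the horizontal tree gives $\O(n)$ (each input tuple lands in at most one $S_1$ or $T_2$ per level), and there are $\O(\log n)$ levels, so the vertical calls at this predicate contribute $\sum \O(|{\cdot}|\log^{p-1}|{\cdot}|) = \O(n \log n \cdot \log^{p-1} n) = \O(n \log^p n)$ by the induction hypothesis and concavity/superadditivity of $x\log^{p-1}x$. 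Adding the $\O(n\log n)$ own cost of the horizontal partitioning yields $T_p(n) = \O(n\log^p n)$. The depth argument is immediate and does not grow with $p$: every edge added to the \TLFG is created in \crefrange{alg_line_conj:conn1}{alg_line_conj:conn3}, i.e., of the form $v_s \to x$ or $x \to v_t$ for a single intermediate pivot node $x$; no intermediate node is ever connected to another intermediate node, so every source-to-target path has length exactly $2$, giving depth $2$. The equality case is handled as stated: pre-partition into equivalence classes in $\O(n)$ time and run the inequality-only algorithm per class, which does not change the asymptotics because the classes are disjoint.

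Finally, correctness and the duplicate-free property, again by induction on $p$. For correctness I would argue that a path $v_s \to x \to v_t$ exists iff $s,t$ satisfy $\theta$: such a path is created only when $s \in S_1$, $t \in T_2$ at the bottom of a chain of vertical calls that has verified the first predicate at its partition step, then $\theta_2$, and so on down to $\theta_p$; conversely, if $s,t$ jointly satisfy all $p$ conjuncts, then following the horizontal recursion on $\theta_1$ the pair stays together (both in $S_1/T_1$ or both in $S_2/T_2$) until the partition where $s.A < v \le t.B$ separates them into $S_1,T_2$, at which point — by the induction hypothesis applied to the $(p-1)$-predicate sub-instance $(S_1,T_2)$ — a path is produced. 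For duplicate-freeness I need the duplication factor to be $1$: the key observation is that the horizontal partition that first separates a given joining pair $(s,t)$ into $(S_1,T_2)$ is \emph{unique} (it is the unique partition step whose pivot falls in the half-open interval determined by $s.A$ and $t.B$), so there is exactly one vertical call on a sub-instance containing both $s$ and $t$; by induction that sub-instance contributes exactly one $s$-to-$t$ path, and the horizontal-recursion sub-instances $(S_1,T_1),(S_2,T_2)$ never again contain both $s$ and $t$. Hence a unique path overall. The main obstacle I anticipate is not any single step but being careful with the bookkeeping of the two simultaneous recursions — in particular making the ``sum of sub-instance sizes over one horizontal level is $\O(n)$'' claim airtight, since the partitions are balanced in \emph{distinct values} but not in \emph{tuple counts}; the fix is to charge costs to the $\O(\log n)$ levels of the horizontal tree (each tuple appears once per level) rather than to the recursion depth of a single branch, exactly as in the analysis behind \Cref{lem:inequality_binary}.
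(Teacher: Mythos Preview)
Your proposal is correct and follows essentially the same approach as the paper: induction on $p$, base case via \Cref{lem:inequality_binary}, and the inductive step by charging the vertical $(p-1)$-predicate calls to the $\O(\log n)$ levels of the horizontal binary-partitioning tree (whose subproblem sizes at each level sum to $n$), with equalities handled by a preliminary hash-partition. Your treatment of correctness and the duplicate-free property (unique separating pivot for each joining pair) is actually more explicit than the paper's, which dispatches those points in a single sentence; one small quibble is that ``concavity'' is the wrong word---you want superadditivity of $x\log^{p-1}x$, or more simply just $\log^{p-1} n_i \le \log^{p-1} n$---but the argument itself is sound.
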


\subsection{Disjunctions}
\label{sec:disjunctions}

Given a join condition that can be expressed as a disjunction
$P = \bigvee_{i} P_i$ where $G_i$ is the \TLFG for $P_i$, we construct the
\TLFG $G$ for $P$ by simply ``unioning'' the $G_i$, i.e., $G$'s set of nodes
and edges are the unions of the node and edge sets of the $G_i$, respectively.
Note that the auxiliary ``pivot'' nodes added by the binary partitioning algorithm
to the $G_i$ are all distinct. Hence if there is a path from source $s$ to target
$t$ in $j$ of the individual $G_i$, then there are exactly $j$ different
paths from $s$ to
$t$ in $G$. This creates duplicate join results when traversing $G$ during the
enumeration phase. Fortunately, since the number of ``duplicate'' paths depends
only on the number of terms in $P$ and hence query size (not input size),
the number of duplicates per join output tuple is constant.

\begin{lemma}
\label{lem:disjunctions}
Let $\theta$ be a disjunction of predicates 
$\theta_1, \ldots, \theta_p$
for relations $S, T$.
If for each $\theta_i, i \in [p]$ we can construct a 
duplicate-free \TLFG of $S \bowtie_{\theta_i} T$
of size $\O(\mathcal{S}_i)$ and depth $\depth_i$ in $\O(\mathcal{T}_i)$ time,
then we can construct a \TLFG of $S \bowtie_{\theta} T$
of size $\O(\sum_i \mathcal{S}_i)$ and depth $\max_i \depth_i$ in $\O(\sum_i \O(\mathcal{T}_i))$ time.
The duplication factor of the latter is at most $p$.
\end{lemma}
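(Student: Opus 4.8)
The plan is to prove \Cref{lem:disjunctions} directly from the definition of a \TLFG, by taking the disjoint union of the individual graphs $G_i$ and verifying the three defining properties plus the claimed bounds. First I would set up notation: assume each $G_i = (V_i, E_i)$ is a \TLFG for $S \bowtie_{\theta_i} T$ with source nodes $\{v_s : s \in S\}$ and target nodes $\{v_t : t \in T\}$ shared across all $i$ (these are determined by the tuples, not by the predicate), while the intermediate nodes of the different $G_i$ are made pairwise disjoint by relabeling — this is the crucial bookkeeping step and it is legitimate because intermediate nodes carry no external meaning. Then define $G = (V, E)$ with $V = \bigcup_i V_i$ and $E = \bigcup_i E_i$, where the union identifies the shared source/target nodes but keeps the intermediate nodes separate.

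Next I would check the \TLFG axioms for $G$. Acyclicity: any cycle in $G$ would have to stay within a single $G_i$, since the only shared nodes are source nodes (no incoming edges in any $G_i$) and target nodes (no outgoing edges in any $G_i$), so a path cannot cross from one $G_i$ to another except by ending at a target or starting at a source; hence a cycle lies entirely in some $G_i$, contradicting its acyclicity. Property (2), that source nodes have only outgoing and target nodes only incoming edges, is immediate since it holds in each $G_i$ and the edge set is just the union. Property (3), the path-existence characterization: a path from $v_s$ to $v_t$ in $G$ uses edges from a single $G_i$ (again because it cannot transit through a shared node mid-path), so such a path exists iff it exists in some $G_i$ iff $(s,t)$ satisfies some $\theta_i$ iff $(s,t)$ satisfies $\theta = \bigvee_i \theta_i$. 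For the depth: the longest path in $G$ is the longest path in some $G_i$, so $d(G) = \max_i \depth_i$.

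For the size and time bounds: $|V| + |E| \le \sum_i (|V_i| + |E_i|) = \O(\sum_i \mathcal{S}_i)$ (the shared source/target nodes only make this smaller), and the construction time is $\O(\sum_i \mathcal{T}_i)$ plus $\O(\sum_i \mathcal{S}_i)$ for forming the union, which is absorbed since $\mathcal{T}_i \ge \mathcal{S}_i$. Finally, the duplication factor: the number of $v_s$-to-$v_t$ paths in $G$ equals the sum over $i$ of the number of such paths in $G_i$; since each $G_i$ is duplicate-free this per-$G_i$ count is $0$ or $1$, so the total is at most $p$. I expect the only genuine subtlety — and thus the "hard part," though it is more a matter of care than of difficulty — to be the argument that no path in the union can alternate between different $G_i$, i.e., justifying rigorously that the shared nodes are exactly the sources and targets and that these have the one-sided-edge property; once that observation is nailed down, properties (1) and (3) and the depth bound all follow from it uniformly.
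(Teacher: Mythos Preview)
Your proposal is correct and follows essentially the same approach as the paper: form the union of the individual \TLFGs with shared source/target nodes and disjoint intermediate nodes, then observe that no path can cross between different $G_i$ because the only shared nodes are sources (with no incoming edges) and targets (with no outgoing edges). Your write-up is in fact more detailed than the paper's own proof, which simply states that the component \TLFGs ``do not share any nodes or edges other than the endpoints'' and derives correctness, depth, and the duplication bound from that; the acyclicity check and the remark that $\mathcal{T}_i \ge \mathcal{S}_i$ are nice additions on your part.
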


We can now factorize any DNF of equality and inequality predicates by applying
the conjunction construction to each conjunct, and then the union construction
for their disjunction.

\section{Improvements and Extensions}
\label{sec:improvements}

We propose improvements that lead to our main result: strong worst-case
guarantees for $\TT(k)$ and $\MEM(k)$ for acyclic join queries with
inequalities, which we then extend to cyclic joins.

\subsection{Improved Factorization Methods}
\label{sec:fact_methods}

We explore how to reduce the size of the \TLFG for inequalities. 

\introparagraph{Multiway partitioning}
When the join predicate on an edge of the theta-join tree is a simple
inequality like $S.A < T.B$, we can split the set of input tuples
into $\O(\sqrt{n})$ partitions
per step---instead of 2 partitions for binary partitioning (\Cref{sec:inequality})---hence the name \emph{multiway partitioning}.
This results in a smaller \TLFG
of size  $\O(n \log\log n)$ (vs. $\O(n \log n)$ for binary partitioning)
and depth $3$ (vs. $2$). 
Unfortunately, it is unclear how to
generalize this idea to a conjunction of inequalities.

\introparagraph{Shared ranges}
A simple inequality can be encoded even more compactly with $\O(n)$ edges
by exploiting the transitivity of ``$<$'' as illustrated
in \Cref{fig:Inequality_sharing}. Intuitively, our \emph{shared ranges}
method creates a hierarchy of intermediate nodes, each one representing
a range of values. Each range is entirely contained in all those that
are higher in the hierarchy, thus we connect the intermediate nodes in a chain.
The resulting \TLFG has size and depth $\O(n)$.
The latter causes a high delay between consecutive join answers.
From \Cref{thm:genericComplexity} and the fact that we need to sort to construct the \TLFG, we obtain
$\TT(k) = \O(n \log n + n + k \log k + k n) = \O(n \log n + k n)$ and
$\MEM(k) = \O(n + k n) = \O(k n)$. Compared to binary partitioning's
$\O(n \log n + k \log k)$ and $\O(n \log n + k)$
(\cref{thm:genericComplexity}, \cref{lem:inequality_binary}),
respectively, space complexity is reduced by about a factor $\log n$,
and without affecting time
complexity, only for small $k$, i.e., $k = \smallO(\log n)$.
For larger $k = \Omega(n)$ both space and time complexity are worse
by (almost) a factor $n$. (Recall that
$k = \O(n^\ell)$ for a join of $\ell$ relations.)
Moreover, like for multiway partitioning, it is not clear how to
generalize this construction to conjunctions of inequalities.

\introparagraph{Non-Equality and Band Predicates}
A non-equality predicate can be expressed as a disjunction of 2 inequalities;
a band predicate as a conjunction of 2 inequalities. Hence both can be handled
by the techniques discussed in \Cref{sec:inequalities}, at the cost of
increasing query size by up to a constant factor. 
This can be avoided by a specialized construction that leverages the structure
of these predicates. It is similar to the binary partitioning for an inequality
(and hence omitted due to space constraints)
and achieves the same size and depth guarantees for the \TLFG.

\subsection{Putting Everything Together}

Using multiway partitioning and the specialized techniques for non-equality and
band predicates yields:

\begin{lemma}
\label{lem:tlfg_improved}
Let $\theta$ be a simple inequality, non-equality, or band predicate
for relations $S, T$ of size $\O(n)$.
A duplicate-free \TLFG for $S \bowtie_\theta T$
of size $\O(n \log\log n)$ and depth $3$ can be constructed in $\O(n \log n)$ time.
\end{lemma}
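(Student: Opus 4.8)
The plan is to establish the three guarantees (depth $3$, size $\O(n \log\log n)$, construction time $\O(n \log n)$) separately, treating the simple inequality case first via multiway partitioning and then reducing the non-equality and band cases to it. First I would describe the \emph{multiway partitioning} \TLFG for $S.A < T.B$ in detail, since it is the workhorse. After sorting $S$ on $A$ and $T$ on $B$ (the $\O(n \log n)$ term), at each recursive step we split the at most $m$ \emph{distinct} join-attribute values currently in play into $\Theta(\sqrt m)$ contiguous blocks of roughly $\sqrt m$ distinct values each, using the sorted order (linear scan to find the $\sqrt m$ pivots). For each ordered pair of blocks $(B_i, B_j)$ with $i < j$, every $S$-tuple whose $A$-value lies in $B_i$ is strictly less than every $T$-tuple whose $B$-value lies in $B_j$, so we introduce a single intermediate ``pair node'' $x_{ij}$ and connect all such $S$-tuples to $x_{ij}$ and $x_{ij}$ to all such $T$-tuples. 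Within each diagonal block $B_i$ paired with itself we recurse. The two-level block structure (source $\to$ first pivot level $\to$ second pivot level $\to$ target is not needed; rather source $\to x_{ij} \to$ target for cross pairs) gives depth $2$ for the cross edges, but the recursion on diagonal blocks, when spelled out, needs one extra layer to glue the levels consistently, yielding depth~$3$; I would make this layer bookkeeping precise, since that is where the ``$3$ vs.\ $2$'' comes from.

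For the \textbf{size bound} I would set up the recurrence on the number of edges. If $f(m)$ bounds the edge count contributed by a subproblem with $m$ distinct values spanning a total of (say) $t$ tuples, then one step contributes $\O(t \cdot \sqrt m)$ edges in the worst case — wait, that is too many; the correct accounting is that each tuple is connected to at most one pivot node \emph{per level of the pair structure} and there are $\O(\sqrt m)$ blocks, but a single tuple in block $B_i$ participates in the cross-pairs $(B_i, B_j)$ for $j > i$ only through\, the pivots $x_{ij}$, so it gets $\O(\sqrt m)$ edges at this step. Summing over tuples gives $\O(t\sqrt m)$, which is not $\O(t)$; the resolution (and the reason multiway partitioning works) is the standard trick of also introducing a \emph{second} pivot layer so that block $B_i$ connects once to a ``left spine'' node and block $B_j$ once to a ``right spine'' node, reducing the per-step contribution to $\O(t)$ edges. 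Then the depth of distinct-value sets satisfies $m \to \sqrt m$, so after $r$ steps we have $m^{1/2^r}$, i.e., the recursion tree has depth $\O(\log\log n)$, and since every tuple contributes $\O(1)$ edges per level, the total is $\O(n \log\log n)$. The same recurrence bounds the work, except the initial sort dominates at $\O(n \log n)$; re-sorting is not needed at deeper levels because contiguous sub-ranges of a sorted array are still sorted.

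For \textbf{duplicate-freeness} I would argue that for any $s \in S, t \in T$ with $s.A < t.B$, there is a \emph{unique} recursion node at which the distinct values $s.A$ and $t.B$ first land in different blocks $B_i, B_j$ ($i<j$); at every shallower level they were in the same block (else they would have separated earlier), and at that unique level the single path $v_s \to (\text{left spine}) \to x_{ij} \to (\text{right spine}) \to v_t$ is created, while no cross edges are created for them at any other level — hence exactly one $v_s$--$v_t$ path. Finally, for \textbf{non-equality} $S.A \neq T.B$ and \textbf{band} $|S.A - T.B| < \epsilon$, rather than paying the query-size blowup of expressing them as a disjunction resp.\ conjunction of two inequalities, I would give the specialized variant alluded to in \Cref{sec:fact_methods}: it is the \emph{same} multiway partitioning, but when we split into blocks we directly connect $B_i$ to $B_j$ for $|i-j|$ large enough that the whole block pair is guaranteed to satisfy the predicate (all of $B_i \times B_j$ lies in the band, or entirely outside the equality), recurse on the $\O(1)$ adjacent block pairs that straddle the boundary, and handle the diagonal base case as for $\leq$/equality; the analysis is verbatim the inequality one, so the same size, depth, and time bounds hold. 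The main obstacle is getting the two-pivot-layer construction and its layer assignment exactly right so that the per-step edge contribution is genuinely $\O(t)$ (not $\O(t\sqrt m)$) and the depth is exactly~$3$; once that is pinned down, the recurrences and the uniqueness-of-separating-level argument are routine.
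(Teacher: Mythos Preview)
Your multiway-partitioning idea for the single inequality is the right one and matches the paper, but the actual construction you sketch is not quite it. The paper does \emph{not} use pair nodes $x_{ij}$; it creates one node $x_i$ per source block and one node $y_j$ per target block, with edges $v_s \to x_i$ for $s \in S_i$, $x_i \to y_j$ for $i<j$, and $y_j \to v_t$ for $t \in T_j$ (so $\O(n)$ source/target edges plus $\O(\rho^2)=\O(n)$ middle edges per level). That is where depth $3$ comes from---two intermediate layers, not ``one extra layer to glue the recursion.'' Your final path $v_s \to (\text{left spine}) \to x_{ij} \to (\text{right spine}) \to v_t$ has depth $4$, so as written your construction does not meet the statement. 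You correctly flag this as ``the main obstacle,'' but you should resolve it by dropping $x_{ij}$ entirely in favor of the $x_i \to y_j$ bipartite pattern.

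For non-equality your variant (connect all $i \neq j$ cross-block pairs, recurse on diagonals) is fine and essentially equivalent to the paper's approach of running the inequality construction once for $<$ and once for $>$ and taking the union; both are duplicate-free because the two directions are disjoint.

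The band case is where you diverge from the paper and where there is a real gap. The paper does \emph{not} adapt multiway partitioning to bands; instead it groups the $T$-tuples into maximal $\epsilon$-intervals, assigns each $S$-tuple to at most three such groups, and within each group reduces the band to a \emph{single} inequality (because one side of the band is automatically satisfied), then calls the inequality routine. Your proposal---connect block pairs fully inside the band, ``recurse on the $\O(1)$ adjacent block pairs that straddle the boundary''---has two problems. First, for a band the fully-satisfying pairs are those with $|i-j|$ \emph{small}, not large, so the direction is reversed. Second, the number of boundary block pairs is $\Theta(\sqrt m)$ (a constant number per block $i$), not $\O(1)$, so you cannot claim ``the analysis is verbatim the inequality one''; you would need to argue that the total tuple mass in all boundary pairs is still $\O(n)$ per level and that no $(s,t)$ pair gets connected at more than one level, neither of which you do. The paper's $\epsilon$-grouping reduction avoids all of this and gives the bound directly.
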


Applying the approach for a DNF of inequalities (\Cref{sec:inequalities}),
but using the specialized \TLFGs for non-equality and band predicates
and multiway partitioning for the base case of the conjunction
construction (when only one predicate remains), we obtain:

\begin{theorem}[Main Result]
\label{th:enumeration_improved}
Let $Q$ be a full acyclic theta-join query over a database $D$ of size $n$
where all the join conditions are DNF formulas of
equality, inequality, non-equality, and band predicates.
Let $p$ be the maximum number of predicates, excluding equalities,
in a conjunction of a DNF on any edge of the theta-join tree.
Ranked enumeration of the answers to $Q$ over $D$ can
be performed with $\TT(k) = \O(n \log^p n + k \log k)$.
The space requirement
is $\MEM(k) = \O(n \log^{p-1} n \cdot \log\log n+ k)$.
\end{theorem}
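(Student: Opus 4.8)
The plan is to combine the per-edge factorization results (Lemmas~\ref{lem:conjunction} and~\ref{lem:tlfg_improved}) with the generic enumeration complexity bound (\Cref{thm:genericComplexity}), the way \Cref{lem:inequality_binary} is combined with \Cref{thm:genericComplexity} in the running discussion. First I would fix, for each edge $(S,T)$ of the theta-join tree, its associated join condition $\theta$, which by assumption is a DNF $\bigvee_{j} C_j$ of equality, inequality, non-equality, and band predicates. For each disjunct $C_j$, I first rewrite it to isolate the ``hard'' (non-equality) predicates: group by the equality predicates in linear time (as in the equality construction of \Cref{fig:Equality_grouping}), leaving a conjunction of at most $p$ inequality/non-equality/band predicates per equivalence class. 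I then apply the conjunction construction of \Cref{alg:conj}, but with two modifications relative to \Cref{lem:conjunction}: (i) for the \emph{last} remaining predicate in the conjunction I invoke multiway partitioning rather than binary partitioning, and (ii) non-equality and band predicates are handled by the specialized constructions of \Cref{lem:tlfg_improved} rather than by expanding them into two inequalities.

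The key step is the size/time accounting for a single conjunct. Each of the first $p-1$ predicates contributes a $\O(\log n)$ blow-up factor (one level of binary partitioning recursion), exactly as in the proof of \Cref{lem:conjunction}; the final predicate, handled by multiway partitioning (or the specialized non-equality/band construction), contributes only a $\O(\log\log n)$ factor in size while adding $\O(1)$ to the depth, by \Cref{lem:tlfg_improved}. Hence each conjunct yields a duplicate-free \TLFG of size $\O(n \log^{p-1} n \cdot \log\log n)$ and depth $\O(1)$, constructible in $\O(n \log^{p-1} n \cdot \log n) = \O(n \log^p n)$ time (sorting for the pivots at each of the $p$ recursion levels is the dominant cost). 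Then \Cref{lem:disjunctions} combines the (constantly many, in data complexity) conjunct-\TLFGs into one \TLFG for $\theta$ of the same asymptotic size and depth, with duplication factor at most the number of disjuncts, i.e., $\O(1)$.

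Next I would assemble the global \dpgraph: it has $\ell = \O(1)$ relation layers and $\ell-1$ per-edge \TLFGs, so its total edge count is $|E| = \O(n \log^{p-1} n \cdot \log\log n)$ and its depth is $d = \O(1)$. Plugging into \Cref{thm:genericComplexity} gives $\TT(k) = \O(|E| + k\log k + kd) = \O(n \log^{p-1} n \cdot \log\log n + k\log k)$ and $\MEM(k) = \O(|E| + kd) = \O(n \log^{p-1} n \cdot \log\log n + k)$; adding the $\O(n \log^p n)$ construction time (which dominates $|E|$) yields the claimed $\TT(k) = \O(n \log^p n + k\log k)$ and $\MEM(k) = \O(n \log^{p-1} n \cdot \log\log n + k)$. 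The only remaining subtlety is that the global \TLFG may have duplication factor $\O(1)$ (from disjunctions), so some tree solutions of the \dpgraph yield identical answer tuples; as argued in \Cref{sec:factorized}, these can be filtered on-the-fly with an $\O(1)$-lookup structure on the already-emitted top-$k$ answers, so $\TT(k)$ and $\MEM(k)$ are unaffected.

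I expect the main obstacle to be (i) verifying carefully that interleaving multiway partitioning at the base of the conjunction recursion with binary partitioning at the outer levels preserves the duplicate-free property of each conjunct-\TLFG and does not inflate the depth beyond a constant — i.e., that the correctness argument of \Cref{alg:conj} still goes through when the innermost call is replaced by the multiway/specialized construction — and (ii) pinning down the exponent bookkeeping so that the $\log\log n$ factor sits on $\log^{p-1} n$ rather than $\log^{p} n$, which requires that the multiway partitioning genuinely replaces (not supplements) one level of binary recursion. Both are essentially careful re-runs of the arguments already established for Lemmas~\ref{lem:conjunction}, \ref{lem:disjunctions}, and~\ref{lem:tlfg_improved}, so no genuinely new idea is needed; the theorem is a synthesis.
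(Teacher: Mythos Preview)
Your proposal is correct and follows essentially the same route as the paper: per-edge \TLFG construction via the modified conjunction algorithm (binary partitioning for the outer $p-1$ predicates, multiway/specialized constructions at the base case, equalities stripped first), union for disjunctions via \Cref{lem:disjunctions}, then \Cref{thm:genericComplexity} plus on-the-fly duplicate filtering. The two concerns you flag in (i) and (ii) are exactly the points the paper's proof spells out, and they are indeed just careful re-runs of the arguments for \Cref{lem:conjunction} and \Cref{lem:tlfg_improved}; the paper additionally invokes superadditivity of $n\log^{p-1} n\cdot\log\log n$ to justify the equality-partitioning step, which you describe informally but correctly.
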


\subsection{Cyclic Queries}
\label{sec:cycles}

So far, we have focused only on acyclic queries,
but our techniques are also applicable to cyclic queries with some modifications.
Recall that acyclic queries admit a theta-join tree,
which is found by assigning predicates to the edges of a join tree.
If this procedure fails, we can handle the query as follows:

\introparagraph{Post-processing filter}
An common practical solution for cyclic queries is to ignore some predicates
during join processing, then apply them as a filter on the output.
Specifically, we can remove $\theta_j$ predicates and equality conditions
encoded by the same variable names until the query admits a theta-join tree,
then apply our technique to the resulting acyclic query,
and finally use the removed predicates as a filter.
While this approach is simple to implement,
it can suffer from large intermediate results.
In the worst case, all answers to the acyclic join except the last one may be discarded,
giving us $\TT(k) = \O(n^\ell \log n)$ for an $\ell$-relation cyclic join.

\introparagraph{Transformation to equi-join}
An alternative approach with non-trivial guarantees is to apply our equi-join transformation to the cyclic query,
and then use existing algorithms for ranked enumeration of cyclic equi-joins \cite{tziavelis20vldb}.
We deal with the case where each $\theta_j$ predicate is covered by at most 2 input relations; the general case is left for future work.
To handle that case, we add edges to the join tree as needed (creating a cyclic \emph{theta-join graph})
and assign predicates to covering edges.
To achieve the equi-join transformation, 
we consider all pairs of connected relations in the join graph,
build a \TLFG according to the join condition,
and then materialize 
relations ``in the middle'' as illustrated in \cref{sec:to_equi}.
The resulting query contains only equality predicates, hence is a cyclic equi-join.
Ranked enumeration for cyclic equi-joins is possible with guarantees that depend on a width measure of the query \cite{tziavelis20vldb}.

\begin{example}[Inequality Cycle]
The following triangle query variant
joins three relations with inequalities in a cyclic way:
$Q(A, B, C, D, E, F)$ $\datarule R(A, B), S(C, D), T(E, F), (B < C), (D < E), (F < A)$.
Notice that there is no way to organize the relations in a tree with the inequalities over parent-child pairs. 
However, if we remove the last inequality $(F < A)$,
the query becomes acyclic and a generalized join tree can be constructed.
Thus, we can apply our techniques on that query and filter the answers with the selection condition $(F < A)$.

Alternatively, we can factorize the pairs of relations using our \TLFGs, 
to obtain a cyclic equi-join.
If we use binary partitioning, 
this introduces three new attributes $V_1, V_2, V_3$
and six new $\O(n \log n)$-size relations:
$E_1(A, B, V_1)$, $E_2(V_1, C, D)$, $E_3(C, D, V_2)$, $E_4(V_2, E, F)$, $E_5(E, F, V_3)$, $E_6(V_3, A, B)$.
The transformed query can be shown to have a submodular width
\cite{Marx:2013:THP:2555516.2535926,khamis17panda}
of $5 / 3$,
making ranked enumeration possible with
$\TT(k) = \O((n \log n)^{5/3} + k \log k)$.
\end{example}

\section{Experiments}
\label{sec:exp}

\begin{figure*}[t]
\centering
\footnotesize
\renewcommand{\tabcolsep}{1.3pt}
\begin{tabular}{|l|l|}
\hline
  \textbf{Query} & \textbf{Ranking} \\
 \hline
 $Q_{S1}(\ldots) \datarule S_1(A_1, A_2), S_2(A_3, A_4),
 \ldots, S_\ell(A_{2\ell-1}, A_{2\ell}), 
 (A_{2i} < A_{2i+1})$
 & $\min(W_1 + W_2 + \ldots)$ \\
 
 \hline
 $Q_{S2}(\ldots) \datarule S_1(A_1, A_2), S_2(A_3, A_4), 
 \ldots,
 S_\ell(A_{2\ell-1}, A_{2\ell}), 
 (|A_{2i} - A_{2i+1}| < 50), (A_{2i-1} \neq A_{2i+2})$
 & $\min(W_1 + W_2 + \ldots)$ \\
 
 \hline
 $Q_{T}(\ldots) \datarule \textrm{Item}(O_1, PK_1, SK, L_1, Q_1, P_1, S_1, C_1, R_1), \textrm{Item}(O_2, PK_2, SK, L_2, Q_2, P_2, S_2, C_2, R_2), \ldots,
 (Q_i < Q_{i+1}), (S_i < S_{i+1})$
 & $\min(P_1 + P_2 + \ldots)$ \\
 
 \hline
 $Q_{TD}(\ldots) \datarule \textrm{Item}(O_1, PK_1, SK, L_1, Q_1, P_1, S_1, C_1, R_1), \textrm{Item}(O_2, PK_2, SK, L_2, Q_2, P_2, S_2, C_2, R_2), \ldots,
 (Q_i < Q_{i+1}), (S_i < S_{i+1} \vee C_i < C_{i+1} \vee R_i < R_{i+1})$
 & $\min(P_1 + P_2 + \ldots)$ \\
 
 \hline
 $Q_{R1}(\ldots) \datarule \textrm{Reddit}(N_1, N_2, T_1, S_1, R_1), \textrm{Reddit}(N_2, N_3, T_2, S_2, R_2), \ldots,
 (T_i < T_{i+1})$
 & $\min(S_1 + S_2 + \ldots)$ \\
 
 \hline
 $Q_{R2}(\ldots) \datarule \textrm{Reddit}(N_1, N_2, T_1, S_1, R_1), \textrm{Reddit}(N_2, N_3, T_2, S_2, R_2), \ldots,
 (T_i < T_{i+1}), (S_i > S_{i+1})$
 & $\max(R_1 + R_2 + \ldots)$ \\
 \hline
 $Q_{B}(\ldots) \datarule \textrm{Birds}(I_1, LA_1, LO_1, C_1), 
 \textrm{Birds}(I_2, LA_2, LO_2, C_2)
 (|LA_1 - LA_2| < \epsilon), (|LO_1 - LO_2| < \epsilon)$
 & $\max(C_1 + C_2)$ \\
 \hline
\end{tabular} 
\caption{Queries used in our experiments expressed in Datalog. The head always contains all body variables (no projections).
Length $\ell$ of queries range from $2$ to $10$. Indices $i$ range from $1$ to $\ell-1$.
}
\label{tab:queries}
\end{figure*}

We demonstrate the superiority of our approach for ranked enumeration 
against existing \DBMSs, and even idealized competitors that receive the join output
``for free" as an (unordered) array.

\introparagraph{Algorithms}
We compare 5 algorithms:
\circled{1} \OURS is our proposed approach.
\circled{2} \QUADEQUI is an idealized version of the fairly straightforward reduction to
equi-joins described in \cref{sec:to_equi},
which for each edge $(S,T)$ of the theta-join tree uses the direct \TLFG
(no intermediate nodes) to convert $S \bowtie_\theta T$ to equi-join
$S \bowtie E \bowtie T$ via the edge set $E$ of the \TLFG. Then previous ranked-enumeration
techniques for equi-joins \cite{tziavelis20vldb} can be applied directly.
To avoid any concerns regarding the choice of technique for generating $E$, we
provide it ``for free.'' Hence the algorithm is not charged
for essentially executing theta-joins between all pairs of adjacent relations in the
theta-join tree, meaning the \QUADEQUI numbers reported here represent a
\emph{lower bound} of real-world running time.
\circled{3} \BATCH is an idealized version of the approach taken by state-of-the-art \DBMSs.
It computes the entire join output and puts it into a heap for ranked enumeration.
To avoid concerns about the most efficient join implementation, we give \BATCH the
entire join output ``for free'' as an in-memory array. It simply needs to read those
output tuples (instead of having to execute the actual join) to rank them,
therefore the numbers reported constitute a \emph{lower bound} of real-world running time.
We note that for a join of only $\ell=2$ relations, there is no difference between
\QUADEQUI and \BATCH since they both receive all the query results;
we thus omit \QUADEQUI for binary joins.
\circled{4} \PSQL is the open-source PostgreSQL system. 
\circled{5} \SYSX is a commercial database system that is highly optimized for in-memory computation.

We also compare our factorization methods
\circled{1a} \BINPART, \circled{1b} \MULTIPART, and \circled{1c} \SHAREDRAN
against each other. Recall that the latter two can only be applied to single-inequality
type join conditions. Unless specified otherwise, \OURS is set to \circled{1b} \MULTIPART
for the single-predicate cases and \circled{1a} \BINPART for all others.

\introparagraph{Data}
\circled{S} Our synthetic data generator creates relations $S_i(A_i, A_{i+1}, W_i), \, i \!\geq\! 1$ 
by drawing $A_i, A_{i+1}$ from integers in $[0 \ldots 10^4 - 1]$ 
uniformly at random with replacement, discarding duplicate tuples.
The weights $W_i$ are real numbers drawn from $[0, 10^4)$.
\circled{T} We also use the LINEITEM relation of the TPC-H benchmark \cite{tpch}, keeping the schema
\texttt{Item(OrderKey, PartKey, Suppkey, LineNumber, Quantity, Price, ShipDate, CommitDate, ReceiptDate)}.

\circled{R} For real data, we use a temporal graph \Reddit~\cite{kumar18reddit}
whose $286k$ edges represent posts from a source community to a target community
identified by a hyperlink in the post title.
The schema is \texttt{Reddit(From, To, Timestamp, Sentiment, Readability)}.
\circled{B} \Birds~\cite{birdsOceania} reports bird observations from
Oceania with schema \texttt{Birds(ID, Latitude, Longitude, Count)}.
We keep only the $452k$ observations with a non-empty \texttt{Count} attribute.

\introparagraph{Queries}
We test queries with various join conditions and sizes.
\Cref{tab:queries} gives the Datalog notation and the ranking function.
Some of the queries have the number of relations $\ell$ as a parameter;
for those we only write the join conditions between the
$i^\textrm{th}$ and $(i+1)^\textrm{st}$ relations,
with the rest similarly organized in a chain.
In the full version \cite{tziavelis21full} we give the equivalent SQL queries.

On our synthetic data, $Q_{S1}$ is a single inequality join,
while $Q_{S2}$ has a more complicated join condition that is a conjunction of a band and a non-equality.
On TPC-H, $Q_T$ finds a sequence of items sold by the same supplier with the quantity increasing over time, ranked by the price.
To test disjunctions, we run query $Q_{TD}$, which puts the increasing time
constraint on either of the three possible dates.
Query $Q_{R1}$ computes temporal paths~\cite{huanhuan14temporal} on \Reddit, 
and ranks them by a measure of sentiment such that sequences of negative posts are retrieved first.
Query $Q_{R2}$ uses instead the sentiment in the join condition,
keeping only paths along which the negative sentiment increases.
For ranking, we use readability to focus on posts of higher quality.
Last, $Q_{B}$ is a spatial band join on \Birds that finds pairs of high-count
bird sightings that are close based on proximity.

\introparagraph{Details}
Our algorithms are implemented in Java 8 and executed on an Intel Xeon E5-2643 CPU
running Ubuntu Linux. Queries execute in memory on a Java VM with 100GB of RAM.
If that is exceeded, we report an Out-Of-Memory (OOM) error.
The any-k algorithm used by \OURS and \QUADEQUI is \LAZY \cite{chang15enumeration,tziavelis20vldb}
which was found to outperform others in previous work.
The version of PostgreSQL is 9.5.25.
We set its parameters such that 
it is optimized for main-memory execution and 
system overhead related to logging or concurrency is minimized,
as it is standard in the literature \cite{tziavelis20vldb,bakibayev12fdb}.
To enable input caching for \PSQL and \SYSX, each execution is performed twice and we only measure the second one.
Additionally, we create B-tree or hash indexes for each attribute of the input relations,
while our methods do not receive these indexes. 
Even though the task is ranked enumeration, we still give the database systems a LIMIT clause whenever we measure a specific $\TT(k)$,
and thus allow them to leverage the $k$ value.
All data points we show are the median of 5 measurements.
We timeout any execution that does not finish within $2$ hours.

\subsection{Comparison Against Alternatives}

We will show that our approach has a significant advantage over the competition
when the size of the output is sufficiently large.
We test three distinct scenarios for which large output can occur:
($1$) the size of the database grows,
($2$) the length of the query increases,
and ($3$) the parameter of a band join increases.

\resultbox{%
\introparagraph{Summary}
\circled{1} \OURS is superior when the total output size is large,
even when compared against a lower bound of the running time of the other methods.
\circled{2} \QUADEQUI and 
\circled{3} \BATCH require significantly more memory and are infeasible for many queries.
\circled{4} \PSQL and \circled{5} \SYSX, similarly to \BATCH, 
must produce the entire output, which is very costly.
While \SYSX is clearly faster than \PSQL,
it can be several orders of magnitude slower than our \OURS,
and is outperformed across all tested queries.
}

\begin{figure*}[t]
    \centering
    \begin{subfigure}{\linewidth}
        \centering
        \includegraphics[width=0.73\linewidth]{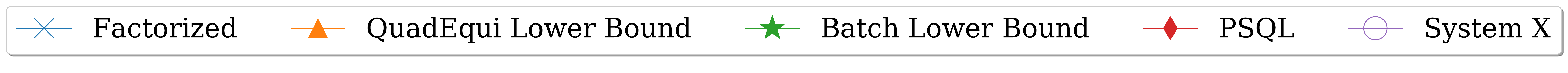}
    \end{subfigure}
    \vspace{-3mm}
    
    \begin{subfigure}{0.25\linewidth}
        \centering
        \includegraphics[height=2.8cm]{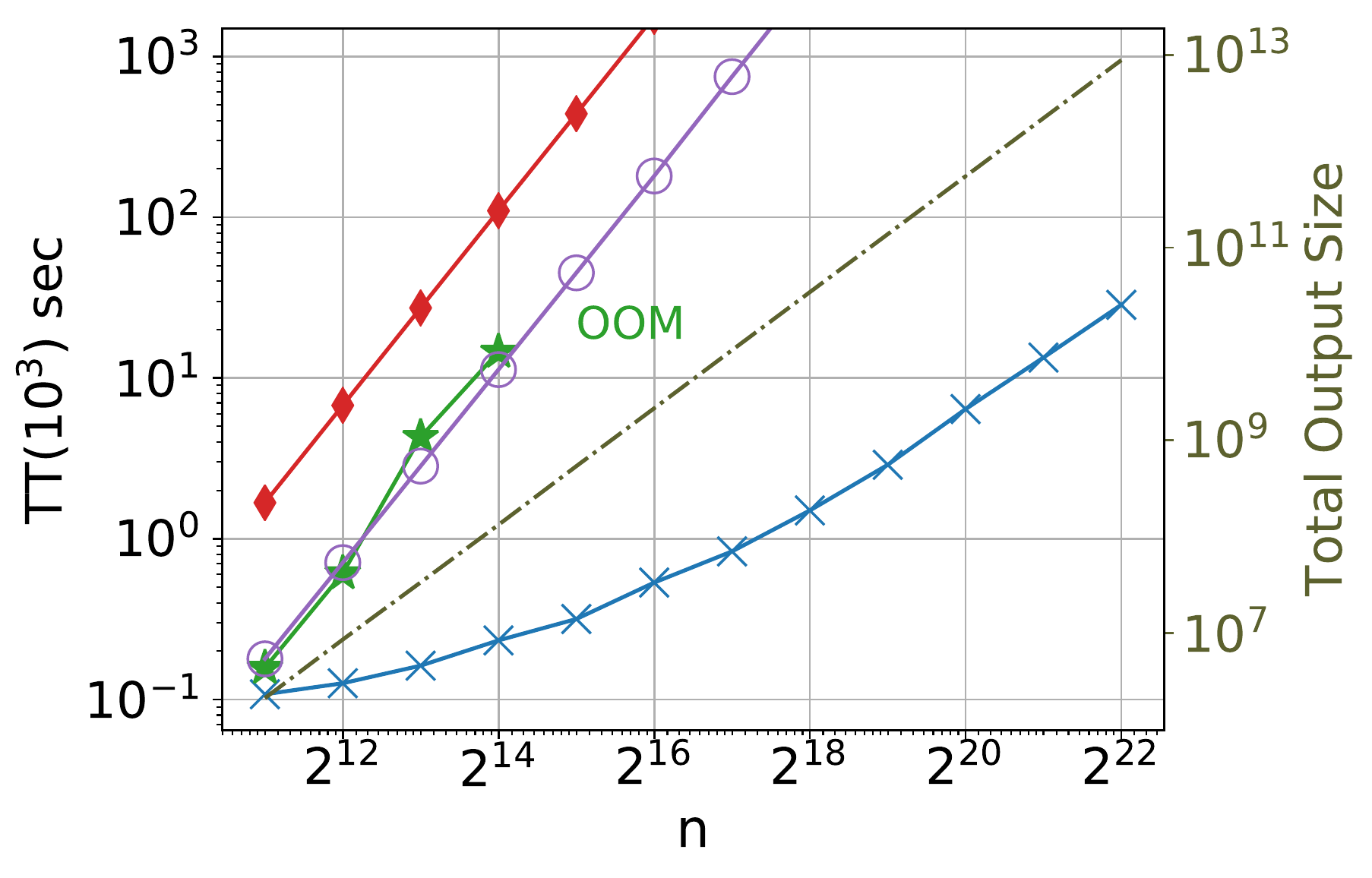}
        \caption{Query $Q_{S1}$, $\ell = 2$.}
		\label{exp:syn_q1_l2}
    \end{subfigure}%
    \hfill
    \begin{subfigure}{0.25\linewidth}
        \centering
        \includegraphics[height=2.8cm]{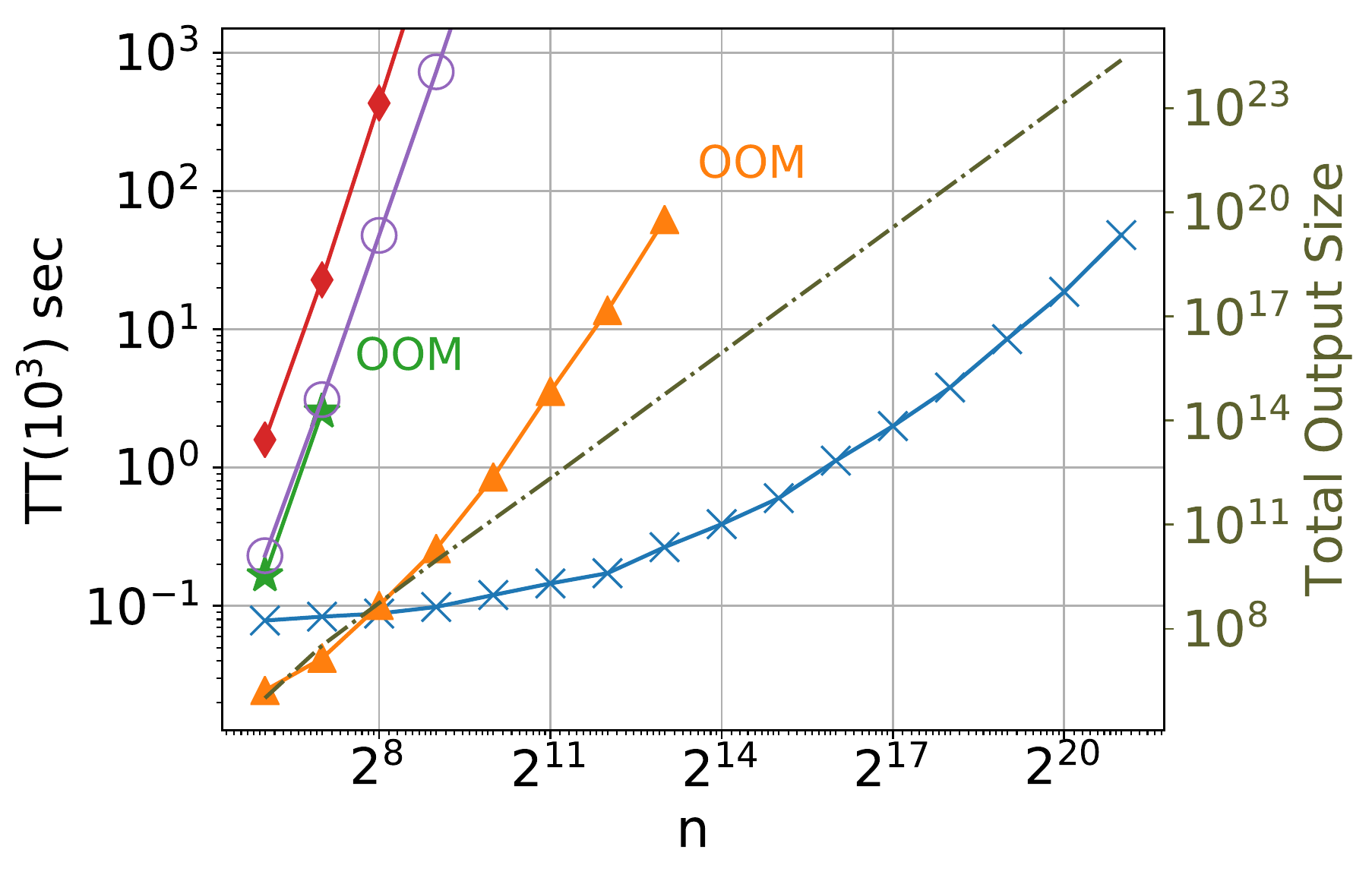}
        \caption{Query $Q_{S1}$, $\ell = 4$.}
		\label{exp:syn_q1_l4}
    \end{subfigure}%
    \hfill
    \begin{subfigure}{0.25\linewidth}
        \centering
        \includegraphics[height=2.8cm]{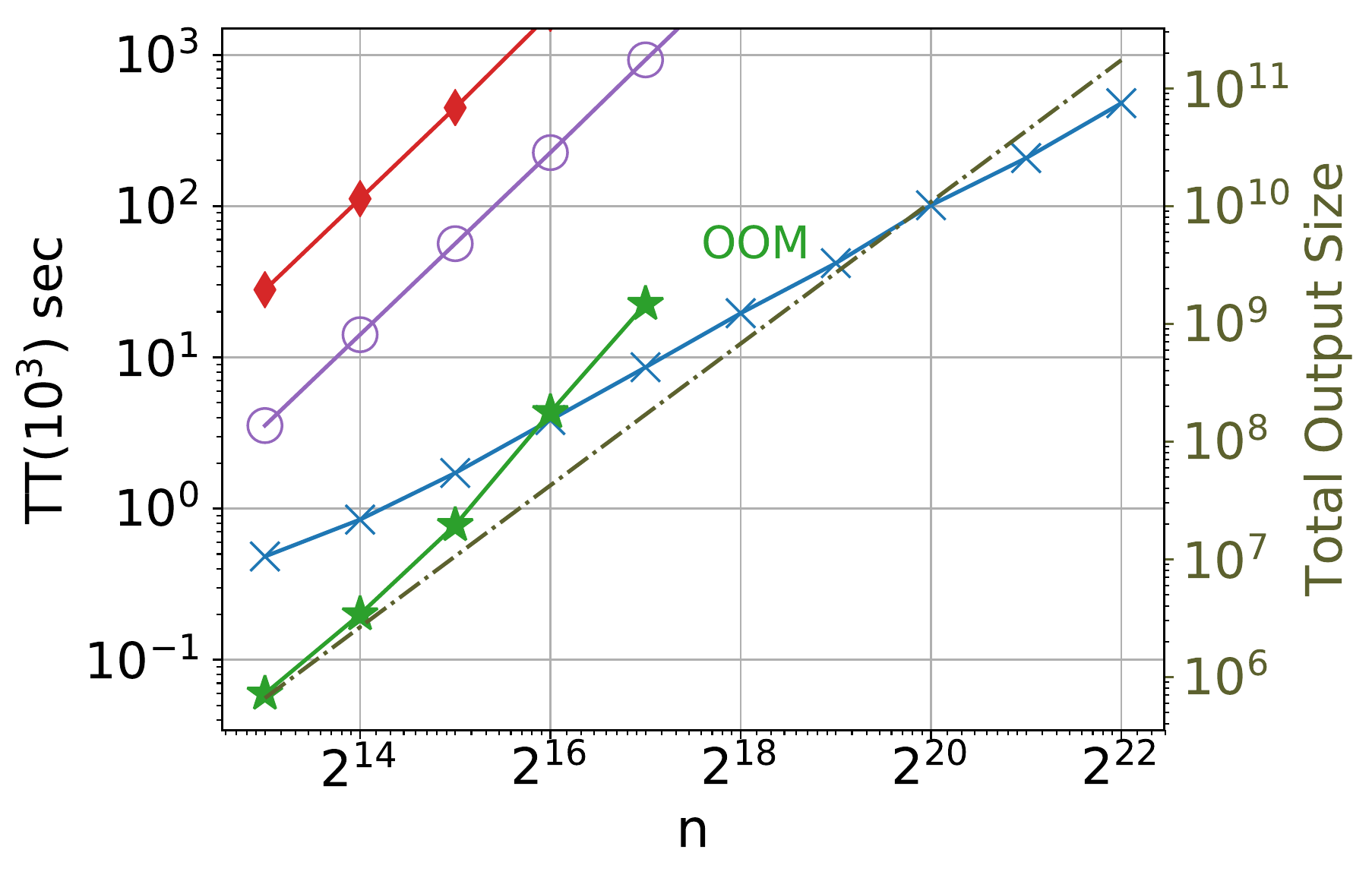}
        \caption{Query $Q_{S2}$, $\ell = 2$.}
		\label{exp:syn_q2_l2}
    \end{subfigure}%
    \hfill
    \begin{subfigure}{0.25\linewidth}
        \centering
        \includegraphics[height=2.8cm]{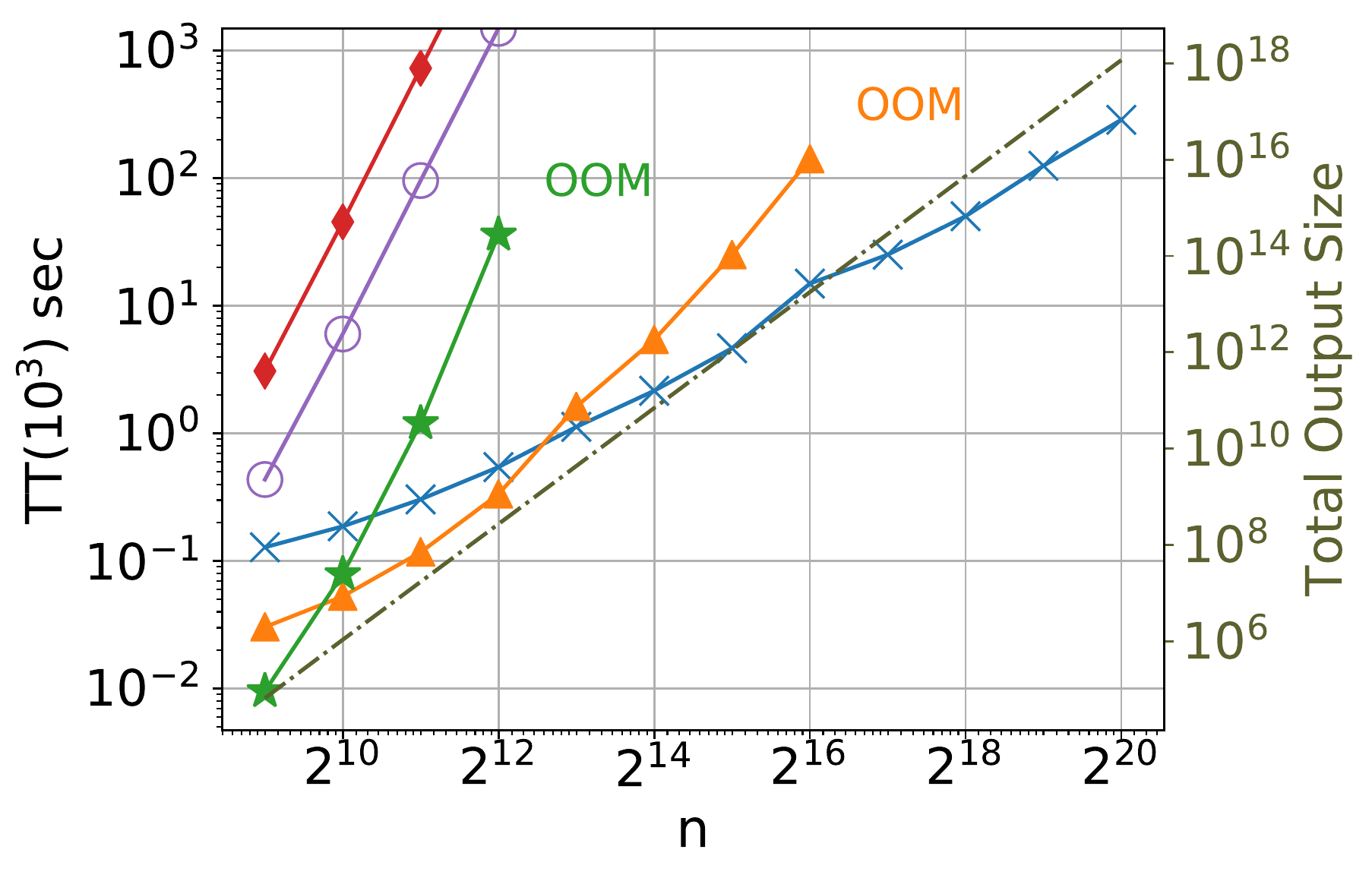}
        \caption{Query $Q_{S2}$, $\ell = 4$.}
		\label{exp:syn_q2_l4}
    \end{subfigure}
    \vspace{-1mm}
    
    \caption{\Cref{subsec:datasize}: Synthetic data with a growing database size $n$. 
    While all four alternative methods either run out of memory (``OOM'')
	or exceed a reasonable running time,
    our method scales quasilinearly ($\O(n \polylog n)$) with $n$.
    }
    \label{exp:syn}
\end{figure*}

\begin{figure}[t]
    \centering   
    \begin{subfigure}{0.5\linewidth}
        \centering
        \includegraphics[height=2.8cm]{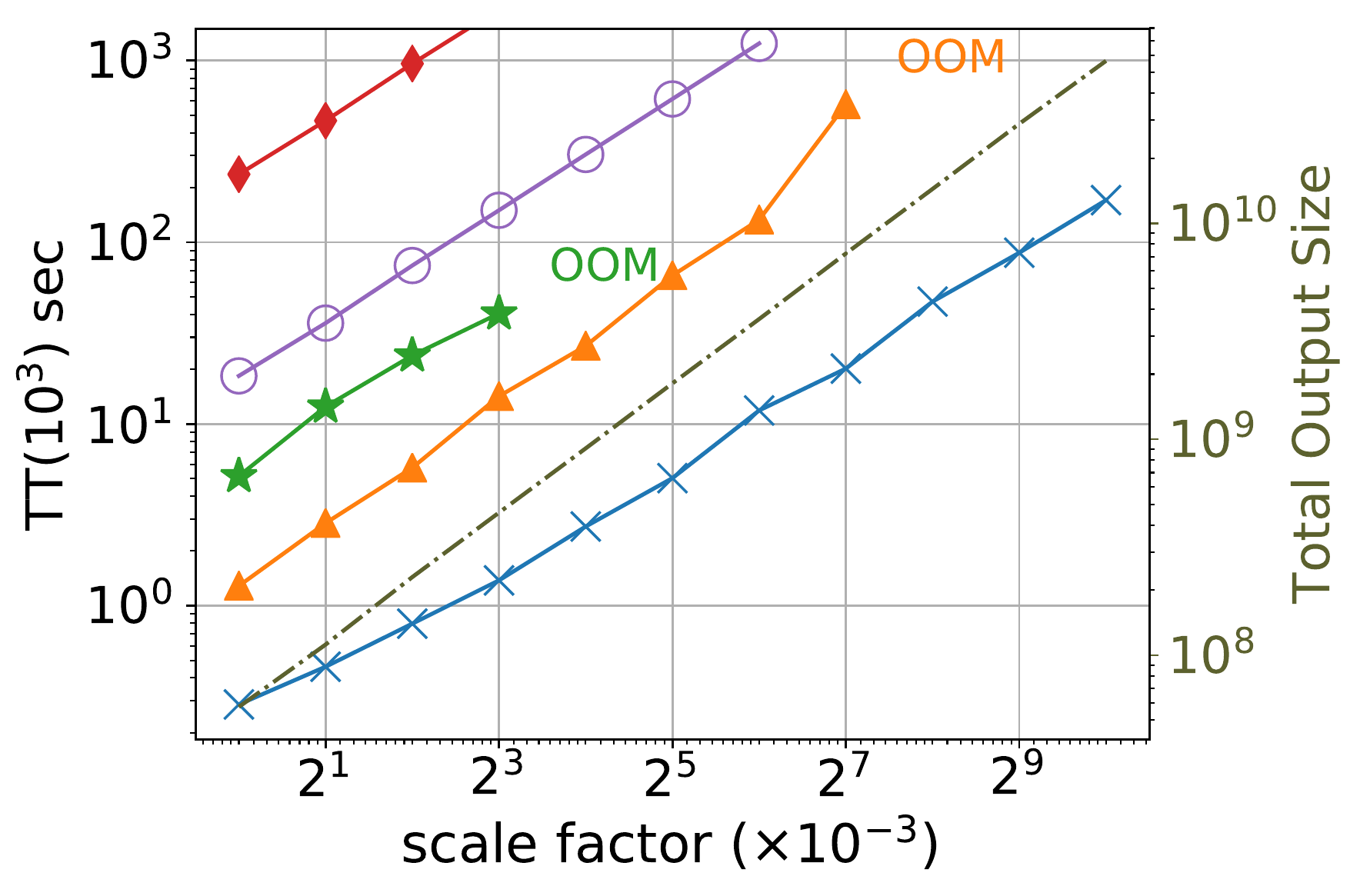}
        \caption{Query $Q_{T}$, $\ell = 3$.}
		\label{exp:syn_qt_l3}
    \end{subfigure}%
    \hfill
    \begin{subfigure}{0.5\linewidth}
        \centering
        \includegraphics[height=2.8cm]{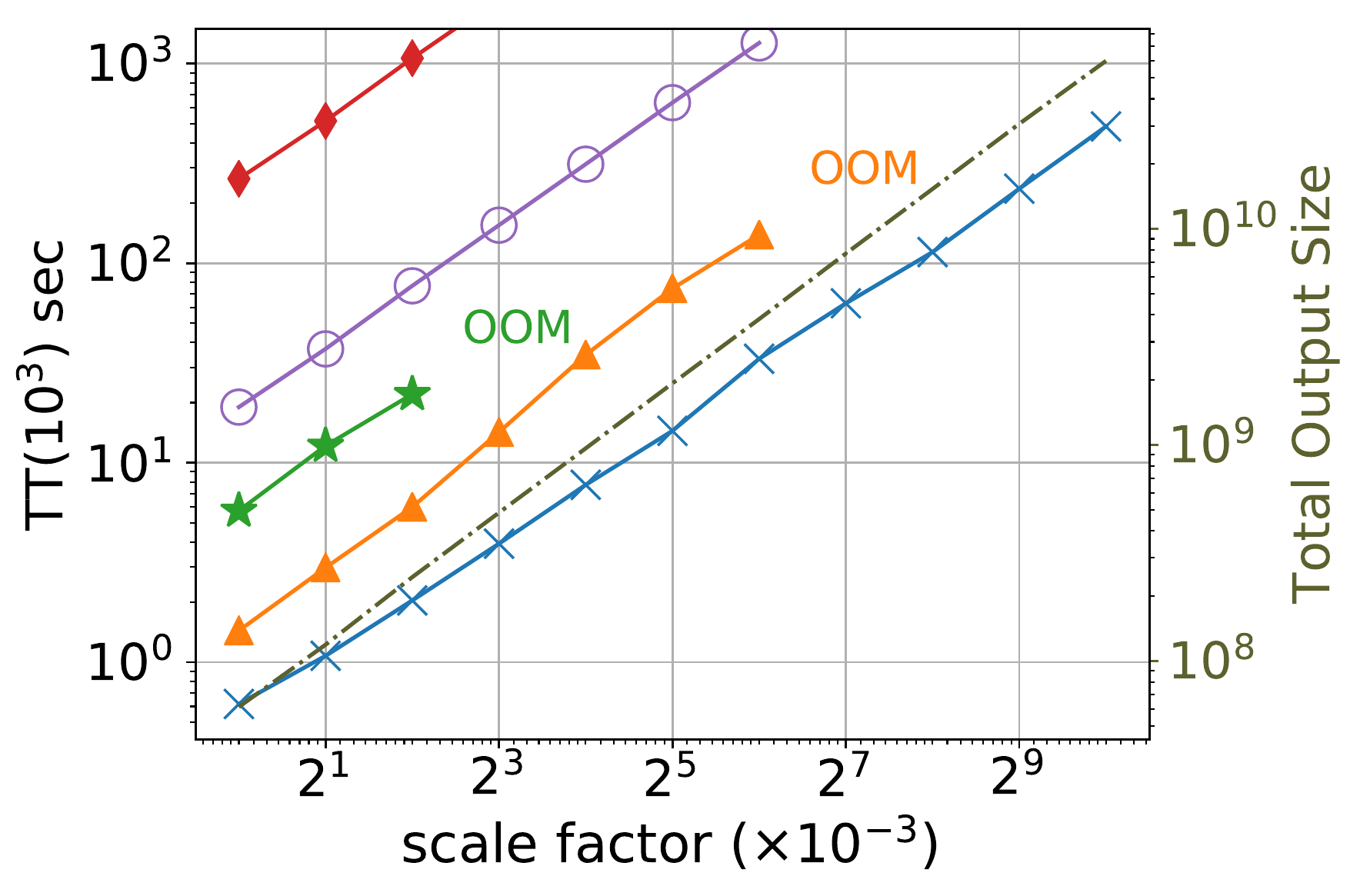}
        \caption{Query $Q_{TD}$, $\ell = 3$.}
		\label{exp:syn_qtd_l3}
    \end{subfigure}%
    \caption{\Cref{subsec:datasize}: TPC-H data with increasing scale factor. Disjunctions do not affect the scaling of our algorithm.}
    \label{exp:tpch}
\end{figure}

\subsubsection{Effect of Data Size}\label{subsec:datasize}
We run queries $Q_{S1}, Q_{S2}$ for different input sizes $n$ and two distinct query lengths.
\Cref{exp:syn} depicts the time to return the top $k = 10^3$ results.
We also plot how the size of the output grows with increasing $n$ on a secondary y-axis.
Even though \QUADEQUI and \BATCH are given precomputed join results for free
and do not even have to resolve complicated join predicates,
they still require a large amount of memory to store those.
Thus, they quickly run out of memory even for relatively small inputs (\Cref{exp:syn_q1_l4}).
\PSQL does not face a memory problem because it can resort to secondary storage, 
yet becomes unacceptably slow.
The in-memory optimized \SYSX is $10$ times faster than \PSQL,
but still follows the same trend because it is materializing the entire output.
In contrast, our \OURS approach scales smoothly across all tests
and requires much less memory.
For instance, in \Cref{exp:syn_q1_l4} \QUADEQUI fails after $8k$ input size, 
while we can easily handle $2M$. 
For very small input, the idealized methods \QUADEQUI and \BATCH are sometimes faster,
but their real running time would be much higher if
join computation was accounted for.
$Q_{S2}$ has more join predicates and thus smaller output size
(\Cref{exp:syn_q2_l2,exp:syn_q2_l4}).
Our advantage is smaller in this case,
yet still significant for large $n$.

We similarly run queries $Q_T$ (\Cref{exp:syn_qt_l3}) and $Q_{TD}$ (\Cref{exp:syn_qtd_l3}) for $\ell=3$
with an increasing scale factor 
(which determines data size).
Here, the equi-join condition on the supplier severely limits the blowup of the output compared to the input.
Still, \OURS is again superior.
Disjunctions in $Q_{TD}$ increase the running time of our technique only slightly by a small constant factor.

\begin{figure*}[t]
    \centering
    \begin{subfigure}{\linewidth}
        \centering
        \includegraphics[width=0.82\linewidth]{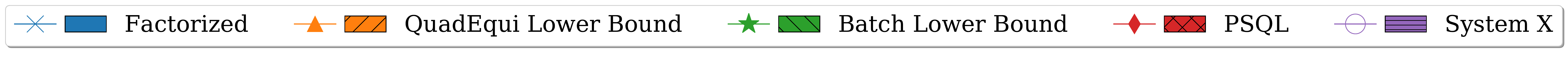}
    \end{subfigure}
    \vspace{-3mm}
    
    \begin{subfigure}[t]{0.24\linewidth}
        \centering
        \includegraphics[height=3cm]{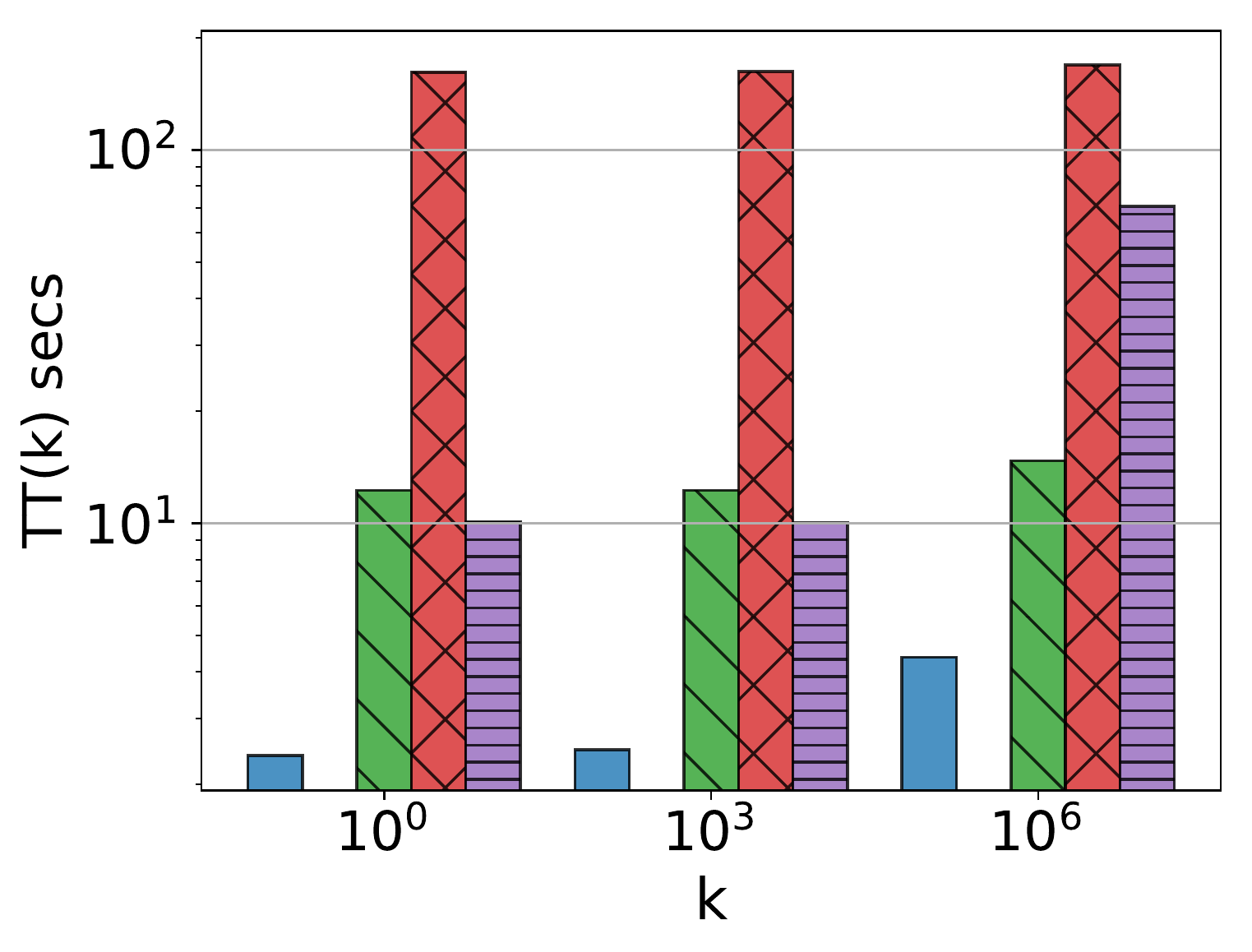}
        \caption{Query $Q_{R1}$, $\ell = 2$.}
		\label{exp:reddit_q1_l2}
    \end{subfigure}%
    \begin{subfigure}[t]{0.24\linewidth}
        \centering
        \includegraphics[height=3cm]{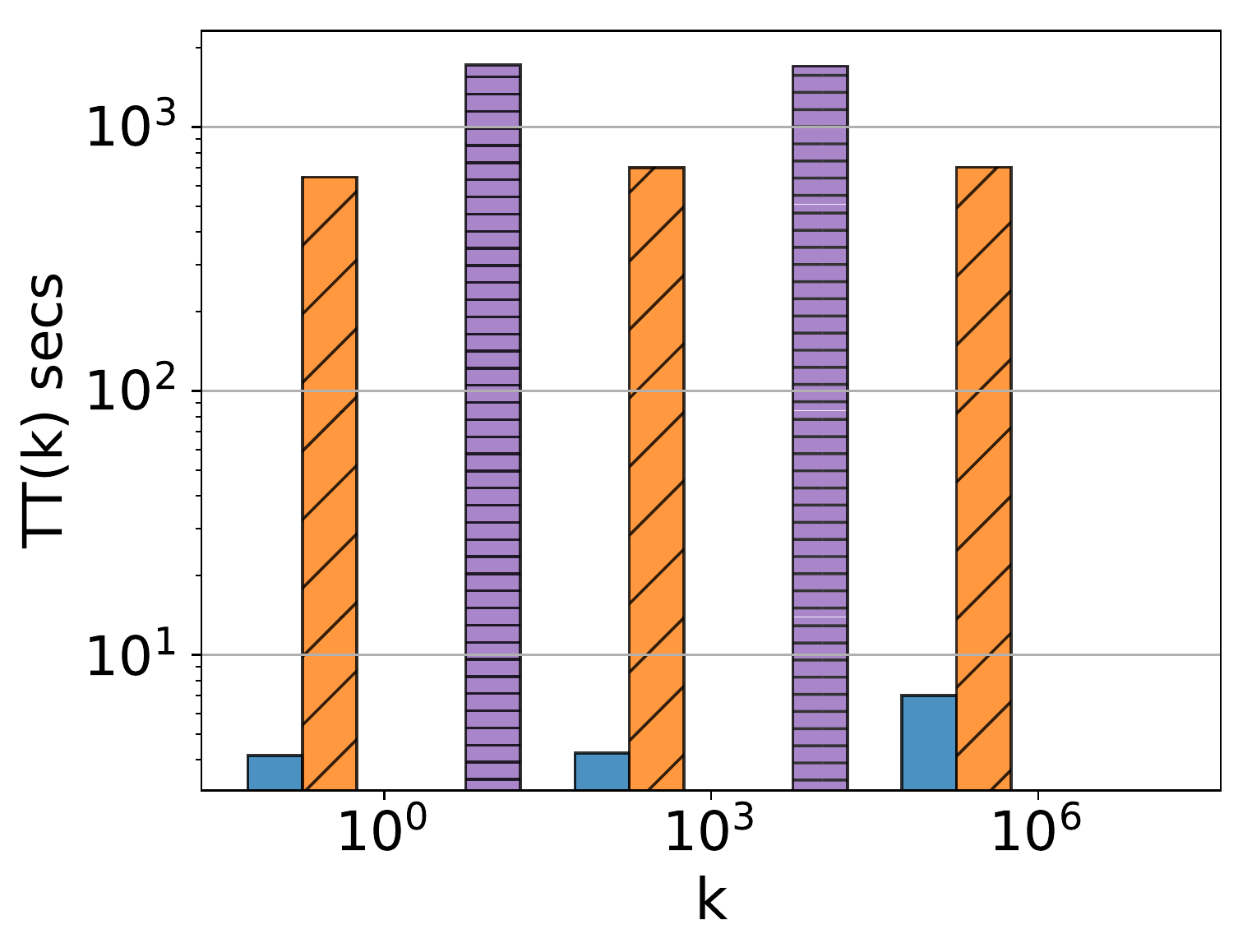}
        \caption{Query $Q_{R1}$, $\ell = 3$.}
		\label{exp:reddit_q1_l3}
    \end{subfigure}%
    \begin{subfigure}[t]{0.24\linewidth}
        \centering
        \includegraphics[height=3cm]{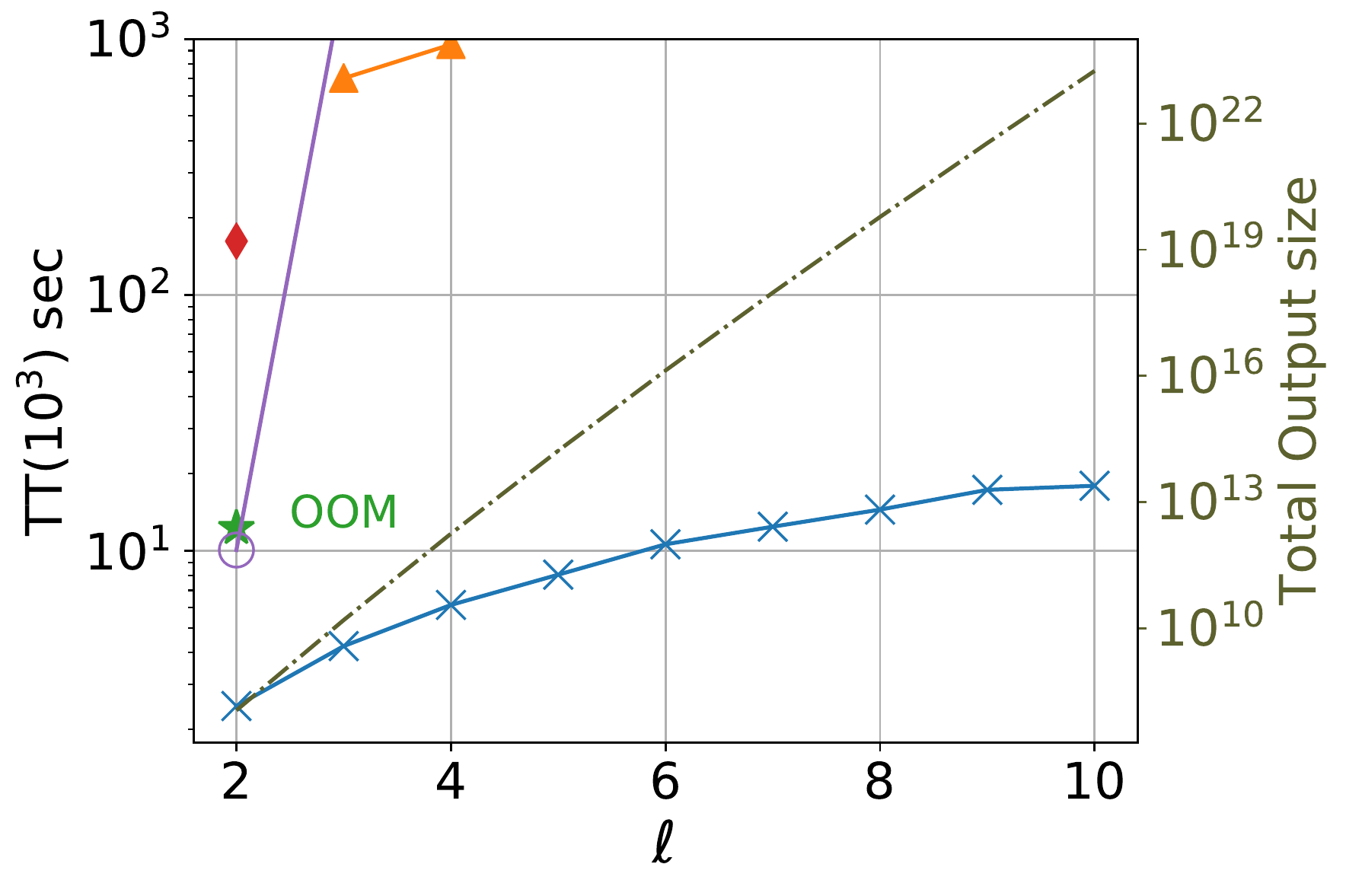}
        \caption{Query $Q_{R1}$, different lengths $\ell$.}
		\label{exp:reddit_q1}
    \end{subfigure}
    \hfill
    \begin{subfigure}[t]{0.24\linewidth}
        \centering
        \includegraphics[height=3cm]{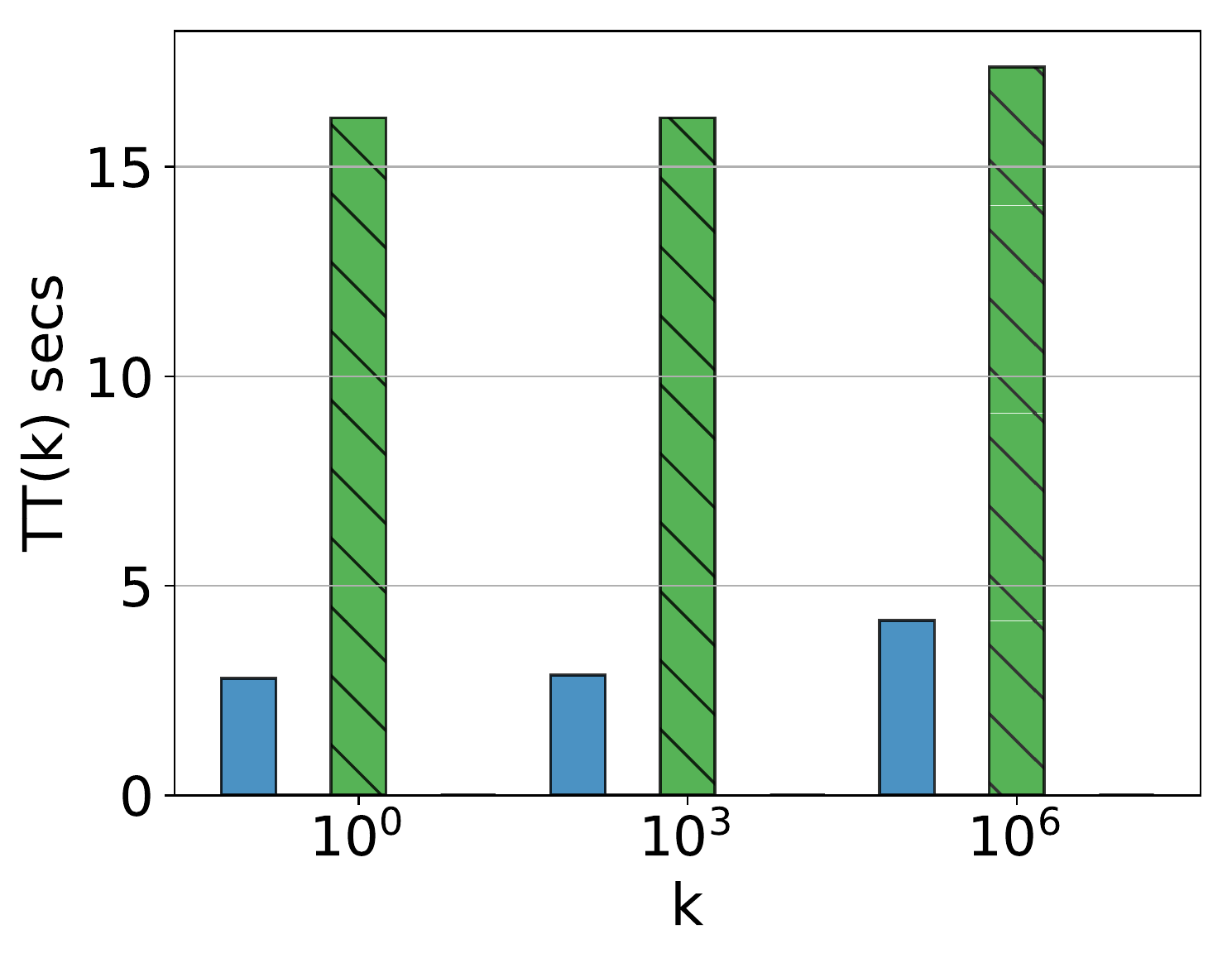}
        \caption{Query $Q_B$, fixed $\epsilon = 0.01$.}
		\label{exp:birds_e001}
    \end{subfigure}%
    \vspace{-1mm}

    \begin{subfigure}[t]{0.24\linewidth}
        \centering
        \includegraphics[height=3cm]{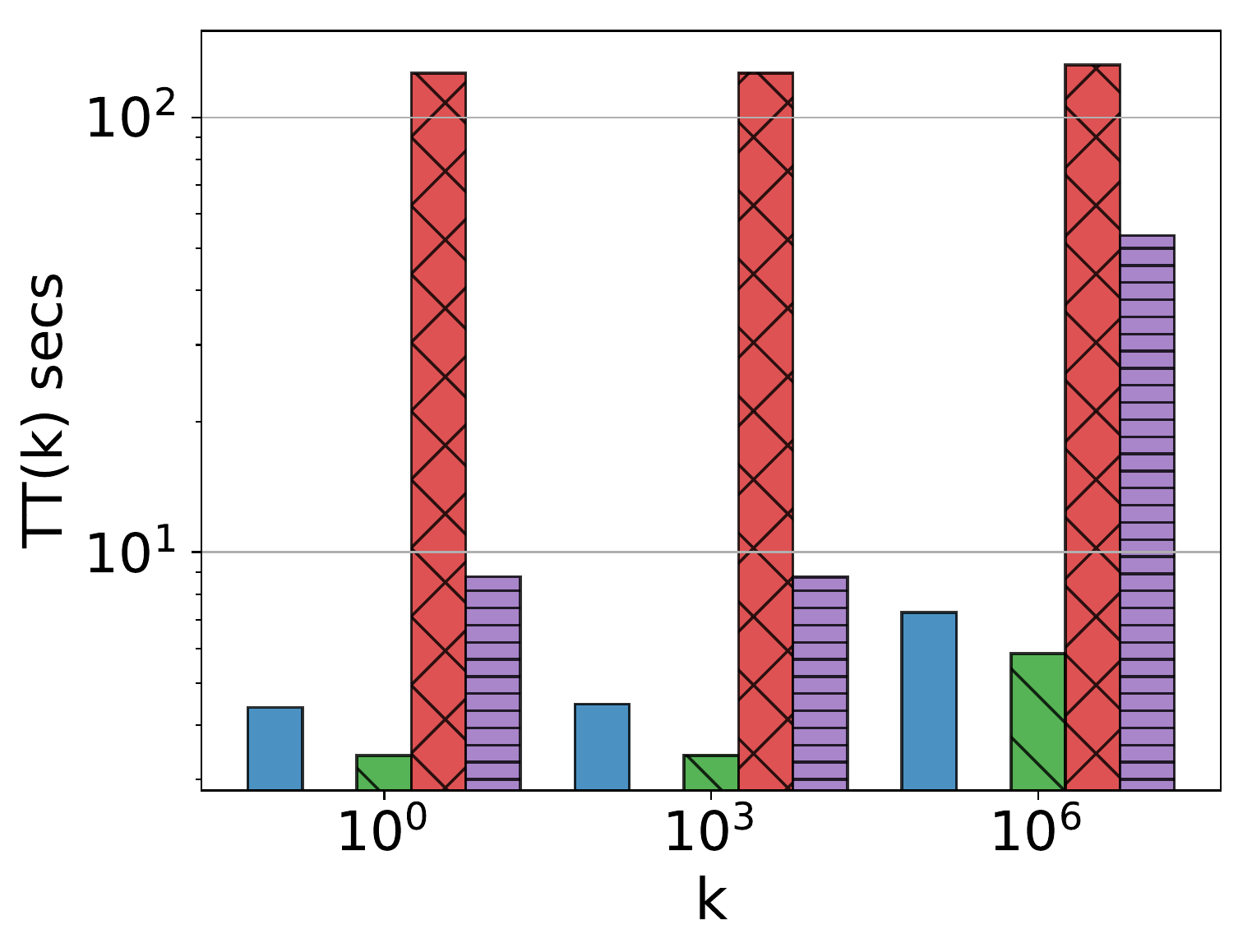}
        \caption{Query $Q_{R2}$, $\ell = 2$.}
		\label{exp:reddit_q2_l2}
    \end{subfigure}%
    \begin{subfigure}[t]{0.24\linewidth}
        \centering
        \includegraphics[height=3cm]{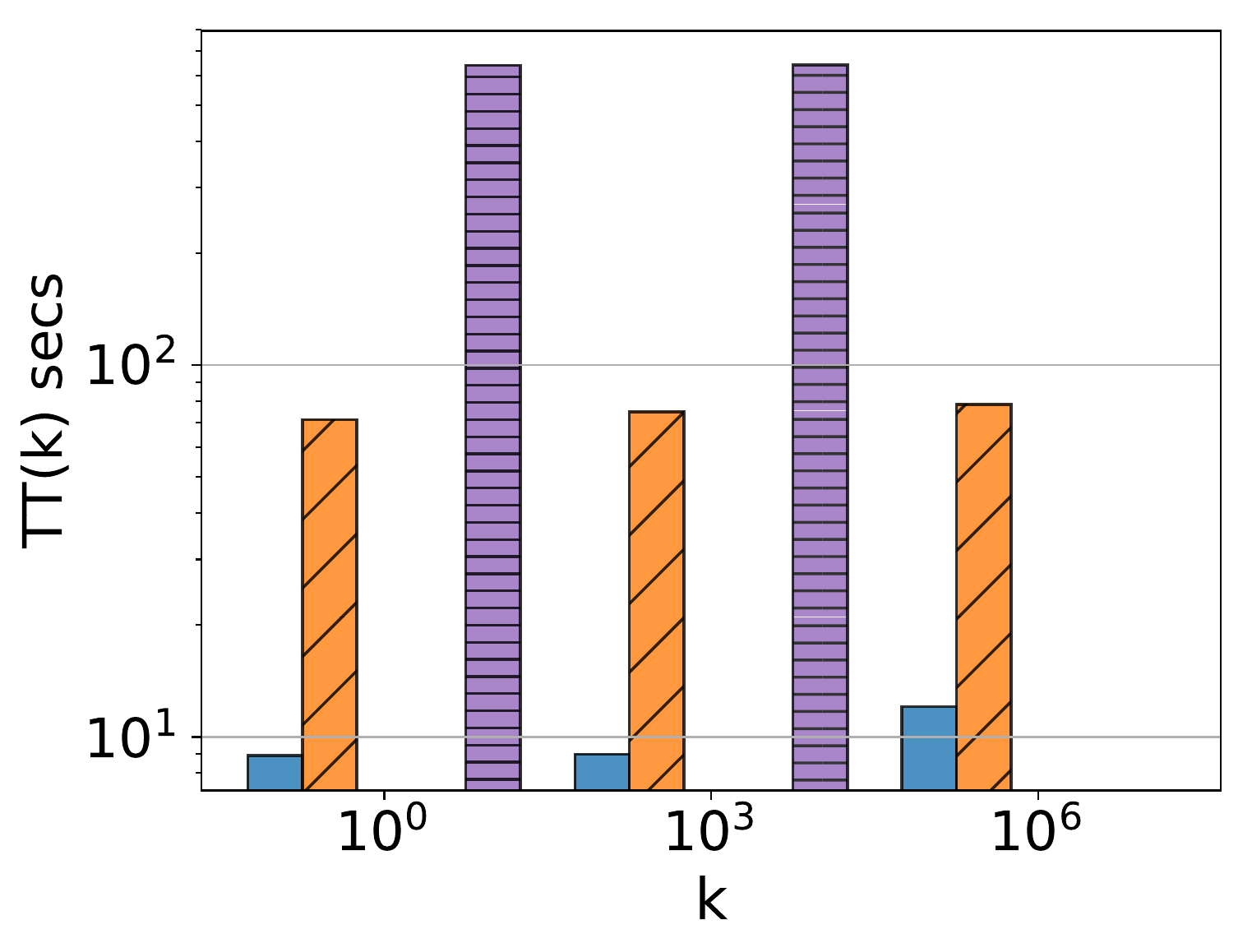}
        \caption{Query $Q_{R2}$, $\ell = 3$.}
		\label{exp:reddit_q2_l3}
    \end{subfigure}%
    \begin{subfigure}[t]{0.24\linewidth}
        \centering
        \includegraphics[height=3cm]{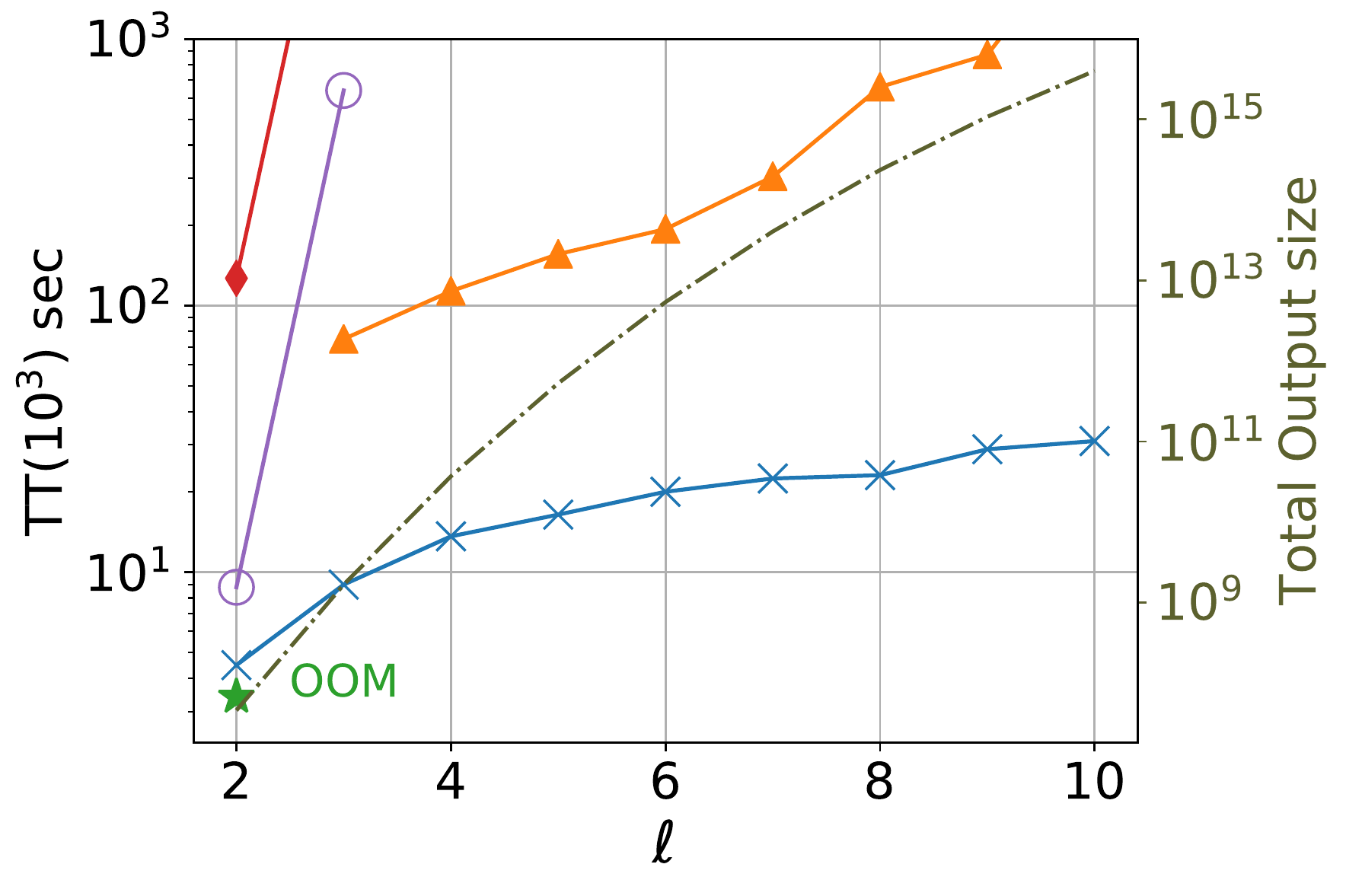}
        \caption{Query $Q_{R2}$, different lengths $\ell$.}
		\label{exp:reddit_q2}
    \end{subfigure}
    \hfill
    \begin{subfigure}[t]{0.24\linewidth}
        \centering
        \includegraphics[height=3cm]{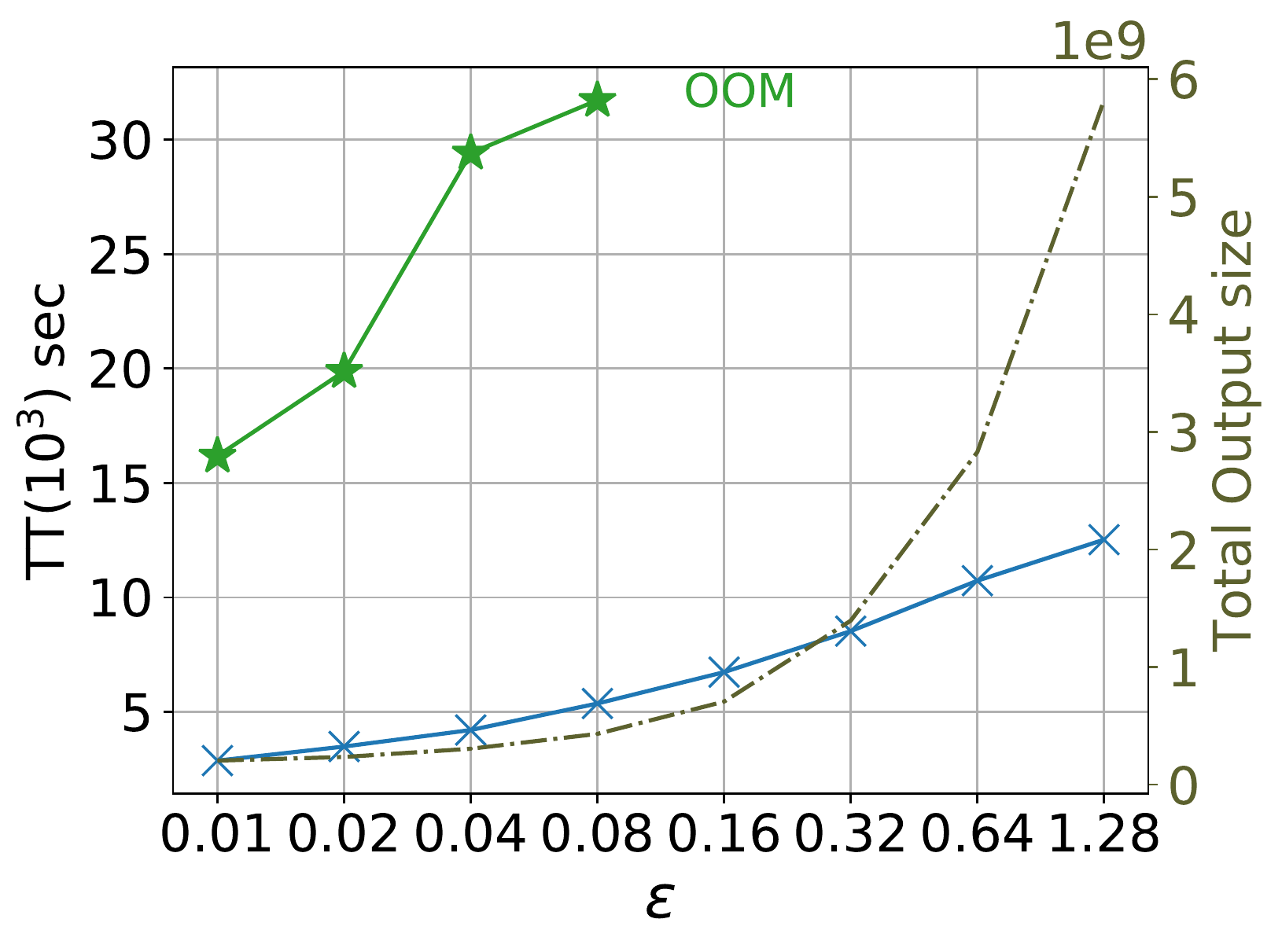}
        \caption{Query $Q_B$, different bands $\epsilon$.}
		\label{exp:birds_e}
    \end{subfigure}    
    \vspace{-1mm}
    
    \caption{\Cref{subsec:querylength}: a,b,c,e,f,g:  
    \Cref{subsec:band}: d, h:
    Temporal paths of different lengths on \Reddit (left), and spatial band-join on \Birds (right).
    Our method is robust to increasing query sizes and band-join ranges.}
    \label{exp:reddit}
\end{figure*}

\subsubsection{Effect of Query Length}\label{subsec:querylength}
Next, we test the effect of query length on \Reddit.
We plot $\TT(k)$ for three values ($k = 1, 10^3, 10^6$)
when the length is small ($\ell = 2, 3$) and one value ($k = 10^3$)
for longer queries.
Note that for $k=1$, the time of \OURS is essentially the time required for building our \TLFGs,
and doing a bottom-up Dynamic Programming pass \cite{tziavelis20vldb}.
\Cref{exp:reddit} depicts our results for queries $Q_{R1}, Q_{R2}$.
Increasing the value of $k$ 
does not have a serious impact for most of the approaches
except for \SYSX,
which for $k=10^6$ is not able to provide the same optimized execution.
For binary-join $Q_{R1}$, our \OURS is faster than the \BATCH lower bound (\Cref{exp:reddit_q1_l2}),
and its advantage increases for longer queries, 
since the output also grows (\Cref{exp:reddit_q1}).
\BATCH runs out of memory for $\ell=3$, 
\PSQL times out,
while \QUADEQUI and \SYSX are more than $100$ times slower (\Cref{exp:reddit_q1_l3}).
Query $Q_{R2}$ has an additional join predicate, 
hence its output size is smaller.
Thus, the \BATCH lower bound is slightly better than our approach for $\ell=2$ (\Cref{exp:reddit_q2_l2}),
but we expect it to be significantly slower if the cost
of computing and materializing the output was taken into account.
Either way, for $\ell \geq 3$ (\Cref{exp:reddit_q2}), our approach dominates even when compared against the lower bounds.
\PSQL again times out for $\ell=3$ (\Cref{exp:reddit_q2_l3}),
and the highly optimized \SYSX is outclassed by our approach.

\begin{figure*}[t]
    \centering
    \begin{subfigure}{\linewidth}
        \centering
        \includegraphics[width=0.55\linewidth]{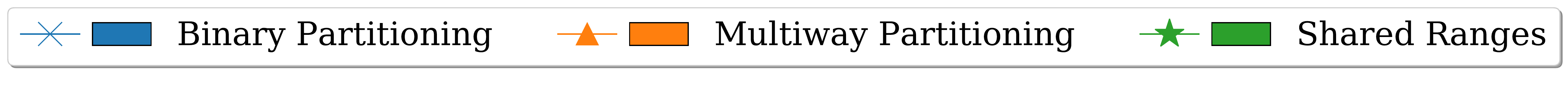}
    \end{subfigure}
    \vspace{-3mm}
    
    \begin{subfigure}[t]{0.24\linewidth}
        \centering
        \includegraphics[width=\linewidth]{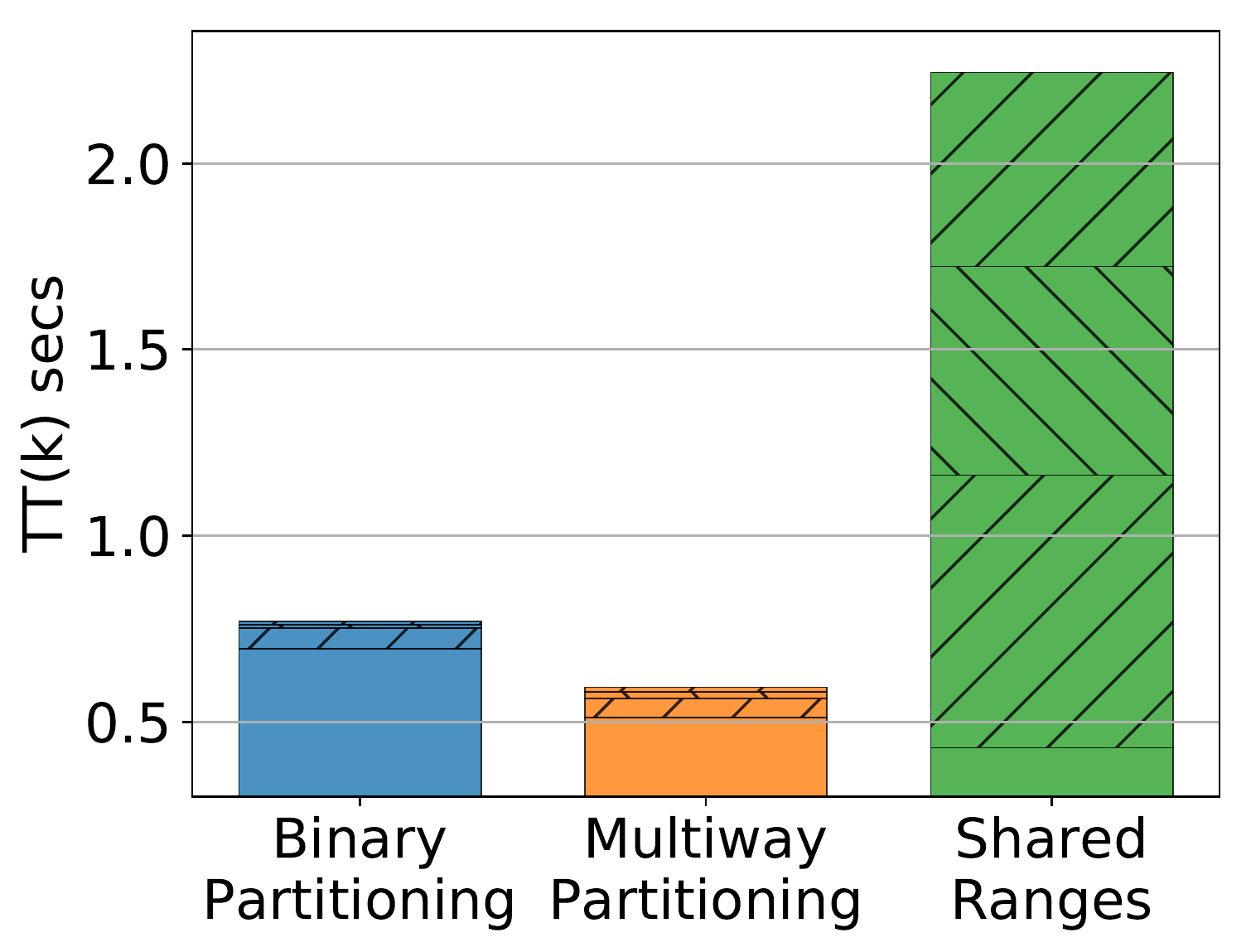}
        \caption{$\TT(k)$ for $n=2^{16}$. After preprocessing (no pattern), each bar represents $10^4$ results (alternating pattern).
		}
		\label{exp:fact_ttk}
    \end{subfigure}%
    \hfill
    \begin{subfigure}[t]{0.24\linewidth}
        \centering
        \includegraphics[width=\linewidth]{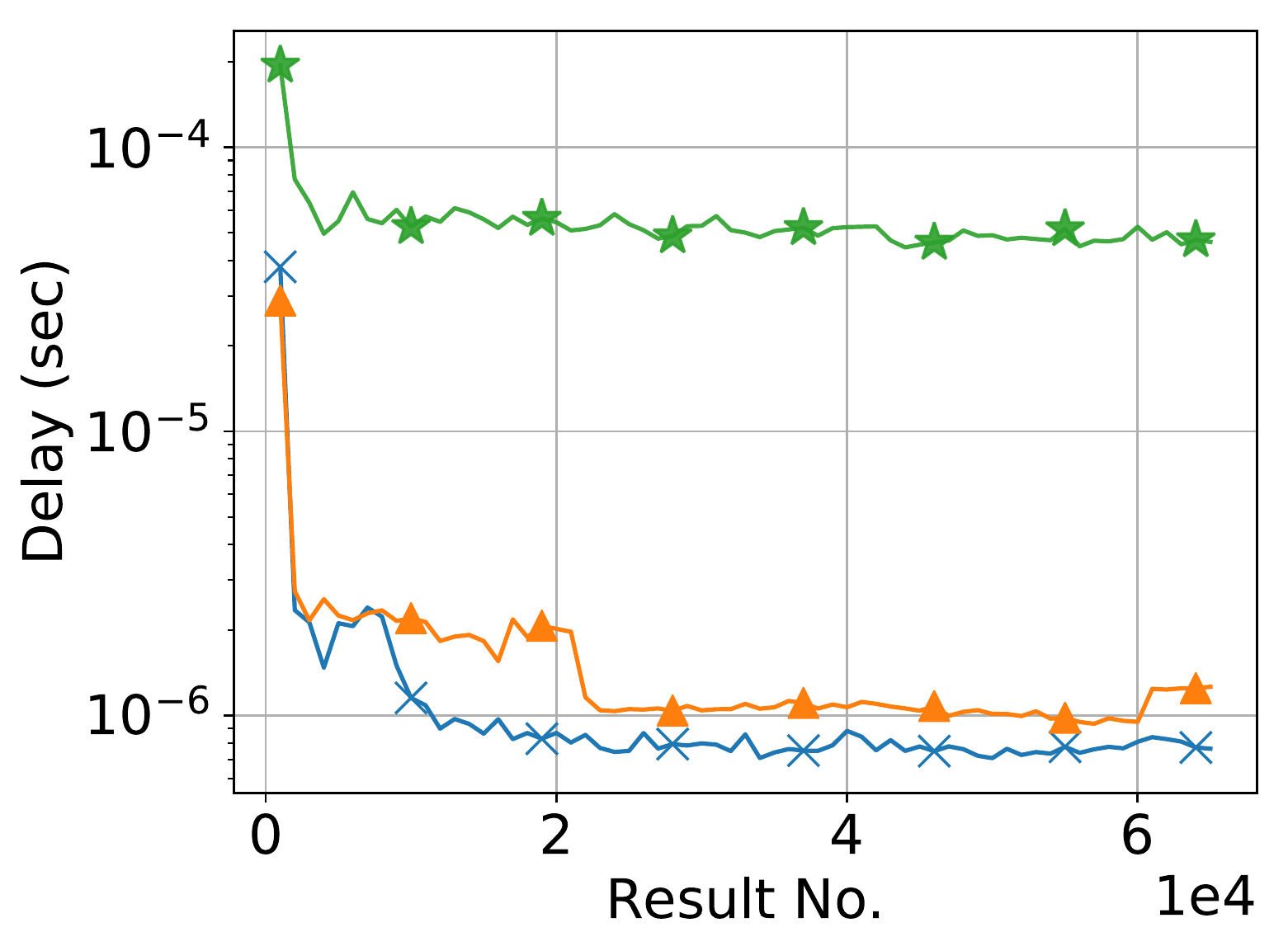}
        \caption{Delay between the first $n$ consecutive results for $n=2^{16}$. The delay is averaged in a window of size $10^3$.}
		\label{exp:fact_delay}
    \end{subfigure}%
    \hfill
        \begin{subfigure}[t]{0.24\linewidth}
        \centering
        \includegraphics[width=\linewidth]{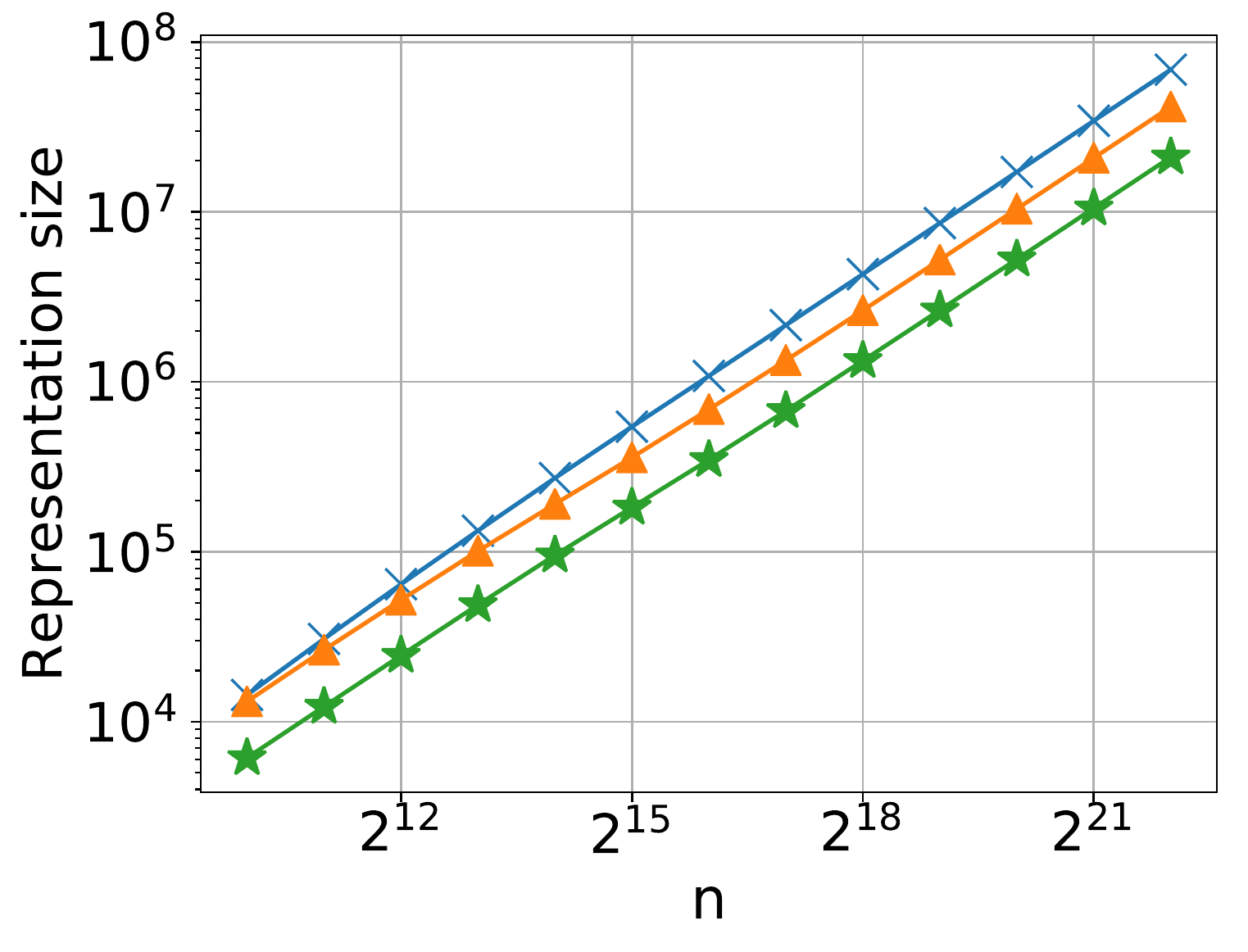}
        \caption{Size of the constructed \TLFG for increasing $n$. Measured as the total number of nodes and edges in the graph.}
		\label{exp:fact_mem}
    \end{subfigure}%
    \hfill
    \begin{subfigure}[t]{0.24\linewidth}
        \centering
        \includegraphics[width=\linewidth]{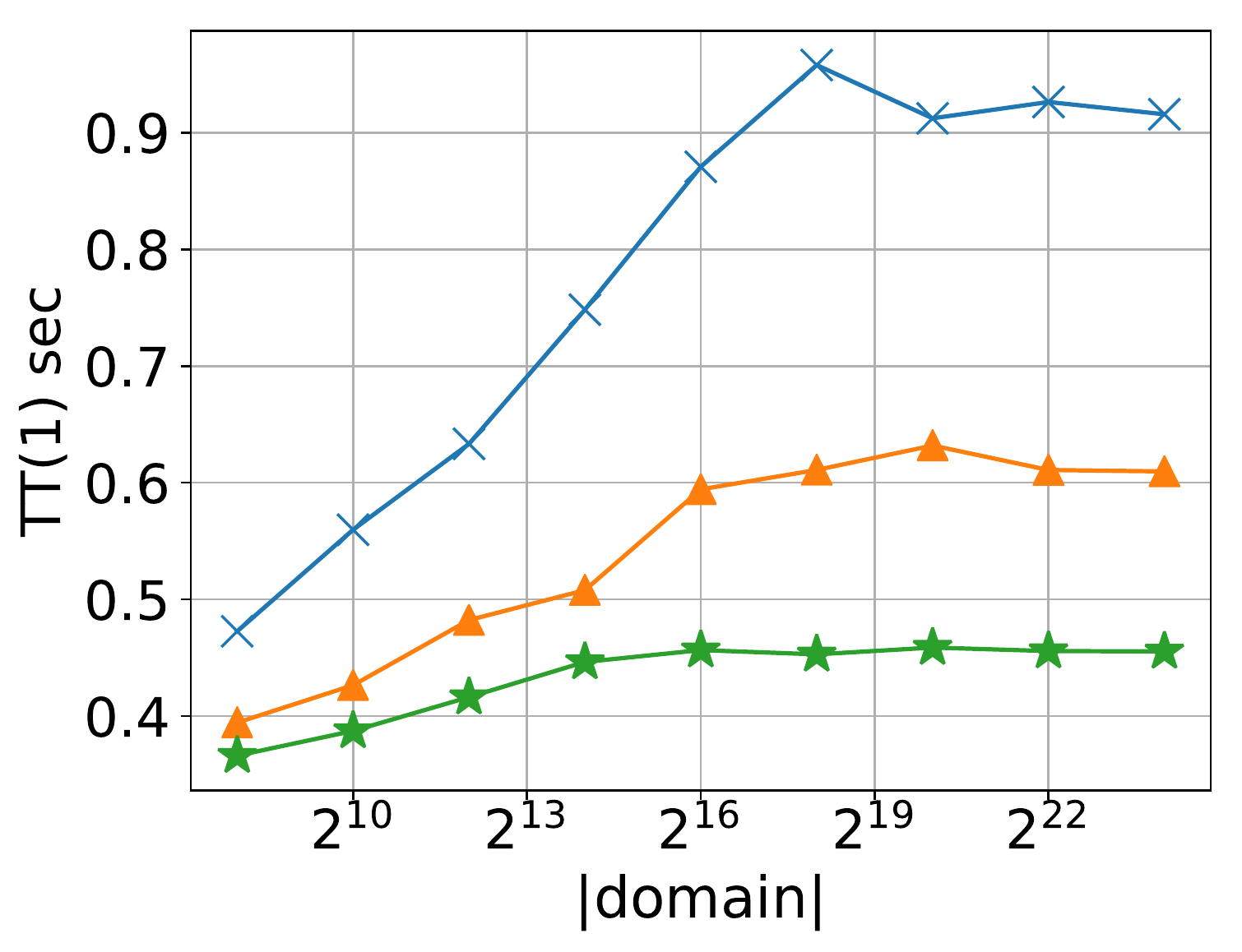}
        \caption{Time to the first result for $n=2^{16}$ and different domain sizes. The tuple values are sampled randomly.}
		\label{exp:fact_dom}
    \end{subfigure}%
    \vspace{-1mm}
    
    \caption{\Cref{sec:exp_fact}: Comparing different aspects of our factorization methods on query $Q_{S1}, \ell=2$.}
    \label{exp:fact}
\end{figure*}

\subsubsection{Effect of Band Parameter}\label{subsec:band}

We now test the band-join $Q_{B}$ on the \Birds dataset 
with various band widths $\epsilon$.
\Cref{exp:birds_e001} shows that \OURS is superior for
all tested $k$ values for $\epsilon=0.01$.
Increasing the band width yields more joining pairs and causes the size of the output to grow (\Cref{exp:birds_e}).
Hence, \BATCH consumes more memory and cannot handle $\epsilon \geq 0.16$.
On the other hand, the performance of \OURS is mildly affected by increasing $\epsilon$.
\PSQL and \SYSX were not able to terminate within the time limit even for the smallest $\epsilon$
because they use only one of the indexes 
(for \texttt{Longitude}),
searching over a huge number of possible results.

\subsection{Comparison of our Variants}
\label{sec:exp_fact}

We now compare our 3 factorization methods 
\circled{1a}, 
\circled{1b},
\circled{1c}.  

\subsubsection{Delay and $\TT(k)$}
\label{sec:exp_fact_delay}

Since only \BINPART is applicable to all types of join conditions considered,
we compare the different methods on $Q_{S1}$, which has only one inequality-type predicate.
\Cref{exp:fact_ttk} 
depicts $\TT(k)$ for $k = 1, 10^4, 2 \! \cdot \! 10^4, 3 \! \cdot \! 10^4$.
Even though \SHAREDRAN starts returning results faster because its \TLFG is constructed in a single pass (after sorting),
it suffers from a high enumeration delay (linear in data size),
and quickly deteriorates as $k$ increases.
The delay is also depicted in \Cref{exp:fact_delay},
where we observe that \BINPART returns results with lower delay than \MULTIPART
(recall that \MULTIPART has a depth of $3$ vs \BINPART's $2$).
These results are a consequence of the size-depth tradeoff of the \TLFGs (\cref{fig:sizeTradeoff}).
Note that the higher delay observed in the beginning is due to lazy initialization of data structures needed by the any-$k$ algorithm.

\subsubsection{Join Representation}
\label{sec:exp_fact_representation}

We show the sizes of the constructed representation in \Cref{exp:fact_mem}, using an implementation-agnostic measure.
As $n$ increases there is an asymptotic difference between the three methods 
($\O(n \log n)$ vs $\O(n \log\log n)$ vs $\O(n)$)
that manifests in our experiment.
To see how the presence of the same domain values could affect the construction of the \TLFG,
we also measure the time to the first result for different domain sizes (\Cref{exp:fact_dom}).
All three of our methods become faster when the domain is small and multiple occurrences of the same value are more likely.
This is expected since the intermediate nodes of our \TLFG essentially represent ranges in the domain and they are more compact for smaller domains.
Domain size does not significantly impact running time once it
exceeds sample size (around $n=2^{16}$) and 
the probability of sampling duplicate domain values
approaches zero. 

\section{Related Work}

\introparagraph{Enumeration for equi-joins}
\emph{Unranked} enumeration for equi-joins has been explored in various contexts
\cite{Berkholz20tutorial,Berkholz:2017:ACQ:3034786.3034789,DBLP:conf/pods/CarmeliK19,DBLP:journals/mst/CarmeliK20,durand20tutorial,DBLP:journals/sigmod/Segoufin15},
with a landmark result showing for self-join-free equi-joins that
linear preprocessing and constant delay are possible if and only if
the query is free-connex acyclic~\cite{bagan07constenum,brault13thesis}.
For the more demanding task of ranked enumeration, a logarithmic delay is
unavoidable \cite{deep21,bremner06xy}. %
Our recently proposed any-$k$ algorithms represent the state of the art for
ranked enumeration for equi-joins \cite{tziavelis20vldb}.
Other work in this space focuses on
practical implementations~\cite{ding21progressive} and
direct access \cite{carmeli20random,carmeli20direct} to output tuples.

\introparagraph{Non-Equality ($\neq$) and inequality ($<$) joins}
Techniques for batch-computation of the entire output for
joins with \emph{non-equality}
(also called \emph{inequality} \cite{koutris17noneq} or \emph{disequality} \cite{bagan07constenum})
predicates mainly rely on variations of color coding~\cite{Alon:1995:COL:210332.210337,koutris17noneq,papadimitriou99complexity}.
The same core idea is leveraged by the unranked enumeration algorithm of
Bagan et al.\ \cite{bagan07constenum}.
Queries with negation can be answered by rewriting them with
\emph{not-all-equal-predicates} \cite{khamis19negation}, 
a generalization of non-equality.

Khayatt et al.~\cite{khayyat17ineq} provide optimized and distributed
\emph{batch} algorithms for up to two inequalities per join.
Aggregate computation \cite{AboKhamis:2019:FAQ:3294052.3319694} and
Unranked enumeration under updates \cite{idris20dynamic_theta} 
have been studied for inequality predicates by
using appropriate index structures.

We are the first to consider \emph{ranked} enumeration for non-equality and
inequality predicates, including DNF conditions containing both types, and to prove
strong worst-case guarantees for a large class of these queries.

\introparagraph{Orthogonal range search}
Our binary partitioning method shares a similar intuition with index structures that have been devised for 
orthogonal range search \cite{chazelle88range,Agarwal17}.
For unranked enumeration, it has been shown \cite{willard96range,willard02range,agarwal21dynamic} how, for two relations, a range tree \cite{de97geometry} can be used to identify pairs of matching tuple sets.
This gives an alternative method to construct our depth-2 \TLFGs because a pair of matching tuple sets can be connected via one intermediate node.
Our approach supports ranking and it is simpler since it does not require building a range tree.
Our \TLFG abstraction is also more general: our other representations (such as multiway partitioning) do not have any obvious representation as range trees.

\introparagraph{Factorized databases}
Factorized representations of query results
\cite{olteanu16record,bakibayev13fordering}
have been proposed for \emph{equi-joins} in the context of
enumeration \cite{olteanu12ftrees,olteanu15dtrees}, 
aggregate computation \cite{bakibayev13fordering},
provenance management \cite{olteanu11provenance,olteanu12ftrees,DBLP:journals/corr/abs-2105-14307}
and machine learning \cite{olteanu16ml,schleich16ml,khamis18ml,kumar15ml,DBLP:conf/sigmod/PLG20}.
Our novel \TLFG approach to factorization complements this line of research and
extends the fundamental idea of factorization to ranked enumeration for
theta-joins.
For probabilistic databases,
factorization of non-equalities \cite{olteanu08obdd} and 
inequalities \cite{olteanu09confidence} is possible with OBDDs.
Although these are for a different purpose,
we note that the latter exploits the transitivity of inequality,
as our \SHAREDRAN (\Cref{fig:Inequality_sharing})
and other approaches for aggregates do \cite{cluet95agg}.

\introparagraph{Top-$k$ queries}
Top-$k$ queries \cite{rahul19topk} are a special case of ranked enumeration 
where the value of $k$ is given in advance and its knowledge can be exploited.
Fagin et al. \cite{fagin03} present the Threshold Algorithm, which is
instance-optimal under a ``middleware'' cost model for a restricted class
of 1-to-1 joins. Follow-up work generalizes the idea to more general joins
\cite{ilyas04,mamoulis07lara,finger09frpa,wu10topk}, including
theta-joins \cite{natsev01}.
Since all these approaches focus on the middleware cost model, they
do not provide non-trivial worst-case guarantees when the join cost
is taken into account \cite{tziavelis20tutorial}.
Ilyas et al.~\cite{ilyas08survey} survey some of these approaches,
along with some related ones such as building 
top-$k$ indexes~\cite{chang00topk,tsaparas03topk} 
or views~\cite{hristidis01topk,das06topk}.

\introparagraph{Optimal batch algorithms for joins}
Acyclic equi-joins are evaluated optimally in $\O(n + \out)$
by the Yannakakis algorithm~\cite{DBLP:conf/vldb/Yannakakis81},
where $\out$ is the output size.
This bound is unattainable for cyclic queries~\cite{ngo2018worst},
thus worst-case optimal join algorithms
\cite{navarro19wco,
ngo2018worst,
Ngo:2014:SSB:2590989.2590991,
veldhuizen14leapfrog}
settle for the AGM bound~\cite{AGM}, i.e., the worst-case output size.
(Hyper)tree decomposition methods \cite{GottlobGLS:2016,khamis17panda,Marx:2013:THP:2555516.2535926}
can improve over these guarantees, 
while a geometric perspective has led to even stronger notions of optimality
\cite{ngo14mine,Khamis:2016:JVG:3014437.2967101}.
Ngo \cite{ngo18open} recounts the development of these ideas.
That line of work focuses on batch-computation, i.e., on
\emph{producing all the query results}, or on Boolean queries,
while we explore ranked enumeration.

\section{Conclusions and Future Work}
\label{sec:conclusion}

Theta- and inequality-joins of multiple relations are
generally considered ``hard'' and even
state-of-the-art commercial \DBMSs struggle with their efficient
computation. We developed the first ranked-enumeration techniques that achieve non-trivial worst-case guarantees
for a large class of these joins:
For small $k$, returning the $k$ top-ranked join answers for full acyclic queries takes only
slightly-more-than-linear time and space ($\O(n \polylog n)$)
for any DNF of inequality predicates. For general theta-joins,
time and space complexity are quadratic in input size.
These are strong worst-case guarantees,
close to the lower time bound of $\O(n)$
and much lower than the $\O(n^\ell)$ size of intermediate or
final results traditional join algorithms may have to deal with.
Our results apply to many cyclic joins (modulo higher pre-processing cost
depending on query width) and all acyclic joins, even those with
selections and many types of projections. In the future, we will study
parallel computation and more general cyclic joins and projections.
\begin{acks}
This work was supported in part by 
the National Institutes of Health (NIH) under award number R01 NS091421 and by
the National Science Foundation (NSF) under award numbers CAREER IIS-1762268
and IIS-1956096.
\end{acks}

\bibliographystyle{ACM-Reference-Format}
\balance
\bibliography{bibliography-anyk}

\clearpage
\appendix
\section{Nomenclature}

\begin{table}[h]
\centering
\small
\begin{tabularx}{\linewidth}{@{\hspace{0pt}} >{$}l<{$}  @{\hspace{2mm}}  X @{}}
\hline
\textrm{Symbol}	& Definition 	\\
\hline
Q               & Join query \\
R, S, T         & Relations \\
A, B, C         & Attributes \\
\vec{X, Y, Z}   & Lists of attributes \\
r, s, t         & Tuples \\
\theta          & Join Predicate \\
S \bowtie_\theta T & Join between $S, T$ on predicate $\theta$ \\
n               & Total number of tuples \\
\delta          & Number of distinct values \\
\ell            & Number of relations \\
q               & Number of predicates in the query \\
G(V, E)         & Graph with nodes $V$ and edges $E$ \\
v_s, v_t        & Nodes corresponding to tuples $s \in S, t \in T$ \\
\mathcal{S}     & Size of TLFG \\
\depth               & Depth of TLFG \\
u               & Duplication factor of TLFG \\
p               & Number of conjuncts or disjuncts \\
\rho               & Number of partitions in equality/inequality factorization \\
M_i             & Partition in inequality factorization \\
m               & Number of groups in band factorization \\
H_i             & Group in band factorization \\
\TT(k)          & Time-to-$k^{\text{th}}$ result \\
\MEM(k)         & Memory until the $k^{\text{th}}$ result \\
\mathcal{T}     & Time for constructing a TLFG \\
\Prep(n)        & Time for preprocessing \\
h               & Height of tree \\
f, g            & (Computable) functions \\
\hline
\end{tabularx}
\end{table}

\section{Delay vs $\TT(k)$ as Complexity Measure}

In this section, we discuss the relationship between delay and $\TT(k)$ as complexity measures for enumeration.
For unranked enumeration, 
our goal is to achieve 
$\TT(k) = \O(\Prep(n) + k)$ with the lowest possible preprocessing time $\Prep(n)$.
The majority of papers on enumeration
\cite{bagan07constenum,DBLP:journals/sigmod/Segoufin15,DBLP:conf/pods/CarmeliK19,idris20dynamic_theta}
have traditionally focused instead on \emph{constant delay} after $\Prep(n)$ preprocessing. 
This is desirable because it implies the same guarantee 
$\TT(k) = \O(\Prep(n)) + k \cdot \O(1) = \O(\Prep(n) + k)$.
However, setting constant delay as the goal
can lead to misjudgments about practical performance, as we illustrate next:

\begin{figure*}[t]
\centering
\includegraphics[width=0.7\linewidth]{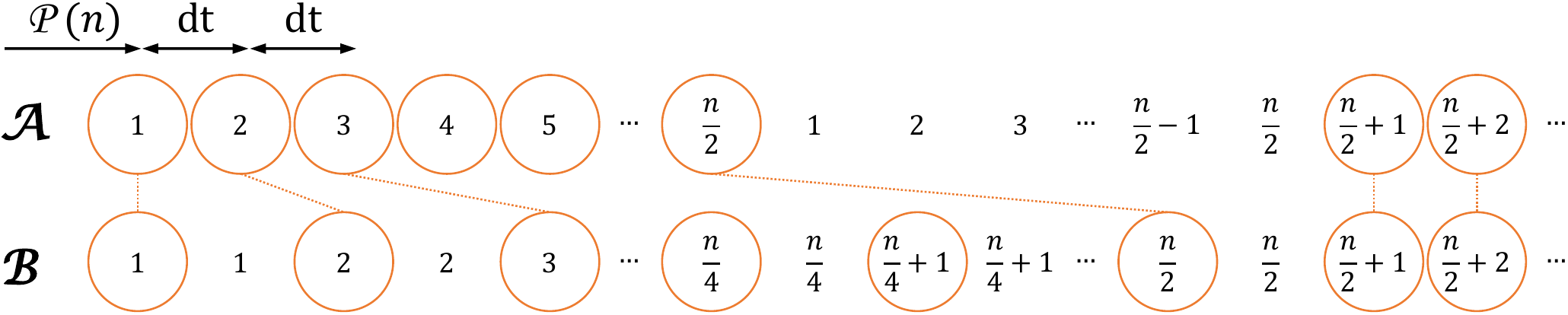}
\caption{Two enumeration algorithms with $\TT(k)=\O(\Prep(n)+k)$.}
\label{fig:TTKvsDelay}
\end{figure*}

\begin{example}
Consider an enumeration problem where the output consists of 
the integers $1, 2, \ldots, n$,
but algorithms produce duplicates that have to be filtered out
on-the-fly. Assume that two algorithms $\mathcal{A}$ and $\mathcal{B}$
spend preprocessing $\Prep(n)$, then generate a sequence of results
with constant delay. For $\mathcal{A}$, let this sequence be
$1, 2, \ldots, n/2, 1, 2, \ldots, n/2, n/2 + 1, \ldots$
and for $\mathcal{B}$ it is
$1, 1, 2, 2, \ldots, n/2, n/2, n/2 + 1, \ldots$ (see \cref{fig:TTKvsDelay}).
Even though both achieve $\TT(k) = \O(\Prep(n) + k)$, due to 
duplicate filtering the worst-case delay of $\mathcal{A}$ is $\O(n)$
(between $n/2$ and $n/2 + 1$),  while $\mathcal{B}$ has $\O(1)$ delay.
However, $B$ is clearly \emph{slower} than $A$ by a factor of
$2$ for all $k \in [n/2]$.
Since $\mathcal{A}$ outputs all these values earlier than $\mathcal{B}$,
we could make $\mathcal{A}$ \emph{simulate the delay} of $\mathcal{B}$
for $k \in [n/2]$ by storing the computed values on even iterations
and returning them later.
\end{example}

As the example illustrates, for a preprocessing cost of $\O(\Prep(n))$,
the ultimate goal is to guarantee $\TT(k) = \O(\Prep(n) + k)$.
Constant-delay enumeration is a sufficient condition for achieving this goal,
but not necessary.
Similarly, for ranked enumeration, we aim for $\TT(k) = \O(\Prep(n) + k \log k)$.

\section{Multiway Partitioning}

We provide more details on the multiway partitioning method
discussed in \cref{sec:fact_methods}.
Recall that it constitutes an improvement over the binary partitioning method of \cref{sec:inequality}
for the case of a single inequality predicate.
More specifically, it creates a \TLFG of size $\O(n \log\log n)$ instead of $\O(n \log n)$,
while only increasing the depth to $3$ from $2$ (see \cref{fig:sizeTradeoff}).

The main idea is to create more data partitions per recursive step. 
In particular, we pick $\partitions - 1$ pivots that create $\partitions$ partitions of nodes with a roughly equal number of distinct values.
\cref{fig:Inequality_many} depicts how the partitions are connected for a less-than ($<$) predicate. 
Each source partition $S_i, i \in [1, \rho - 1]$ is connected to all target partitions $T_j, j \in [i + 1, \rho]$,
since all values in $S_i$ are guaranteed to be smaller than all values in $T_j$.
The ideal number of partitions is $\Theta(\sqrt{d})$, 
so that the connections between them can be built in $\O(\sqrt{d}^2) = \O(n)$,
i.e., the same that binary partitioning needs per recursive step.
The advantage of the multiple partitions is that we can reach the base case $d = 1$ faster since each partition is smaller.
\Cref{alg:ineq} shows the pseudocode of this approach.

\begin{figure}[t]
\centering
\begin{subfigure}{.47\linewidth}
    \centering
    \includegraphics[width=\linewidth]{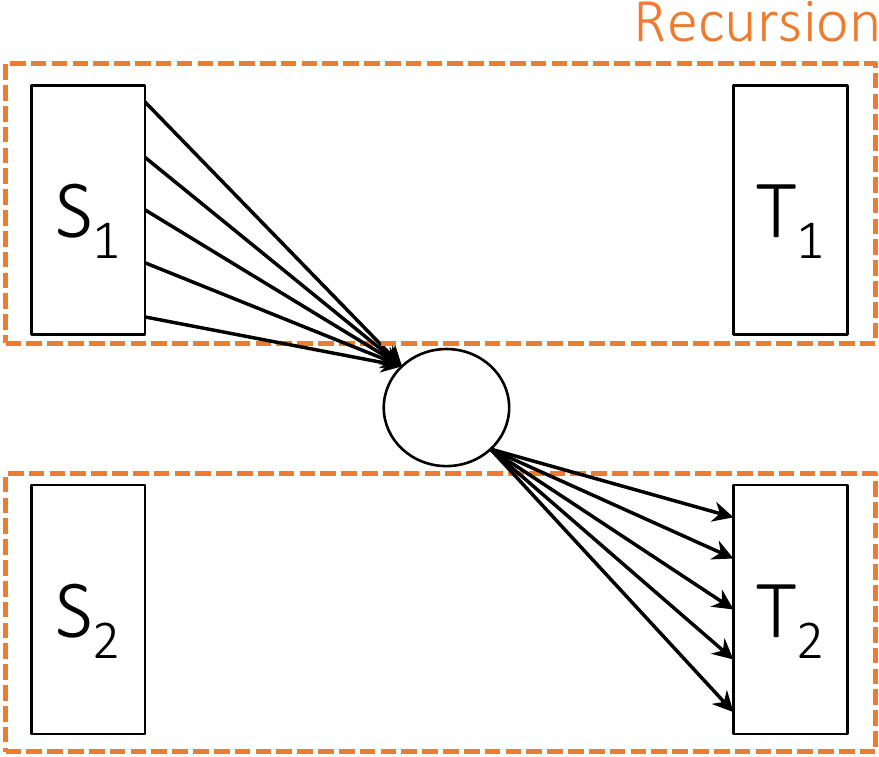}
    \caption{$2$ partitions: depth $2$ with one intermediate node.}
    \label{fig:Inequality_binary}
\end{subfigure}%
\hfill
\begin{subfigure}{.47\linewidth}
    \centering
    \includegraphics[width=\linewidth]{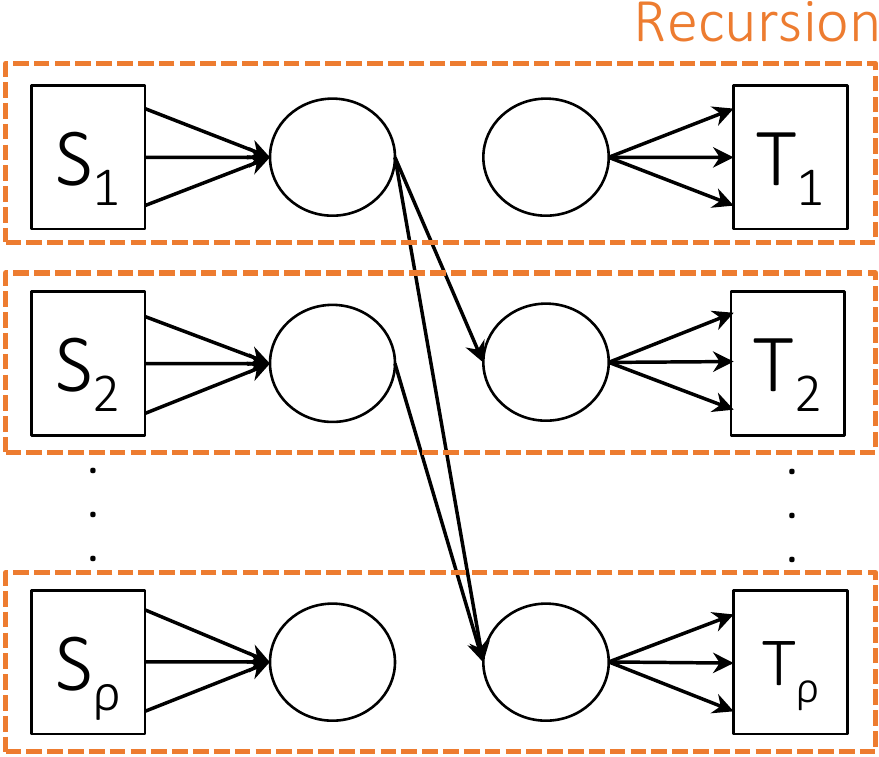}
    \caption{$\partitions$ partitions: depth $3$ with $2 \partitions$ intermediate nodes.}
    \label{fig:Inequality_many}
\end{subfigure}
\caption{Binary vs Multi-way partitioning for inequalities.}
\label{fig:splitting}
\end{figure}

\begin{algorithm}[tb]
\setstretch{0.85}   %
\small
\SetAlgoLined
\LinesNumbered
\SetKwFunction{ineq}{\IneqMultiFun}
\textbf{Input}: Relations $S, T$, nodes $v_s, v_t$ for $s \in S, t \in T$,\\\phantom{Inputt: }predicate $\theta \equiv S.A < T.B$\\
\textbf{Output}: A \TLFG of the join $S \bowtie_\theta T$\\
Sort $S, T$ according to attributes $A, B$\;
\ineq{$S, T, \theta$}\;

\SetKwProg{myproc}{Procedure}{}{}
\SetKwFunction{distinct}{vals}
\myproc{\ineq{$S, T, \theta$}}{
$d =$ \distinct{S $\cup$ T}\algocomment{Number of distinct A, B values}\;
\lIf{d == 1}{
\KwRet \algocomment{Base case} \label{alg_line:base_case_ineq}
}
$\partitions = \lceil \sqrt(d) \rceil$ \algocomment{Number of partitions}\;
Partition $(S \cup T)$ into $(S_1 \cup T_1), \ldots, (S_\partitions \cup T_\partitions)$ with $\rho$-quantiles of distinct values as pivots\;
\For{$i\gets1$ \KwTo $\partitions$}{
            Materialize intermediate nodes $x_i, y_i$\;
            \lForEach{$s$ in $S_i$}{
                    Create edge $v_s \longrightarrow x_i$ \label{alg_line:s_to_x}
            }    
            \lForEach{$t$ in $T_i$}{
                    Create edge $y_i \longrightarrow v_t$
                    \label{alg_line:y_to_t}
            }  
            \lFor{$j\gets1$ \KwTo $i-1$}{
                    Create edge $x_j \longrightarrow y_i$
                    \label{alg_line:x_to_y}
            }
            \ineq{$S_i, T_i, \theta$} \algocomment{Recursive call}\;
        }
}

\caption{Multiway partitioning}
\label{alg:ineq}
\end{algorithm}

\begin{lemma}
\label{lem:inequality}
Let $\theta$ be an inequality predicate
between relations $S, T$ of total size $n$.
A duplicate-free \TLFG of the join $S \bowtie_\theta T$
of size $\O(n \log\log n)$ and depth $3$ can be constructed in $\O(n \log n)$ time.
\end{lemma}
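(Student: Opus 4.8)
The plan is to analyze \cref{alg:ineq} (multiway partitioning) directly, proving four things in order: that its output is a valid \TLFG, that the depth is $3$, that it is duplicate-free, and that its size and construction time meet the stated bounds. I assume w.l.o.g.\ $\theta \equiv S.A < T.B$; the other strict inequality is symmetric, and a non-strict inequality is obtained by the base-case tweak from \cref{sec:inequality} (connecting the two same-valued blocks through one fresh intermediate node), which adds only $\O(n)$ nodes and edges in total and paths of length $2$, hence changes none of the bounds. The structural fact that drives everything is: the algorithm partitions $S\cup T$ by $\rho$-quantiles of the \emph{distinct} join values, so the blocks $S_1\cup T_1,\dots,S_\rho\cup T_\rho$ are contiguous, increasing ranges of values; in particular for $j<i$ every $A$-value in $S_j$ is strictly below every $B$-value in $T_i$, and this value-order is inherited among descendant blocks.

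\textbf{\TLFG correctness, depth, and duplicate-freeness.} Each recursive call materializes fresh nodes $x_i,y_i$ and adds only edges $v_s\!\to\! x_i$ ($s\in S_i$), $x_j\!\to\! y_i$ ($j<i$), and $y_i\!\to\! v_t$ ($t\in T_i$). So source nodes have only outgoing edges, target nodes only incoming ones, no edge joins the intermediate nodes of two distinct calls, and every source-to-target path has the form $v_s\!\to\! x_j\!\to\! y_i\!\to\! v_t$ with $j<i$ — three edges — which already gives acyclicity and depth $3$. For condition (3) of the \TLFG definition I would induct on the recursion with the invariant that a call on $(S',T')$ certifies exactly the joining pairs of $S'\times T'$. ``Path $\Rightarrow$ join'' is immediate: $v_s\!\to\! x_j\!\to\! y_i\!\to\! v_t$ forces $s\in S_j$, $t\in T_i$, $j<i$ in one call, hence $s.A< t.B$. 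For ``join $\Rightarrow$ path'', note that the number of distinct values in a block strictly decreases — from $d$ to at most $\lceil d/\lceil\sqrt d\,\rceil\rceil\le\sqrt d+1<d$ when $d\ge2$ (where $\rho=\lceil\sqrt d\,\rceil\ge2$) — so if $s.A< t.B$ (hence $s.A\ne t.B$) there is a unique recursion level at which $s$ and $t$ first fall into different blocks; value-order is inherited, so $s$'s block index is strictly below $t$'s and the path is created there. The same picture yields duplicate-freeness: before that level $s,t$ share a block $S_i,T_i$ but no edge $x_i\!\to\! y_i$ exists; after it they never co-reside; and at the separating level $v_s$ has a single edge into that call's gadget and only $y_i$ reaches $v_t$. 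Hence the duplication factor is exactly $1$.

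\textbf{Size and running time.} The recursion tree has depth $\O(\log\log n)$: with $d\le n$ distinct values at the root, $\log_2 d_{\ell+1}\le 1+\tfrac12\log_2 d_\ell$, so $\log_2 d_\ell$ tends to $2$ geometrically and $d_\ell=\O(1)$ after $\Theta(\log\log n)$ levels. Now amortize per level. At a fixed level the blocks partition both the (at most $n$) surviving tuples and the (at most $n$) distinct values; a block $P$ with $m_P$ tuples and $d_P$ distinct values contributes $\le m_P$ source/target edges, $\binom{\rho_P}{2}=\O(\rho_P^2)=\O(d_P)$ ``$x\!\to\! y$'' edges, and $2\rho_P=\O(\sqrt{d_P})$ intermediate nodes; summing, a level has $\O(\sum_P m_P+\sum_P d_P)=\O(n)$ edges and $\O(\sum_P\sqrt{d_P})=\O(n)$ intermediate nodes. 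Over $\O(\log\log n)$ levels, plus the $2n$ source/target nodes, the \TLFG has size $\O(n\log\log n)$. For time: the initial sort costs $\O(n\log n)$; thereafter, keeping inputs sorted, each call finds its $\rho$-quantiles of distinct values and forms its blocks in $\O(m_P)$ time and builds its edges in $\O(m_P+\rho_P^2)=\O(m_P)$ time, so each level costs $\O(n)$ and the post-sort work is $\O(n\log\log n)$; total $\O(n\log n)$.

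\textbf{Main obstacle.} The two quantitative steps need the care: establishing that the recursion depth is only $\O(\log\log n)$ despite the ceilings and the ``$+1$'' slack in how distinct values split (hence the $\log_2 d_{\ell+1}\le 1+\tfrac12\log_2 d_\ell$ bookkeeping), and setting up the per-level amortization so that the tuple counts and the distinct-value counts each telescope to $\O(n)$ rather than accumulating across levels. Once the ``contiguous increasing blocks'' invariant is pinned down, the correctness, depth, and duplicate-freeness arguments are essentially bookkeeping.
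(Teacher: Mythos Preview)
Your proposal is correct and follows essentially the same approach as the paper's proof: both analyze \cref{alg:ineq} directly, observe that all created paths have the form $v_s\!\to\! x_j\!\to\! y_i\!\to\! v_t$ (depth $3$), and bound size and time via a per-level amortization over an $\O(\log\log n)$-depth recursion tree in which tuple counts and distinct-value counts each sum to $\O(n)$. Your treatment is simply more explicit than the paper's on two points the paper leaves terse---the ceiling bookkeeping for the recursion-depth bound and the ``unique separating level'' argument for duplicate-freeness---but the underlying ideas are identical.
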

\begin{proof}
The arguments for correctness and the duplicate-free property 
are similar to the case of binary partitioning (\cref{lem:inequality_binary}).
For the depth, notice that all the edges we create are 
either from the source nodes to a layer of $x$ nodes (\cref{alg_line:s_to_x}) 
or from $x$ nodes to a layer of $y$ nodes (\cref{alg_line:x_to_y}) 
or from $y$ nodes to target nodes(\cref{alg_line:y_to_t}).
Thus, all paths from source to target nodes have a length of $3$.
The running time is dominated by the $\O(n \log n)$ initial sorting of the relations,
but the recursion (which bounds the space consumption) 
is now more efficient than the binary partitioning case. 
Each recursive step with size $|S| + |T| = n$ requires
$\O(n)$ to partition the sorted relations.
Then, we materialize $\O(n)$ edges for source and target nodes,
$\O(\sqrt{\delta})$ intermediate nodes and $\O(\sqrt{\delta}^2)$ edges between them.
This adds up to $\O(n)$ because $\delta \leq n$.
We then invoke $\partitions = \lceil \sqrt{\delta} \rceil = \O(\sqrt{n})$ recursive calls with sizes
$n_1 + n_2 + \ldots + n_\partitions = n$.
Therefore, in every level of the recursion tree, 
the sizes of all the subproblems add up to $n$.
Since we spend linear time per problem, the total work per level of the
recursion tree is $\O(n)$.
The height $h$ of the tree is the number of times we have to take the square root of $\delta$ (and then the ceil function) in order to reach $d=1$, 
which is $\O(\log \log \delta) = \O(\log \log n)$.
To see this, observe that
$d^{(\frac{1}{2})^h} = 2 \Rightarrow 
(\frac{1}{2})^h \log \delta = 1 \Rightarrow 
h = \log\log \delta$.
Overall, the time spent on the recursion and thus, the size of the \TLFG is bounded by $\O(n \log\log n)$.
\end{proof}

\section{Non-Equality Predicates}
\label{sec:non-equality_details}

A non-equality condition $S.A \neq T.B$ is satisfied if either
$S.A < T.B$ or $S.A > T.B$.
Even though it can be modeled as a disjunction of two inequalities,
we now establish that (in contrast to arbitrary disjunctions),
they do not increase the \TLFG duplication factor.
The main observation is that the pairs which satisfy one of the inequalities cannot satisfy the other one.
Therefore, if we union the two inequality \TLFGs no path will be duplicated.
The guarantees we obtain are the same as the inequality case by using multiway partitioning (once for each inequality).

\begin{lemma}
\label{lem:nonequality}
Let $\theta$ be an non-equality predicate
between relations $S, T$ of total size $n$.
A duplicate-free \TLFG of the join $S \bowtie_\theta T$
of size $\O(n \log\log n)$ and depth $3$ can be constructed in $\O(n \log n)$ time.
\end{lemma}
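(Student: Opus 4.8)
The plan is to exploit the identity $S.A \neq T.B \iff (S.A < T.B) \vee (S.A > T.B)$ and to reuse the multiway-partitioning construction of \Cref{lem:inequality}. First I would build two separate \TLFGs: $G_{<}$ for $S \bowtie_{S.A < T.B} T$ and $G_{>}$ for $S \bowtie_{S.A > T.B} T$. By \Cref{lem:inequality}, each is duplicate-free, has size $\O(n \log\log n)$ and depth $3$, and can be built in $\O(n \log n)$ time; note the sorting of $S$ on $A$ and $T$ on $B$ can be shared between the two calls. I would then form $G$ as the ``union'' of $G_{<}$ and $G_{>}$ in the sense of \Cref{lem:disjunctions}: the source nodes $v_s$ and target nodes $v_t$ are identified across the two graphs, while all intermediate (pivot) nodes are kept distinct.

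Correctness of the path-existence property (condition~3 of the \TLFG definition) then follows directly from \Cref{lem:disjunctions}: a path $v_s \to v_t$ exists in $G$ iff one exists in $G_{<}$ or in $G_{>}$, i.e., iff $s.A < t.B$ or $s.A > t.B$, i.e., iff $s,t$ satisfy $S.A \neq T.B$. The remaining structural claims are additive: $|V|+|E| = \O(n\log\log n) + \O(n\log\log n) = \O(n\log\log n)$, the depth is $\max(3,3)=3$, and the construction time is $\O(n\log n)$, dominated by the (shared) sorting step.

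The only point needing a genuine argument---and the reason the lemma is stated separately rather than obtained as a plain corollary of \Cref{lem:disjunctions}---is the \emph{duplicate-free} property, since \Cref{lem:disjunctions} alone would only bound the duplication factor by $p=2$. The key observation is that the two disjuncts are mutually exclusive: for any fixed $(s,t)$ the conditions $s.A < t.B$ and $s.A > t.B$ cannot both hold. Hence for each joining pair a $v_s \to v_t$ path exists in exactly one of $G_{<}, G_{>}$, and since each of those is itself duplicate-free, $G$ has exactly one such path; the duplication factor is $1$ and no on-the-fly deduplication is needed.

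I expect the main (essentially the only) obstacle to be making this mutual-exclusivity argument fully rigorous with respect to the pivot bookkeeping: confirming that identifying the shared source/target nodes while keeping pivots disjoint does not create a new $v_s \to v_t$ path mixing pivots of both subgraphs. Since every path in either subgraph has the form source $\to$ (pivot layers of that subgraph) $\to$ target and the two pivot-layer sets are disjoint, any $v_s \to v_t$ path in $G$ lies entirely within one subgraph, so no mixed paths arise. This is the detail I would spell out carefully in the full proof.
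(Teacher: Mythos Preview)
Your proposal is correct and follows essentially the same approach as the paper: sort once, invoke the multiway-partitioning inequality construction twice (for $<$ and for $>$), union the two \TLFGs on shared source/target nodes, and argue duplicate-freeness from the mutual exclusivity of $s.A < t.B$ and $s.A > t.B$. Your additional remark about disjoint pivot layers preventing ``mixed'' paths is a nice explicit justification that the paper leaves implicit.
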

\begin{proof}
We sort once in $\O(n \log n)$ and then call the inequality multiway partitioning algorithm twice. 
Thus, we have to spend two times $\O(n \log \log n)$ time and space.
The depth of the final \TLFG is still $3$ since the two \TLFGs are constructed independently.
It also remains duplicate-free since the two inequality conditions cannot hold simultaneously.
Suppose that the calls to $\mathrm{\IneqMultiFun(S, T, S.A < T.B)}$ and 
$\mathrm{\IneqMultiFun(S, T, S.A > T.B)}$ both create a path
between $v_s$ and $v_t$ for two tuples $s \in S, t \in T$.
Then, the two tuples would have to satisfy $s.A < t.B$ and $s.A > t.B$, which is impossible.
\end{proof}

\section{Band Predicates}
\label{sec:band_details}

In this section, we target band predicates of the type $|S.A - T.B| < \epsilon$.
We provide an algorithm that leverages the structure of the band
to achieve asymptotically the same guarantees as the inequality case.
If a band condition is handled as a generic conjunction of inequalities,
then the time spent, as well as the \TLFG size are higher than our specialized construction.

Our algorithm translates the band problem 
into a set of inequality problems for smaller groups of tuples, 
which can then be solved independently.
First, we describe the intuition.
The band predicate consists of two inequalities
$(S.A < T.B + \epsilon)$ and $(S.A > T.B - \epsilon)$ that need to hold simultaneously.
If for some source-target tuples we can guarantee that one of the two inequalities is always satisfied, 
then it suffices to use the inequality algorithm we developed in \cref{sec:inequality} for the other one.
Therefore, the idea is to create groups of tuples with that property
and cover all the possible joining pairs with these groups.

\begin{figure}[t]
\centering
\begin{subfigure}{.42\linewidth}
    \centering
    \includegraphics[width=\linewidth]{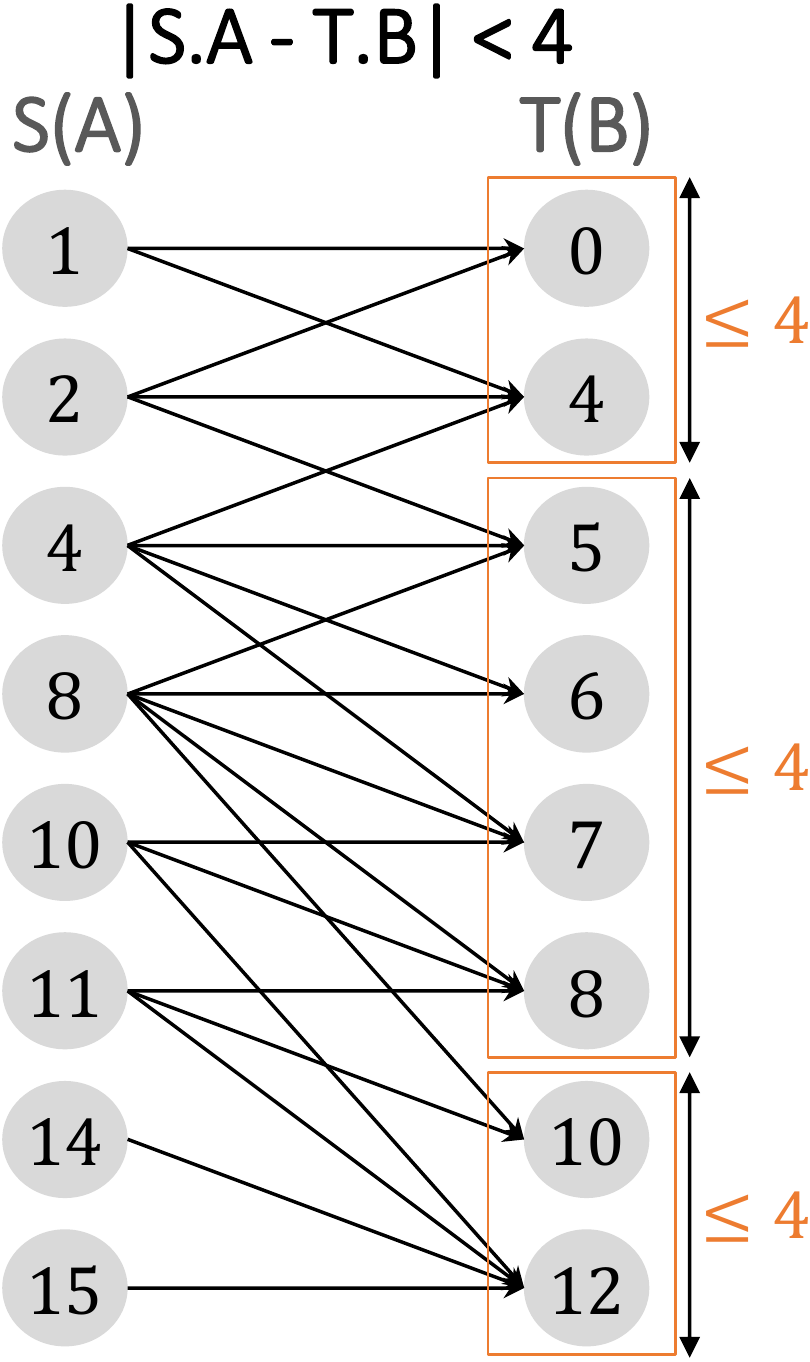}
    \caption{Edges between all $\O(n^2)$ joining pairs and grouping with $\epsilon$-intervals.}
    \label{fig:Band_all_groups}
\end{subfigure}%
\hfill
\begin{subfigure}{.47\linewidth}
    \centering
    \includegraphics[width=\linewidth]{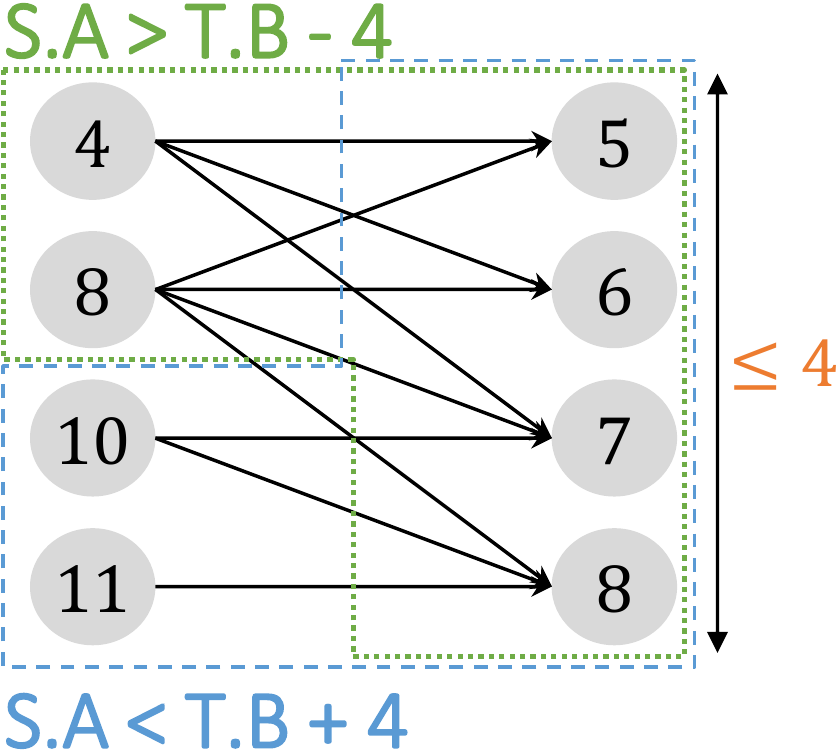}
    \caption{Edges within a group can be modeled as two inequalities.}
    \label{fig:Band_inequalities}
\end{subfigure}
\caption{\Cref{ex:band}: \TLFG construction for band conditions.}
\label{fig:Band}
\end{figure}

The first step is to sort the input relations and group the tuples of the target relation 
into maximal $\epsilon$-intervals.
More specifically, we start from the first $T$ tuple and
group together all those whose $B$ values are at most $\epsilon$ apart from it.
We then repeat the same process starting from the $T$ tuple that is immediately after the group, 
creating $m \leq n$ groups, whose range of $B$ values is at most $\epsilon$.
A source tuple is assigned to a group if it joins with at least one target tuple in the group.
Since the groups represent $\epsilon$-intervals of target tuples, 
\emph{each source tuple can be assigned to at most three groups}.

\begin{example}
\label{ex:band}
\Cref{fig:Band} depicts an example with $\epsilon = 4$.
Notice that as the number of tuples grows, the output is $\O(n^2)$,
e.g., if the domain is fixed or if $\epsilon$ grows together with the domain size.
Initially, we group the target tuples by $\epsilon$ intervals (\cref{fig:Band_all_groups}).
Thus, the first group starts with the first $T$ tuple $0$,
and ends before $5$ since $5 - 0 > \epsilon = 4$.
This process creates three groups of target tuples, each one having a range of $B$ values bounded by $4$.
Then, a source tuple is assigned to a group by comparing its $A$ value with the limits of the group.
For instance, tuple $11$ is assigned to the middle group because $5 - 4 < 11 < 8 + 4$, 
hence it joins with at least one target tuple in that group.

After the assignment of tuples to groups, we work on each group separately.
For example, consider the middle group depicted in \cref{fig:Band_inequalities}.
Source tuple $4$ joins with the top $T$ tuple $5$, which means that the pair $(4, 5)$ satisfies both inequalities.
From that we can infer that $4$ \emph{satisfies the less-than inequality with 
all the target tuples in the group}, 
since their $B$ values are at least $5$.
Thus, we can handle it by using our inequality algorithm for the greater-than condition $(S.A > T.B - \epsilon)$.
Conversely, tuple $10$ joins with the bottom $T$ tuple $8$, 
thus satisfies the greater-than inequality with all the target tuples in the group.
For that tuple, we only have to handle the less-than inequality $(S.A < T.B + \epsilon)$.
Notice that all the source tuples in the group are covered by at least one of the above scenarios.
\end{example}

For each group of source-target tuples we created, there are three cases for
the $S$ tuples:
($1$) those who join with the top target tuple but not the bottom,
($2$) those who join with the bottom target tuple but not the top,
($3$) those who join with all the target tuples.
These are the only three cases since by construction of the group,
the distance between the target tuples is at most $\epsilon$.
Case ($1$) can be handled as a greater-than \TLFG, 
case ($2$) as a less-than,
and case ($3$) as either one of them.
As \cref{alg:band} shows,
$\texttt{\IneqMultiFun}()$ 
is called twice for each group.

\begin{lemma}
\label{lem:band}
Let $\theta$ be a band predicate
between relations $S, T$ of total size $n$.
A duplicate-free \TLFG of the join $S \bowtie_\theta T$
of size $\O(n \log\log n)$ and depth $3$ can be constructed in $\O(n \log n)$ time.
\end{lemma}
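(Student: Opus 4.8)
The plan is to reduce the band join to a collection of independent single-inequality joins on small tuple groups, and then invoke the multiway-partitioning construction of \Cref{lem:inequality} on each. First I would sort $S$ and $T$ on attributes $A$ and $B$ respectively in $\O(n\log n)$ time, then sweep the sorted $T$ to form maximal groups $H_1,\dots,H_m$ ($m\le n$) of target tuples whose $B$-values span a range of at most $\epsilon$; this takes $\O(n)$. Next, assign every $s\in S$ to each group containing a target tuple it joins with. Because a source tuple joins exactly with the target tuples whose $B$-values lie in the open interval $(s.A-\epsilon, s.A+\epsilon)$ of width $2\epsilon$, while each group spans at most $\epsilon$, a source tuple is assigned to at most three groups and strictly contains at most one of them. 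Using the global sorted orders, the assignment and the case classification below are done in $\O(n)$ additional time.

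For a fixed group $H_j$, I would partition its assigned source tuples into: (1) those joining the top target tuple of $H_j$ but not the bottom; (2) those joining the bottom but not the top; (3) those joining both. Since the two target tuples are within $\epsilon$ of each other, these are the only cases. A case-(1) tuple satisfies $S.A > T.B - \epsilon$ for \emph{every} $t\in H_j$, so only the less-than-type constraint $S.A < T.B+\epsilon$ remains and can be enforced by the inequality \TLFG of \Cref{lem:inequality}; symmetrically a case-(2) tuple needs only the greater-than-type constraint; a case-(3) tuple satisfies both automatically (hence joins all of $H_j$) and can be folded into either call. Thus each group contributes two calls to the multiway-partitioning subroutine, whose outputs I union --- they share the source and target nodes $v_s, v_t$ but use fresh intermediate nodes.

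Correctness follows because each target tuple $t$ lies in exactly one group $H_j$; if $(s,t)$ satisfies the band then $s$ is assigned to $H_j$, falls in some case, and the corresponding inequality sub-\TLFG connects $v_s$ to $v_t$ since the \emph{other} inequality is guaranteed inside $H_j$; conversely any path we build certifies the enforced inequality, and the group guarantees the other, so $(s,t)$ indeed joins. Duplicate-freeness: $t$ is in a unique group, and inside that group the pair $(s,t)$ is routed through exactly one of the two inequality sub-\TLFGs (each of which is itself duplicate-free), so there is exactly one $v_s\!\to\!v_t$ path; hence the duplication factor is $1$. Depth stays $3$: every sub-\TLFG has all source-to-target paths of length $3$ with a consistent three-layer structure, and the union over disjoint intermediate nodes preserves both acyclicity and the maximum path length. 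For the size and time bounds, note $\sum_j|H_j| = |T|$ while each source tuple is counted in at most three groups, so the total input size over all $2m$ calls is $\O(n)$; feeding the already-sorted sub-relations avoids re-sorting inside the subroutine, so by \Cref{lem:inequality} each call on a sub-instance of size $n_j$ uses $\O(n_j\log\log n_j)$ time and space beyond sorting, and $\sum_j n_j\log\log n_j \le \bigl(\sum_j n_j\bigr)\log\log n = \O(n\log\log n)$. The overall running time is therefore dominated by the single initial $\O(n\log n)$ sort.

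The main obstacle I expect is the bookkeeping in the size/time analysis: one must be careful that the per-group recursion inside \Cref{lem:inequality} is not recharged for sorting (otherwise the bound degrades to $\O(n\log^2 n)$-style terms), and that the ``at most three groups per source tuple'' bound is applied uniformly so that the union of all sub-\TLFGs genuinely has size $\O(n\log\log n)$ rather than $\O(n\log n\log\log n)$. A secondary subtlety is making the case-(3) handling concrete, so that it neither duplicates paths (if both calls covered such a tuple) nor omits them.
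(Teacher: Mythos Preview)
Your proposal is correct and follows essentially the same approach as the paper's proof: both bucket $T$ into maximal $\epsilon$-intervals, assign each $s\in S$ to at most three buckets, split the source side of each bucket into two cases by which extreme target it joins (folding the ``joins both extremes'' case into exactly one side to preserve duplicate-freeness), and invoke the multiway-partitioning inequality construction twice per bucket on already-sorted sub-relations. The only cosmetic differences are your top/bottom convention being the mirror of the paper's and your linear-scan assignment versus the paper's binary search; neither affects the bounds or the argument.
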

\begin{proof}
First, we create disjoint $T$ groups based on $\epsilon$-intervals 
and assign each $S$ tuple to all groups where it has joining partners
(\crefrange{alg_line:band_groups_start}{alg_line:band_groups_end}).
This can be done with binary search in $\O(n \log n)$.
Each $T$ tuple is assigned to a single group.
An $S$ tuple cannot be assigned to more than three consecutive groups 
since their values span a range of at least $2\epsilon$.
Within each group $H_j=(S_j \cup T_j)$, 
the correctness of our algorithm follows from the fact that 
the $T_j$ tuples are at most $\epsilon$ apart on the $B$ attribute.
Since all the assigned $S_j$ tuples have at least one joining partner in $T_j$, 
they have to join either with the first $T_j$ tuple (in sorted $B$ order) or with the last one.
Recall that the band condition can be rewritten as 
$(S.A < T.B + \epsilon) \wedge (S.A > T.B - \epsilon)$, 
i.e., two inequality conditions that both have to be satisfied.
In case some $s \in S_j$ joins with the first $T_j$ tuple, 
then we know that the less-than condition is always satisfied for $s$ within the group $H_j$.
Thus, we just need to connect $v_s$ with all $v_t$ for $t \in T_j$ that satisfy the greater-than condition.
We argue similarly for the case when $s$ joins with the last tuple of $T_j$, 
where we have to take care only of the less-than condition.
Finally, there is also the possibility that $s$ joins with all $T_j$ tuples.
In that case, both inequality conditions are satisfied -- 
we assign those tuples to only one of the inequalities 
which ensures the duplicate-free property. 
For the running time, the total size of the groups we create is
$n_1 + n_2 + \ldots + n_m \leq 3n$.
If for a problem of size $|S|+|T| = n$ where the relations have been sorted, $\mathcal{T}_B(n)$ is the time for factorizing a band condition 
and $\mathcal{T}_I(n)$ for an inequality,
we have $\mathcal{T}_B(n) = \O(n) + 2\mathcal{T}_I(n_1) + 2\mathcal{T}_I(n_2) + \ldots + 2\mathcal{T}_I(n_m)$,
since we call the inequality algorithm twice within each group.
For $\mathcal{T}_I(n) = \O(n \log\log n)$, we get $\mathcal{T}_B(n) = \O(n \log\log n)$, which also bounds the size of the \TLFG.
Each call to the inequality algorithm involves different $S, T$ pairs, 
giving us the duplicate-free property and the same depth as the inequality \TLFG.
\end{proof}

\begin{algorithm}[tb]
\setstretch{0.85}   %
\small
\SetAlgoLined
\LinesNumbered
\SetKwFunction{ineq}{\IneqMultiFun}
\SetKwFunction{band}{\BandFun}
\textbf{Input}: Relations $S, T$, nodes $v_s, v_t$ for $s \in S, t \in T$,\\\phantom{Inputt: }predicate $\theta \equiv |S.A - T.B| < \epsilon$\\
\textbf{Output}: A \TLFG of the join $S \bowtie_\theta T$\\
Sort $S, T$ according to attributes $A, B$\;
\ForEach{($S_b$, $T_b$, $\theta_b$) in \band($S$, $T$, $\theta$)}
{
    \ineq{$S_b$, $T_b$, $\theta_b$}\;
}

\SetKwProg{fn}{Function}{}{}
\fn{\band{$S, T, \theta$}}{

ineqs = $[]$\;
\algocomment{Find the limits of the groups on the right}
\label{alg_line:band_groups_start}\;
$H_1.\mathrm{start} = t_1.B, \; m = 1$\;
\For{$i\gets1$ \KwTo $|T|$}{
        \If{$t[i].B > H_m.\mathrm{start} + \epsilon$}{
            $H_m.\mathrm{end} = T[i-1].B$\;
            $m++$\;
            $H_m.\mathrm{start} = T[i].B$
        }
}
$H_m.\mathrm{end} = T[i].B$
\label{alg_line:band_groups_end}\;
\ForEach{$H_j$ in $[H_1, \ldots, H_m]$}{
    \algocomment{Assign tuples to the group}\;
    $S_j = [s \in S \;|\; H_j.\mathrm{start} - \epsilon \leq s.A \leq \mathrm{H_j.end} + \epsilon$]\;  
    $T_j = [t \in T \;|\; H_j.\mathrm{start} \leq t.B \leq \mathrm{H_j.end}]$\;
    \algocomment{Greater-than inequality}\;
    $S_> = [s \in S_j \;|\; s.A < H_j.\mathrm{start} + \epsilon ]$\;
    ineqs.add(($S_<, T_j, S.A > T.B - \epsilon$))\;
    \algocomment{Less-than inequality}\;
    ineqs.add(($S_j - S_>, T_j, S.A < T.B + \epsilon$))\;
    \KwRet ineqs\;
}
}

\caption{Handling a band predicate}
\label{alg:band}
\end{algorithm}

\section{Additional Proofs}

\subsection{Proof of \cref{thm:genericComplexity}}

Since each \TLFG that is in-between two relation layers
has $\O(|E|)$ edges and $\O(\lambda)$ layers, 
the \dpgraph has $\O(|E|)$ edges and $\O(\lambda)$ layers as well.
That is because the number of relation layers is $\ell$,
which is considered to be constant.
The theorem follows by applying \Cref{lem:acyclicEquiJoinComplexity} on the resulting \dpgraph.

\subsection{Proof of \cref{lem:inequality_binary}}

Correctness is easy to establish by induction: 
each recursive step
connects precisely the joining pairs between the two partitions
and the graph within each partition is correct inductively.
For the running time, we begin by sorting the relations in $\O(n \log n)$.
We analyze the recursion in terms of its recursion tree.
Each recursive step with size $|S| + |T| = n$ requires
$\O(n)$ to partition the sorted relations.
Then, we materialize one intermediate node and for each source and target node at most one edge.
We then invoke $2$ recursive calls with sizes
$n_1 + n_2 = n$.
Therefore, in every level of the recursion tree, 
the sizes of all the subproblems add up to $n$.
Since we spend linear time per recursive step, 
the total work per level of the
recursion tree is $\O(n)$.
We always cut the distinct values (roughly) in half,
thus the height $h$ of the tree is 
$\O(\log d)= \O(\log n)$.
Overall, the time spent on the recursion is $\O(n h)=\O(n \log n)$, 
which also bounds the size of the \TLFG.
Across all recursive steps, edges are created 
either from source nodes to intermediate nodes
or from intermediate nodes to target nodes.
Thus, the length of all paths from source to target nodes is $2$.
The invariant property which ensures that the \TLFG is duplicate-free is that
whenever a recursive step is called on a set of $S', T'$ tuples,
no path exists between $v_{s'}$ and $v_{t'}$ for $s' \in S', t' \in T'$.

\subsection{Handling equality predicates in a conjunction}

\begin{lemma}
\label{lem:equalities}
Let $\theta$ be a conjunction of predicates between relations $S, T$ of total size $n$,
and $\theta'$ be that conjunction with all the equality predicates removed.
If for $S', T'$ with $|S'|+|T'| = n'$ we can construct a \TLFG 
of the join $S' \bowtie_{\theta'} T'$ 
of size $\O(f(n'))$,
depth $\depth$, 
and duplication factor $u$
in time $\O(g(n'))$, 
and $f, g$ are superadditive functions,
then we can construct a \TLFG of the join $S \bowtie_{\theta} T$
of size $\O(f(n))$,
depth $\depth$, 
and duplication factor $u$
in time $\O(g(n) + n)$.
\end{lemma}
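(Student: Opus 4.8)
The plan is to reduce the general conjunction to the equality-free case by first partitioning both relations according to the equality predicates and then invoking the assumed construction inside each partition.

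First I would collect all equality predicates in $\theta$; each has the form $S.A = T.B$, and together they define a composite equi-join key. Group the tuples of $S$ into classes $S_1,\dots,S_m$ that agree on the $S$-side attributes of these predicates, and symmetrically group $T$ into $T_1,\dots,T_m$ so that $S_i$ and $T_i$ carry the same key value (keys appearing on only one side contribute tuples that join with nothing, which we keep as isolated nodes). By the linear-time indexing assumption of \cref{sec:complexity}, this partitioning costs $\O(n)$ total time and space. The crucial observation is the characterization: a pair $s \in S$, $t \in T$ satisfies all equality predicates of $\theta$ iff $s$ and $t$ land in the same partition $i$, and in that case the pair satisfies $\theta$ iff it satisfies $\theta'$.

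Next, for each $i \in [m]$ I would run the assumed construction on $S_i \bowtie_{\theta'} T_i$, obtaining a \TLFG $G_i$ with source nodes $\{v_s : s \in S_i\}$, target nodes $\{v_t : t \in T_i\}$, depth $\le \depth$, duplication factor $\le u$, size $\O(f(n_i))$, and construction time $\O(g(n_i))$, where $n_i = |S_i| + |T_i|$. The \TLFG for $S \bowtie_\theta T$ is then the disjoint union $\bigcup_i G_i$ together with the isolated nodes for leftover tuples. This is still a \TLFG: the node-sets $\{v_s\}_{s\in S}$ and $\{v_t\}_{t\in T}$ are as required because each tuple belongs to exactly one class; source nodes keep only outgoing and target nodes only incoming edges since this held in every $G_i$; and the path condition follows from the characterization above, since the $G_i$ are vertex-disjoint, so a directed path from $v_s$ to $v_t$ exists iff $s$ and $t$ lie in the same $G_i$ and are connected there, i.e.\ iff they agree on the equality key and satisfy $\theta'$, i.e.\ iff they satisfy $\theta$. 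The depth of the union is $\max_i \depth_i \le \depth$, and the duplication factor is $\max_i u_i \le u$ because no path crosses between the disjoint $G_i$.

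Finally I would aggregate the cost: since $\sum_i n_i = n$ and $f,g$ are superadditive, $\sum_i f(n_i) \le f\!\left(\sum_i n_i\right) = f(n)$ and likewise $\sum_i g(n_i) \le g(n)$; adding the $\O(n)$ spent on partitioning yields size $\O(f(n))$ and construction time $\O(g(n) + n)$, as claimed. I do not expect a serious obstacle here; the only points requiring care are the bookkeeping for key values present on only one side and the routine verification that the disjoint union genuinely meets all three \TLFG conditions, for which the ``each tuple in exactly one class'' observation does all the work.
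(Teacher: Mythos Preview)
Your proposal is correct and follows essentially the same approach as the paper: partition $S$ and $T$ in $\O(n)$ by the composite equality key, invoke the $\theta'$ construction independently on each partition, take the disjoint union, and use superadditivity of $f$ and $g$ to aggregate the per-partition bounds. Your write-up is in fact more careful than the paper's about verifying the three \TLFG conditions and handling keys that appear on only one side.
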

\begin{proof}
To construct the \TLFG for $S \bowtie_{\theta} T$, 
we gather all the equality predicates and use hashing to create partitions 
of tuples that correspond to equal joining values for the equality predicates.
This takes $\O(n)$.
We then construct the \TLFG for each partition independently 
with the conditions $\theta'$ through some algorithm $\mathcal{A}$.
If $\mathcal{A}$ elects to connect two nodes, 
then they satisfy both $\theta'$, 
and also the equalities since they belong to the same partition. 
Conversely, two nodes that remain disconnected at the end of the process 
either do not belong to the same equality partition 
or were not connected by $\mathcal{A}$, thus do not satisfy $\theta'$.

Assume that the number of tuples in each partition 
is $n_i, i \in [\partitions]$ with $n_1 + \ldots + n_\partitions = n$.
The total time spent on each partition is $\O(g(n_1) + \ldots + g(n_\partitions))$
which by the superadditivity property of $g$ is $\O(g(n_1 + \ldots + n_\partitions)) = \O(g(n))$.
The same argument applies to the size, giving us $\O(f(n))$.
Since the partitions are disjoint, we cannot create additional duplicate paths 
apart from the ones created by $\mathcal{A}$, 
or increase the depth of each \TLFG.
\end{proof}

\subsection{Proof of \cref{lem:conjunction}}

As a first step, all the equality predicates are handled by \cref{lem:equalities}.
Since the time and size guarantees we show are $\O(n \log^p n)$ 
and $n \log^p n$ is a superadditive function,  
they are unaffected by this step.
The remaining inequality predicates are handled by \cref{alg:conj}.
We denote by $\mathcal{T}_I(n, p)$ the running time for  
$n$ tuples and $p$ inequality predicates.
We proceed by induction on the number of predicates $p$ to show that 
$\mathcal{T}_I(n, p) \leq f(p) n \log^p n$ for some function $f$ and sufficiently large $n$.
First, assume that all the predicates are inequalities.
For the base case $p = 1$, the analysis is the same as in the proof of \cref{lem:inequality_binary}:
Tthe height of the recursion tree is $\O(\log n)$ and the total time is $\O(n \log n)$ together with sorting once.
In other words, we have $\mathcal{T}_I(n, 1) \leq c n \log n$ for sufficiently large $n$.
For the inductive step, we assume that $\mathcal{T}_I(n, p-1) \leq f(p-1) n \log^{p-1} n$.
The inequality at the head of the list creates a recursion tree where every node has a subset of the tuples $n'$ and calls the next inequality, 
thus is computed in $\mathcal{T}_I(n', p-1)$.
The problem sizes in some level of the tree add up to $n_1 + \ldots + n_\partitions = n$.
Thus, the work per level is bounded by 
$\mathcal{T}_I(n_1, p-1) + \ldots + \mathcal{T}_I(n_t, p-1)
\leq f(p-1) n_1 \log^{p-1} n_1 + \ldots + f(p-1) n_t \log^{p-1} n_t
\leq f(p-1) n \log^{p-1} n$.
The height of the tree is $\O(\log n)$, 
thus the total work in the tree is bounded by
$c' \log n f(p-1) n \log^{p-1} n = c' f(p-1) n \log^p n$.
We also take into account the time for sorting according to the attributes of the current inequality, 
which is bounded by $c'' n \log n$.
Thus, we get that $\mathcal{T}_I(n, p) \leq c' f(p-1) n \log^p n + c'' n \log n$.
If we pick a function $f$ such that $f(1) \geq c$
and $f(p) \geq c' f(p-1) + \frac{c''}{log^{p-1}n}$,
then $\mathcal{T}_I(n, p) \leq f(p) n \log^{p} n$.
This completes the induction, establishing that 
$\mathcal{T}_I(n, p) = \O(n \log^p n)$ in data complexity.

The size of the TLFG cannot exceed the running time, thus it is also $\O(n \log^p n)$.
The depth is $2$ because in all cases we use the binary partitioning method
and the duplication factor is $1$ because 
we only connect tuples in the base case of one predicate $p=1$,
which we already proved does not create duplicates (\cref{lem:inequality_binary}).

\subsection{Proof of \cref{lem:disjunctions}}

Correctness follows from the fact that the paths in the constructed 
\TLFG is the union of the paths in the \TLFGs for $S \bowtie_{\theta_i} T$.
For the depth, note that each $\theta_i$ is processed independently, 
thus the component \TLFGs do not share any nodes or edges other than the endpoints.
A path from $v_s$ to $v_t$ for $s \in S, t \in T$ may only be duplicated by
different \TLFG constructions since each one is duplicate-free.
Thus, the duplication factor cannot exceed the number of predicates $p$.

\subsection{Proof of \cref{lem:tlfg_improved}}

\Cref{lem:inequality,lem:band,lem:nonequality} together prove
\cref{lem:tlfg_improved}.

\subsection{Proof of \cref{th:enumeration_improved}}

For each edge of the theta-join tree, we construct a \TLFG by processing the join condition as a DNF formula.
The guarantees of the theorem follow from \Cref{thm:genericComplexity} by applying the properties of the \TLFGs we construct,
along with a duplicate elimination filter.

To construct each \TLFG, disjunctions are handled according to \cref{lem:disjunctions}
and for conjunctions, the proof is the same as that of \cref{lem:conjunction} with some changes:
we use ($1$) multiway partitioning for the base case of $p=1$ in the conjunction algorithm
and ($2$) specialized constructions for non-equalities and bands (see \Cref{lem:tlfg_improved}).
Equalities are removed from the conjunction because of
\cref{lem:equalities} and the fact that
$n \log^{p} n$ and
$n \log^{p-1} n \cdot \log\log n$ 
are superadditive functions.
In the conjunction algorithm, we use multiway partitioning for $p=1$ and binary partitioning for $p > 1$.
Therefore $T_I(n, 1) \leq c n \log\log n$,
resulting in $T_I(n, p) = \O(n \log^{p-1} n \cdot \log\log n)$ overall.
Non-equalities and bands are translated into inequalities by using the techniques we developed in \cref{sec:non-equality_details,sec:band_details}:
a non-equality results into two inequalities on the same sets of nodes,
while a band creates multiple inequality subproblems.
We use the same arguments as in the proofs of \cref{lem:nonequality,lem:band}.
We denote by $\mathcal{T}_I(n, p), \mathcal{T}_N(n, p), \mathcal{T}_B(n, p)$ the running time for  
$n$ tuples and $p$ predicates
when the head of the list of predicates is an inequality, non-equality or band respectively. 
$\mathcal{T}_N(n, p) = \O(\mathcal{T}_I(n, p) + \mathcal{T}_I(n, p))$ and 
$\mathcal{T}_B(n, p) = \O(n) + 2\mathcal{T}_I(n_1, p) + 2\mathcal{T}_I(n_2, p) + \ldots + \mathcal{T}_I(n_m, p)$ for
$n_1 + n_2 + \ldots + n_m \leq 3n$.
By these formulas, and since 
$\mathcal{T}_I(n, p) = \O(n \log^{p-1} n \cdot \log\log n)$,
it is easy to show the same bound for the other two.
This proves the space consumption of the \TLFGs,
thus the space bound of the theorem.

As we are enumerating subtrees of the \dpgraph in order,
we detect those that correspond to duplicate query results
and filter them out using a lookup table.
The duplication factor of our \TLFGs is $1$, except if we have disjunctions (\cref{lem:disjunctions}).
Let $u_{max}$ be the maximum duplication factor among the constructed \TLFGs.
The number of ``duplicate'' query answers (that correspond to the same answer $q$ of $Q$) are bounded by 
$u_{max}^\ell$, where $\ell$ is the number of $Q$ atoms.
That depends only on the query size which we consider as constant, thus it is $\O(1)$.
If the time for each answer without the filtering is $\TT'(k)$,
then we have that $\TT(k) = \O(\TT'(k \cdot u_{max}^\ell)) 
= \O(\TT'(k))$,
since $u_{max}$ and $\ell$ are $\O(1)$.

\section{SQL Code for Queries used in Experiments}

Query $Q_{S1}$:

\begin{lstlisting}
SELECT   *, S1.W + S2.W as Weight
FROM     S1, S2
WHERE    S1.A2 < S2.A3
ORDER BY Weight ASC
\end{lstlisting}

Query $Q_{S2}$:

\begin{lstlisting}
SELECT   *, S1.W + S2.W as Weight
FROM     S1, S2
WHERE    ABS(S1.A2 - S2.A3) < 50 AND S1.A1 <> S2.A4
ORDER BY Weight ASC
\end{lstlisting}

Query $Q_{R1}$:

\begin{lstlisting}
SELECT   *, R1.Sentiment + R2.Sentiment as Weight
FROM     Reddit R1, Reddit R2
WHERE    R1.To = R2.From AND 
         R2.Timestamp > R1.Timestamp        
ORDER BY Weight ASC
\end{lstlisting}

Query $Q_{R2}$:

\begin{lstlisting}
SELECT   *, R1.Readability + R2.Readability as Weight
FROM     Reddit R1, Reddit R2
WHERE    R1.To = R2.From AND 
         R2.Timestamp > R1.Timestamp AND 
         R2.Sentiment < R1.Sentiment
ORDER BY Weight DESC
\end{lstlisting}

Query $Q_{B}$:

\begin{lstlisting}
SELECT   *, B1.IndivCount + R2.IndivCount as Weight
FROM     Birds B1, Birds B2
WHERE    ABS(B2.Latitude - B1.Latitude) < $\epsilon$ AND 
         ABS(B2.Longitude - B1.Longitude) < $\epsilon$
ORDER BY Weight DESC
\end{lstlisting}

\section{\TLFG Factorization Formulas}

Typically, factorization refers to the process of compacting 
an algebraic formula by factoring out common sub-expressions using the distributivity property \cite{crama11boolean}. 
Under that perspective, factorized databases \cite{olteanu16record}
represent the results of an equi-join efficiently, 
treating them as
a formula built with product and union.
Besides distributivity, d-representations \cite{olteanu15dtrees} 
replace shared sub-expressions with variables,
further improving succinctness through memoization \cite{dpv08book}.
Our \TLFGs directly give a representation of that nature, 
complementing known results on join factorization.
(Note that in addition to supporting joins with non-equality conditions,
in \TLFG the atomic unit of the formulas is a database tuple (hence Tuple-Level),
while in previous work on factorized databases it is an attribute value.)
We illustrate this with \Cref{ex:fact} below.

\begin{example}
\label{ex:fact}
Consider the inequality join $S \bowtie_{A<B} T$.
A naive \TLFG for some example relations $S, T$ is shown in \cref{fig:Inequality_all}.
The join results
can be expressed with the ``flat'' representation:
\begin{equation*}
\Phi = (1 \times 2) \cup (1 \times 3) 
    \cup \ldots \cup (1 \times 6)
    \cup (2 \times 3) \cup \ldots \cup (3 \times 4)
    \cup \ldots
\end{equation*}
where for convenience
we refer to tuples by their $A$ or $B$ value,
and $\times$ and $\cup$ denote Cartesian product
and union respectively.
The flat representation has one term for each query result, 
separated by the union operator.
In terms of the \TLFG, $\times$ corresponds to path concatenation,
and $\cup$ to branching.
To make the formula more compact,
we can factor out tuples that appear multiple times
and reuse common subexpressions by giving them a variable name.
Equivalently, the size of the \TLFG can be reduced if we introduce intermediate nodes,
making the different paths share the same edges.
Such a factorized representation is shown in 
\cref{fig:Inequality_partitioning}.
We can write the corresponding algebraic formula
by defining new variables $v_i, i \in [5]$ for the intermediate nodes:
\begin{align*}
&\Phi_3 = (1 \times v_1) \cup (2 \times v_2) \cup (2 \times v_3) \cup (3 \times v_3), 
\ldots,
(5 \times v_5)\\
& v_1 = (2 \cup 3), v_2 = (3), v_3 = (4 \cup 5 \cup 6),
\ldots,
v_5 = (6).
\end{align*}
Notice that the total size of these formulas is asymptotically 
the same as the \TLFG size.
\end{example}

\section{Application of the Technique to Unranked Enumeration}

As a side benefit, our techniques are also applicable to \emph{unranked}
enumeration (where answers can be returned in \emph{any} order) for joins with
inequalities, returning $k$ answers in $\O(n \polylog n + k)$.

\label{th:enumeration_improved}
Let $Q$ be a full acyclic theta-join query over a database $D$ of size $n$
where all the join conditions are DNF formulas of
equality, inequality, non-equality, and band predicates.
Let $p$ be the maximum number of predicates, excluding equalities,
in a conjunction of a DNF on any edge of the theta-join tree.
Ranked enumeration of the answers to $Q$ over $D$ can
be performed with $\TT(k) = \O(n \log^p n + k \log k)$.
The space requirement
is $\MEM(k) = \O(n \log^{p-1} n \cdot \log\log n+ k)$.

\begin{theorem}
\label{th:unranked_enumeration}
Let $Q$ be a full acyclic theta-join query over a database $D$ of size $n$
where all the join conditions are DNF formulas of
equality, inequality, non-equality, and band predicates.
Let $p$ be the maximum number of predicates, 
excluding equalities, in a conjunction 
of a DNF on any edge of the theta-join tree.
Enumeration of the answers to $Q$ over $D$
in an arbitrary order can
be performed with 
$\TT(k) = \O(n \log^p n + k)$
and
$\MEM(k) = \O(n \log^{p-1} n \cdot \log\log n+ k)$.
\end{theorem}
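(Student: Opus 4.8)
The plan is to reuse the entire machinery built for the ranked case and change only the final traversal. Recall that \Cref{th:enumeration_improved} already shows how to build, in $\O(n \log^p n)$ time and $\O(n \log^{p-1} n \cdot \log\log n)$ space, the \dpgraph $G_Q$ for $Q$: one relation layer per atom, and between adjacent atoms in the theta-join tree a \TLFG obtained by treating the edge's join condition as a DNF (multiway partitioning for a lone inequality, the specialized constructions of \Cref{lem:tlfg_improved} for non-equalities and bands, binary partitioning inside conjunctions, and the union construction of \Cref{lem:disjunctions} for the outer disjunction). Each such \TLFG has $\O(1)$ depth, so $G_Q$ has a constant number $\lambda = \O(1)$ of layers, and every tree solution of $G_Q$ corresponds to a query answer up to the bounded multiplicity discussed below. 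Hence the preprocessing cost and the space bound claimed in the theorem are inherited verbatim from \Cref{th:enumeration_improved}; the only thing to redo is the enumeration phase, where the ranked algorithm of \Cref{thm:genericComplexity} paid the extra $k \log k$ for priority-queue operations.

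First I would run a full semi-join reduction on $G_Q$ --- the standard Yannakakis-style pass, bottom-up then top-down along the layer order --- to delete every node that does not lie on some root-to-sink tree solution. Because $G_Q$ is layered and acyclic this takes time linear in $|G_Q|$ and leaves a graph in which every partial tree solution extends to at least one complete one. Then I would enumerate tree solutions by a depth-first traversal: maintain a partial tree solution and, at each layer, iterate over every node reachable from its frontier; since no branch is a dead end, after descending through the $\O(\lambda) = \O(1)$ layers we emit a complete tree solution having done only $\O(1)$ work since the previous emission. This is exactly constant-delay enumeration of the answers of a full acyclic (indeed free-connex) query over our factorized output, and before accounting for duplicates it already gives $\TT(k) = \O(|G_Q| + k)$ and $\MEM(k) = \O(|G_Q|)$.

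The one subtlety --- and the part that needs the most care --- is duplicate elimination. When the DNF on some edge contains a genuine disjunction, the union \TLFG of \Cref{lem:disjunctions} can have several source-to-target paths for the same pair $(s,t)$, so distinct tree solutions of $G_Q$ can encode the same tuple of input rows. As in the proof of \Cref{th:enumeration_improved}, the duplication factor of each \TLFG is a query-dependent constant (at most the number of disjuncts), hence at most $u_{\max}^{\ell} = \O(1)$ tree solutions map to any single query answer. I would therefore keep a hash set of the answers already output and, before emitting a candidate produced by the DFS, test membership in $\O(1)$ time and skip it if seen. Since each genuine answer is produced only $\O(1)$ times, producing the first $k$ distinct answers costs $\O(k)$ enumeration steps on top of the $\O(|G_Q|)$ reduction, giving $\TT(k) = \O(n \log^p n + k)$; the hash set adds $\O(k)$ to the $\O(|G_Q|) = \O(n \log^{p-1} n \cdot \log\log n)$ space, matching the stated $\MEM(k)$. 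The main obstacle is precisely arguing that this on-the-fly filtering does not break the constant delay: it is the bounded-multiplicity property of our \TLFGs that rescues it, and the argument would fail for an arbitrary factorized representation whose duplication factor grows with $n$.
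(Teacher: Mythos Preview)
The paper does not spell out a proof of this theorem; it merely remarks that ``our techniques are also applicable to unranked enumeration'' and restates the bounds, leaving the details implicit in the construction behind \Cref{th:enumeration_improved}. Your proposal correctly supplies those details---reuse the \dpgraph $G_Q$ from \Cref{th:enumeration_improved}, replace the priority-queue traversal with a Yannakakis-style semi-join reduction followed by a constant-delay DFS over the $\O(1)$-depth layered structure, and remove disjunction-induced duplicates on-the-fly via the same $\O(1)$-duplication-factor argument used in the ranked case---and this is exactly the approach the paper intends.
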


\subsection{Experimental Comparison}
\label{sec:exp_duplicates}

We use $Q_T^U$ for the query that is the same as $Q_T$, but without the ranking.
To illustrate how the duplicates from disjunctions or the presence of ranking
change the delay of the enumeration,
we plot $\TT(k)$ for query $Q_{T}$,
together with its disjunction $Q_{TD}$
and unranked $Q_{T}^U$ variants (\Cref{exp:syn_qt_msf32}).
For $Q_{TD}$ the constructed \TLFG is $~3$ times larger (because of the three date inequalities),
which is reflected in the time it starts to return results.
The delay is higher by a similar factor,
since the three predicates in the disjunction have a very high overlap.
In fact, that is the worst case for our technique because of the high number of duplicates that have to be filtered.
As illustrated in \Cref{exp:syn_qt_msf64}, this number is not affected by the size of the database and only depends on the query. 
Without the ranking, the enumeration for $Q_{T}^U$
starts slightly faster than $Q_T$
and has significantly lower delay between results. 

\begin{figure}[t]
    \centering   
    \begin{subfigure}{0.5\linewidth}
        \centering
        \includegraphics[height=2.8cm]{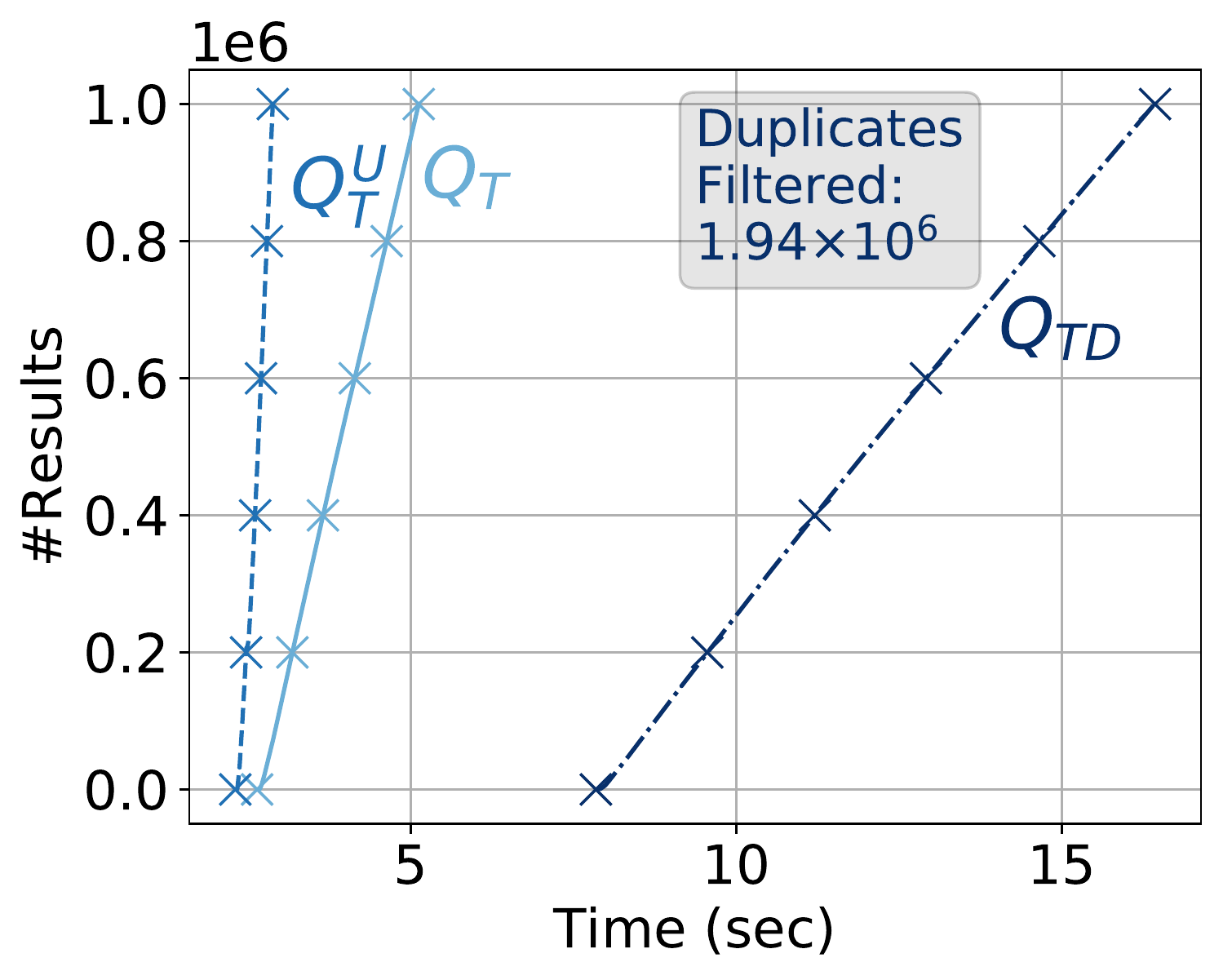}
        \caption{Scale factor $= 2^5 \times 10^{-3}, \ell = 2$.}
		\label{exp:syn_qt_msf32}
    \end{subfigure}%
    \hfill
    \begin{subfigure}{0.5\linewidth}
        \centering
        \includegraphics[height=2.8cm]{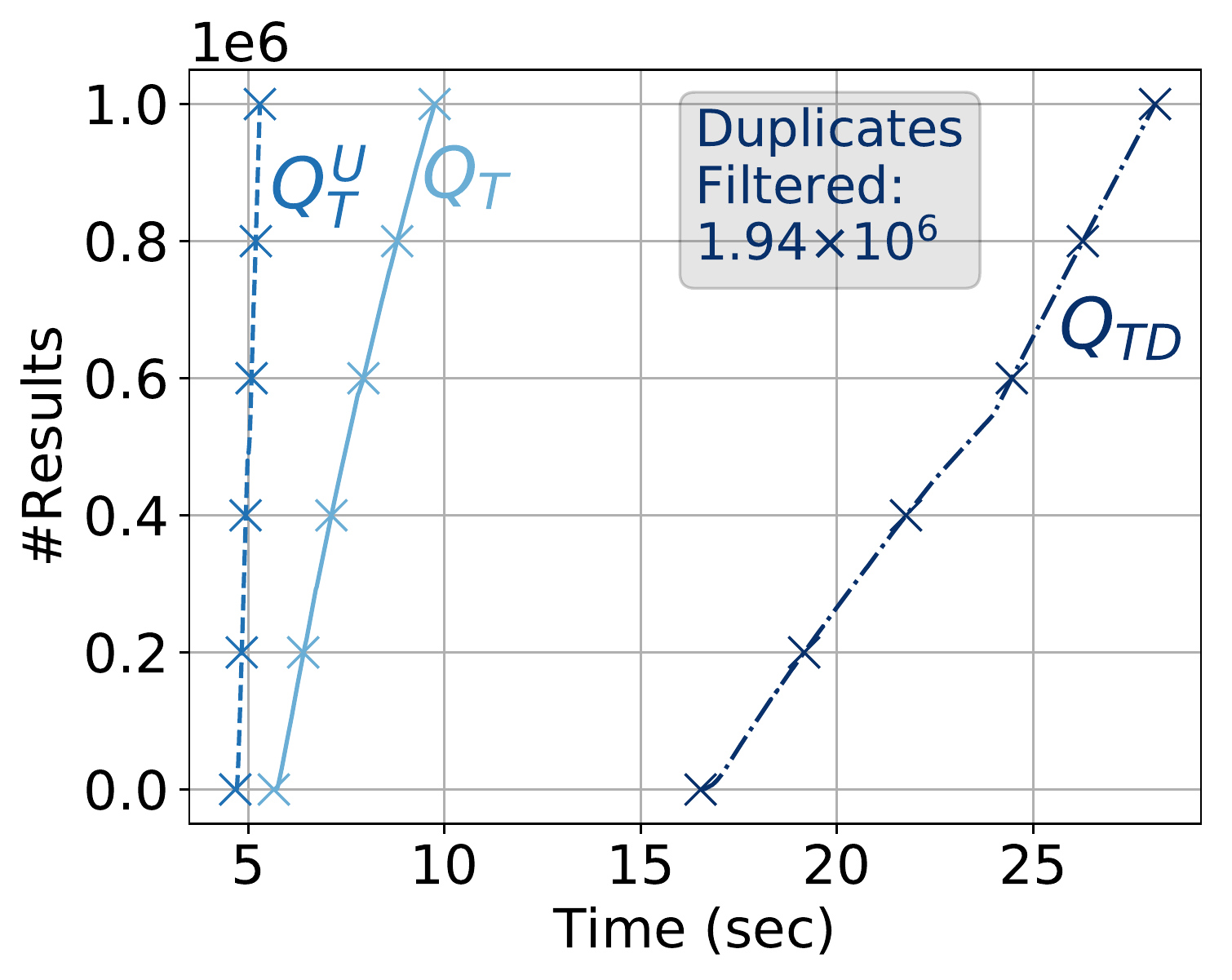}
        \caption{Scale factor $= 2^6 \times 10^{-3}, \ell = 2$.}
		\label{exp:syn_qt_msf64}
    \end{subfigure}
    \vspace{-1mm}
    \caption{\Cref{sec:exp_duplicates}: Variants of query $Q_T$ with disjunctions ($Q_{TD}$) and unranked enumeration ($Q_T^U$) on TPC-H data.}
    \label{exp:tpch_ttk}
\end{figure}

\section{Why the DBMS Top-k Plan Must Produce the Entire Output}

\definecolor{wtcl}{HTML}{5A9AD4}
\newcommand{\wtcl}[1]{{{\color{wtcl}{#1}}}}

In this section, we discuss why any approach that first applies the join
and then the ranking (e.g. with a heap over the join results)
will unavoidably spend $\O(n^2)$ even for a simple binary join with one inequality predicate.

First, we would like to emphasize that \textbf{we do not compare against a naive $\O(n^2)$ join algorithm}. The quadratic worst-case complexity is not caused by an inferior join algorithm but by the output size itself.
In short, even if we want to retrieve only $k$ join output tuples, the algorithm
has to insert $\O(n^2)$ output tuples into the heap: At any moment in time
(until the full output is known) the algorithm does not know if all of the
top-$k$ answers are already in the heap or if some of them will be emitted
by the join later.

We illustrate this with an example. 
Consider the inequality join in \Cref{fig:Inequality_Ex_Quadratic} with join condition $S.A < T.B$.
To efficiently find joining pairs, we can sort input relation $S$ on $A$ and $T$ on $B$. (Alternatively, one could use clustered B-tree indexes---one on $A$ for $S$ and the other on $B$ for $T$---to the same effect.)
This step indeed takes $\O(n \log n)$ time and it allows us to retrieve the joining pairs with a sort-merge type algorithm.
Using the sorted inputs, this algorithm can produce $k$ output tuples in time $\O(k)$. With $k$ upper bounded by some constant, say $k=3$, $k$ join answers can then indeed be retrieved in total time $\O(n \log n)$.

\begin{figure*}[h]
\begin{subfigure}[t]{.2\linewidth}
    \centering
    \includegraphics[height=5.2cm]{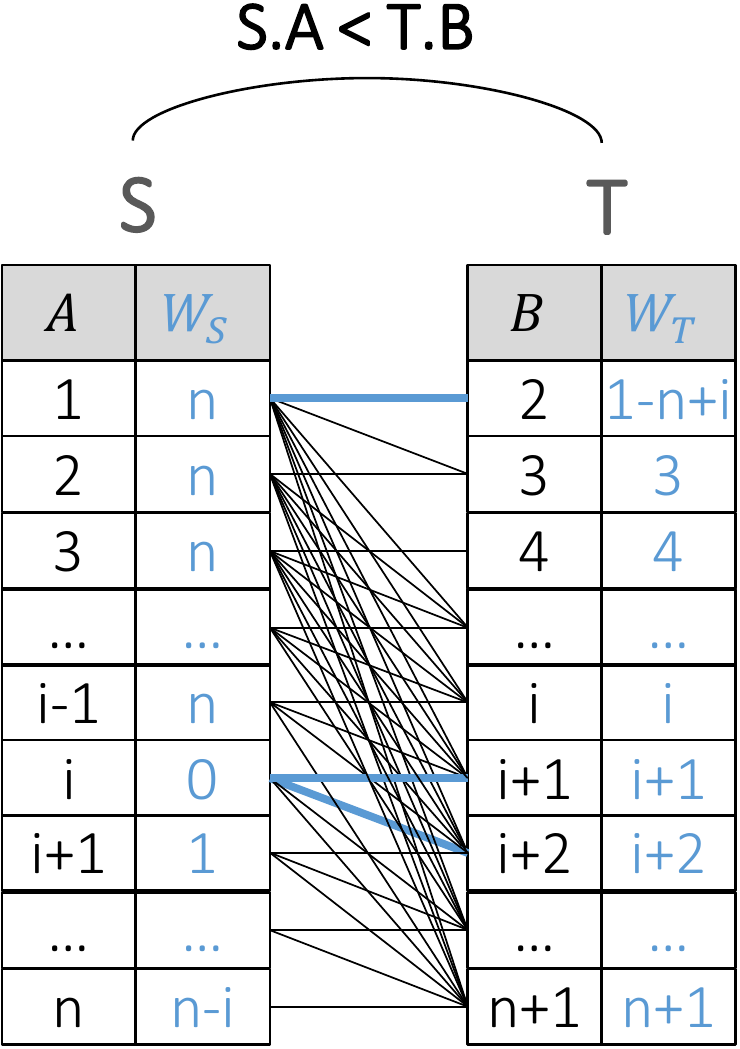}
    \caption{An example inequality-join.}
    \label{fig:Inequality_Ex_Quadratic}
\end{subfigure}%
\hfill
\begin{subfigure}[t]{.7\linewidth}
    \centering
    \includegraphics[height=5.2cm]{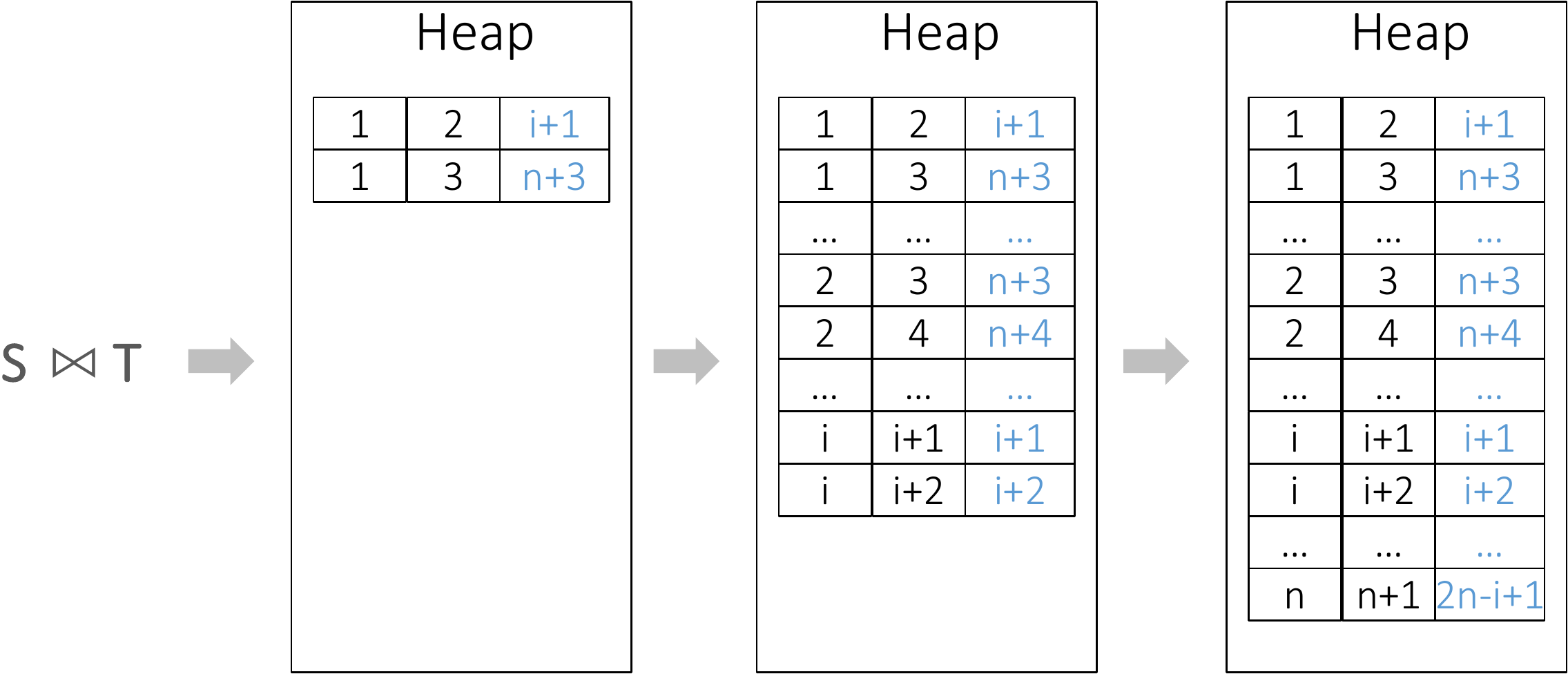}
    \caption{Steps of the computation. For readability, only heap content is shown,
    but the actual heap structure is omitted. Clearly, the heap data structure would
    organize the tuples in a better way.}
    \label{fig:Inequality_Ex_Heap}
\end{subfigure}%

\end{figure*}

While this works well if we want to get an \emph{arbitrary} set of $k$ result tuples,
\textbf{ranking makes the situation more challenging}.
To illustrate this, suppose in the example we want to find the top-$3$ join results
\textbf{according to the minimum sum of weights $W_S + W_T$}.
Notice that in general, tuple weights may or may not be correlated with join-attribute values.
In our example, we highlight the top-$3$ joining pairs $((1, \wtcl{n}), (2, \wtcl{1-n+i}))$,
$((i, \wtcl{0}), (i+1, \wtcl{i+1}))$, and $((i, \wtcl{0}), (i+2, \wtcl{i+2}))$ with blue edges,
where $i$ is some value $1 < i < n-1$.
Notice that even after sorting each relation by the join attributes, the algorithm
still does not know in which positions in each sorted relation
the winning $W_S + W_T$ combinations occur.
This means that as the join algorithm returns output tuples, the weight sum $W_S + W_T$
may go up or down between consecutive output tuples as
illustrated in \Cref{fig:Inequality_Ex_Heap}, where we show how the heap gradually fills up
with output tuples from the join.
We cannot determine the winners until all the $\O(n^2)$ join results have been inserted into the heap.
Even in the middle step where the top-$3$ results happen to be in the heap already,
\textbf{we cannot stop the join computation early because the algorithm does not
know if a not-yet-returned join output tuple could have a lower sum of weights}.
Only after all the join result tuples have been inserted into the heap
can the algorithm know for sure what the top-$k$ results based on weight $W_S + W_T$ are.
This implies that in order to find the top-$k$ results, even for a small value of $k$,
the algorithm must run the join until the end, i.e., consider all matching combinations
produced by the join.
No matter how efficient the join implementation or the heap data structure,
just looking once at each of the $\O(n^2)$ join output tuples already takes time
$\O(n^2)$---and this is the quadratic complexity we refer to.

One may look at the example and think ``couldn't we avoid having to look at the entire
join output by making join processing more aware of the weight attributes?''
And that is exactly what our algorithm does. 
The challenge is that when sorting the input
by $W_S$ and $W_T$, respectively, \textbf{the first
pairs of $S$ and $T$ tuples considered based on weight may not join at all.}
In our example, the lightest $S$-tuples are $(i, \wtcl{0}), (i+1, \wtcl{1})$,\ldots,
but unfortunately for larger values of $i$ they do not join with the 
lightest $T$-tuples $(2, \wtcl{1-n+1}), (3, \wtcl{3})$ etc.
Therefore, there is no guarantee that the winning pairs will be found in less than $\O(n^2)$ time when following the weight order on the input.
(This may seem ``not too bad'' for the specific example, but is a major concern
for more complex DNFs of inequalities and for joins of more than 2 relations.)

To summarize, there are $2$ non-trivial aspects of the problem:
($1$) determine which pairs of input tuples join with each other, and
($2$) rank the joining pairs by sum of weights or another given ranking function.
\textbf{No approach that we know of, including the sort-join-and-heap algorithm
can do both ($1$) and ($2$)---even for a join of only 2 relations---while guaranteeing
worst-case time complexity better than $\O(n^2)$}. This holds
even if one asks only for the $k$ top-ranked (by weight) results for some constant $k$.

The techniques proposed in our paper avoid that cost by \emph{joining and ranking simultaneously},
achieving end-to-end complexity of $\O(n \log n)$ for a 2-relation join with one inequality
or one band-join condition (and $\O(n \polylog n)$ for a general DNF of inequality conditions)
to retrieve the top-$k$ results.
Stated differently, it takes a non-trivial combination of both sorting by join attributes and
sorting by ranking function---and that is the core of our factorization approach.

\section{More Motivating Examples}

\begin{example}
Consider an ornithologist studying interactions between bird species
using a bird observation dataset
\texttt{B(Species, Family, ObsCount, Latitude, Longitude)}.
For her analysis, she decides to extract pairs of observations for birds of
different species from the same larger family that have been spotted
in the same region. Pairs with higher \texttt{ObsCount} should also appear
first:
\begin{lstlisting}
SELECT   *, B1.ObsCount + B2.ObsCount as Weight
FROM     B B1, B B2
WHERE    B1.Family = B2.Family  
  AND    ABS(B1.Latitude - B2.Latitude) < 1 
  AND    ABS(B1.Longitude - B2.Longitude) < 1 
  AND    B1.Species <> B2.Species    
ORDER BY Weight DESC LIMIT 1000
\end{lstlisting}
With $n$ denoting the number of tuples in $B$, no existing approach can guarantee
to return the top-$1000$ results in sub-quadratic time complexity $o(n^2)$.
In this paper, we show how to achieve $\O(n \log^3 n)$
even if the size of the output is $\O(n^2)$.
After returning the top-$1000$ answers, our approach is also capable of returning
more answers in order without having to restart the query.
The exponent of the logarithm is determined by the number of join predicates
that are not equalities (3 here).
Interestingly, this guarantee is not affected by the number of relations joined,
e.g., if we look for triplets of bird observations, because the complexity is
determined only by the \emph{pairwise} join with the most predicates
that are not equalities.
\end{example}

\end{document}